\theoremstyle{plain}
\newtheorem{assumption}{Assumption}
\newtheorem{note}{Note}
\newtheorem{example}{Example}
\crefname{hypothesis}{Hypothesis}{Hypotheses}
\title{Qualitative Behavior of a Metabolic Pathway with Hybrid Feedback\thanks{This article has been published in SIAM Journal on Applied Dynamical Systems (SIADS), \url{https://doi.org/10.1137/21M1451282}
\funding{This work was funded by Labex Mathématique Hadamard (LMH).}}}
\author{Claudia Lopez-Zazueta\thanks{Universite Paris-Saclay, INRAE, MaIAGE, Jouy-en-Josas, France 
  (\email{claudia.lopez-zazueta@inrae.fr}).}
\and Vincent Fromion \footnotemark[2]\thanks{Universite Paris-Saclay, INRAE, MaIAGE, Jouy-en-Josas, France 
  (\email{vincent.fromion@inrae.fr}).}}
\begin{document}

\maketitle

\begin{abstract}
We study the qualitative behavior of a model to represent local regulation in a metabolic network. The model is based on the end-product control structure introduced in [A. Goelzer, F. Bekkal Brikci, I. Martin-Verstraete et al., \emph{BMC Syst Biol} 2 (2008), pp. 1--18]
. In this class of regulation, the metabolite effector is the end-product of a metabolic pathway. We suppose the input to the pathway to switch between zero and a positive value according to the concentration of the metabolite effector. Considering the {\color{black}switching} system as a differential inclusion, we prove that it converges to a globally uniformly asymptotically stable equilibrium point, reaches the sliding mode or oscillates around the sliding mode depending on the positive value of the input. Finally, we show that in any case the solution of the {\color{black}switching} system is the limit of solutions of equation sequences with smooth or piecewise linear inputs. 
\end{abstract}

\begin{keywords}
Metabolic Pathway, Genetic regulation, Feedback control, Nonlinear system, Global Uniform Asymptotic Stability, Lyapunov function, Discontinuous system, Hybrid Model, Switching system, Differential inclusion, Sliding mode, Oscillations.
\end{keywords}

\begin{AMS}
	34A36,  
	34A38,  	
	34A60,  
	93B52,  	
	93D05,  	
	93D15,  	
	93D20,  	
	93D30,  	
	93C10,  	
	93C35.  	
\end{AMS}

\section{Introduction}\label{introduction}

Metabolic networks are an important part of cells and understanding how they operate is an important issue whether in the context of human health or biotechnology. A significant number of methods for analyzing metabolic networks focus on the analysis of their equilibrium regimes.  Among these methods, two emblematic and well-known in the study of metabolic networks are Flux Balance Analysis (FBA) \cite{orth2010flux} (see also Metabolic Flux Analysis \cite{varma1994metabolic}) and Metabolic Control Analysis (MCA) \cite{heinrich1975mathematical, kacser1973control, rapoport1974linear,  reder1988metabolic}.

From the point of view of dynamical systems, these methods of metabolic network analysis make the implicit assumption that the metabolic network not only has an unique equilibrium regime, but that it is stable. Given the high predictive power of some methods based on this assumption, it can be considered that it is empirically validated at least at the level of cell populations. 

While this assumption of the existence of equilibrium regimes is undoubtedly fundamental in a large number of methods, and very useful in practice, the identification of the conditions that guarantee it is nonetheless essential. However, beyond the theoretical aspect, the conditions that ensure its validity determine our ability to intervene on metabolic networks, whether in therapeutic perspectives or in the context of biotechnologies. Indeed, it is {\color{black}important} to know whether the hypothesis of quasi-stationarity is preserved when, for example, certain enzymes are inhibited by drugs in the case of medical treatments or when a new pathway is added to the network in the context of biotechnologies and synthetic biology.

More fundamentally, knowing under which assumptions the metabolic network satisfies the quasi-stationarity assumption is a determining element in the analysis of the genetic regulation of metabolic networks. The question is vast and has already been addressed many times in the literature under different assumptions \cite{baldazzi2012importance, kuntz2014model, waldherr2015dynamic}. Modeling metabolic networks coupled with gene expression has been a subject of active research during the last decade \cite{baldazzi2012importance, goelzer2011cell, kuntz2014model, lerman2012silico, liu2020regulatory, waldherr2015dynamic}. Yet {\color{black}persistent} problems in metabolic modeling are large scale of models \cite{anderson2011model, gerdtzen2004non, lopez2018analytical, lopez2019dynamical, radulescu2008robust}, nonlinear kinetics \cite{horn1972general, savageau1969biochemical, voit2015150} and stochasticity \cite{kim2014validity, rao2003stochastic}.

Time-scale separation and the Quasi Steady State Assumption (QSSA) have been proposed as useful approaches to reduce deterministic models of metabolic networks \cite{goeke2014constructive, lopez2018analytical, lopez2019dynamical}, as well as for stochastic models of biochemical reactions and genetic networks \cite{el2005stochastic, kim2014validity, rao2003stochastic}. 

Also, the reduction through time-scale separation and QSSA has been applied to deterministic models of metabolic-genetic networks \cite{baldazzi2012importance, kuntz2014model, waldherr2015dynamic}. The method consists of dividing the states in two groups: the fast species (metabolites) and the slow species (macromolecules, gene products). Then, a deterministic model can be reduced using techniques for singularly perturbed systems (e.g. the theorems of Tikhonov \cite{khalil2002differential, kokotovic1999singular, tikhonov1985differential} and Fenichel \cite{fenichel1979geometric, verhulst2007singular}). The solution of the reduced system approximates the solution of the original system if some conditions are satisfied. One of these is asymptotic stability for the fast part of the system when the slow species are assumed to be constant.

In particular, the question of the stability of metabolic pathways with negative feedback loops has been considered in the past \cite{allwright1977global, arcak2006diagonal, chaves2019dynamics, mees1978periodic, meslem2011lyapunov, meslem2010stability, tyson1978dynamics, wang2010conditions}. This leads some authors to characterize the stability properties of linear metabolic pathways transforming an initial substrate into a final product of interest through $n$ elementary enzymatic reactions and where the concentration of the last metabolite, i.e. the end-product, negatively modulates the activity of the first enzyme. In this context, the authors mainly tried to identify the conditions that ensure the stability (in the Lyapunov sense) of such linear pathways with negative feedback. The first studies have mainly investigated the stability properties of the linearization of the system associated with its equilibrium point. {\color{black} For example, Tyson and Othmer in \cite{tyson1978dynamics} studied the stability of a negative feedback system with linear and irreversible kinetics using a secant criterion. Arcak and Sontag in \cite{arcak2006diagonal} extended these results for nonlinear negative feedback systems with irreversible kinetics where, in connection to the small-gain theorem, a secant criterion equivalent to diagonal stability was also used.}

{\color{black} 
On the other hand, other works have addressed the oscillatory behavior of smooth negative feedback loops. For instance, Tyson and Othmer in \cite{tyson1978dynamics} have proved the existence of oscillatory and periodic solutions for a linear negative feedback system with irreversible reactions. 
Hasting et al. in \cite{hastings1977existence} have given a geometrical proof of the existence of a non-constant periodic solution for a continuous system of ordinary differential equations ({\color{black}ODEs}) of class $C^n$. The system represents a negative feedback loop with monotone and non-reversible reaction kinetics and the condition for the oscillations is given in terms of the eigenvalues of the Jacobian matrix at an equilibrium point. Mallet-Paret and Smith have proved in \cite{mallet1990poincare} the Poincare-Bendixson theorem for a monotone feedback loop of class $C^1$ satisfying a convexity condition. As well as in \cite{hastings1977existence}, their feedback loop does not account for reversible kinetics. Poignard et al. in \cite{poignard2016periodic} also consider a negative feedback loop but with decay rates and ODEs that are monotonic except in a narrow window around a threshold value. The existence of a periodic orbit for this system is proved circumscribing it by two piecewise linear systems. Then in \cite{poignard2018stability}, its uniqueness and asymptotic stability are proved under some symmetry assumptions on the parameters and assuming that all decay rates are equal.

Piecewise linear equations have been also proposed to model oscillations in biological control systems. Glass and Pasternack in \cite{glass1978stable} have given conditions to prove the existence of a stable limit cycle for a piecewise linear differential equation. These conditions have to be verified in the state transition diagram defined for piecewise linear systems and an algebraic computation allows to determine the existence of the stable limit cycle. Farcot and Gouzé in \cite{farcot2009periodic} have studied a piecewise linear equation describing a negative feedback loop with irreversible kinetics and non-identical decay rates. Using a fix point theorem, they have proved the existence and uniqueness of a stable periodic orbit in dimension 3 or more. Using a formulation with piecewise constant matrices, Quee and Edwards in \cite{quee2021ramp} have proved the existence of a non-constant periodic solution for piecewise affine system representing a negative feedback loop with non-reversible kinetics and non-identical decay rates. }

The purpose of this work is to present a generic model that represents allosteric regulation for a repressible enzyme, i.e., allosteric inhibition. For a repressible enzyme, the presence of the effector molecule enhances the binding of a repressor molecule to the operator gene that regulates the enzyme coding, and transcription is blocked \cite{jacob1961genetic, tyson1978dynamics}. The fraction of the operator region free of repressor corresponds to a monotone decreasing function with respect to the metabolite effector concentration. Moreover, these reactions occur quickly and are therefore at equilibrium \cite{tyson1978dynamics}. In the limit case, we can consider that the monotone decreasing function is a step function (see \Cref{repress}).

\begin{figure}[!]\centering
\includegraphics[width=7cm]{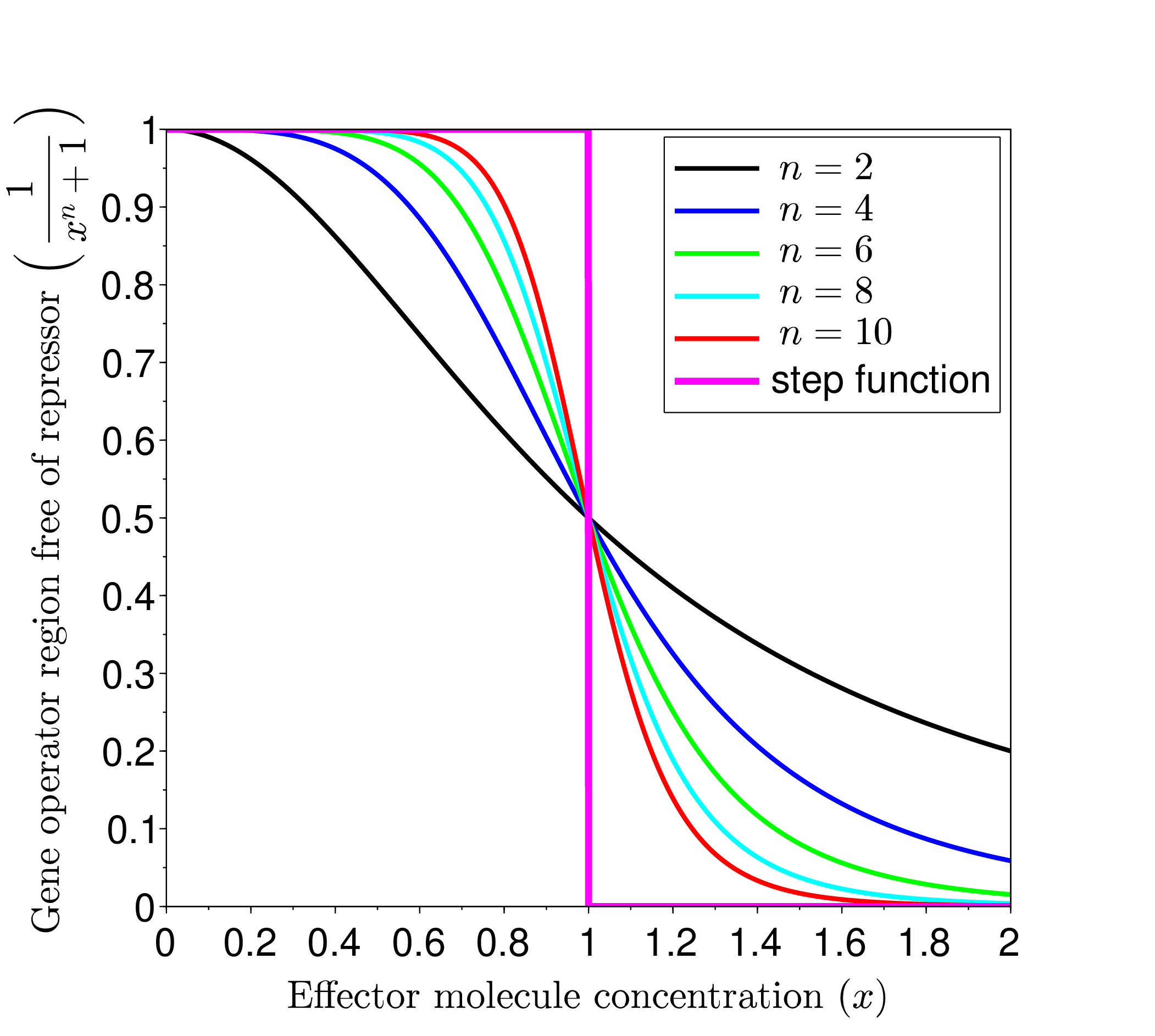}\\
\caption{In the case of a repressible enzyme, the fraction of the operator region free of repressor correspond to a monotone decreasing function. The reactions that block transcription occur fast, which in a limit case can be represented as a step function.}\label{repress}
\end{figure} 

For this purpose, we introduce a model that represents local regulation in a metabolic pathway and study its dynamical behavior. The model is based on the end-product control structure introduced in \cite{goelzer2008reconstruction} (see also \cite{goelzer2014towards}). In this class of local regulation, the metabolite effector is the end product of the pathway and enzyme synthesis of the pathway is induced when the concentration of the effector decreases. In a first stage, we consider that enzyme concentrations remain constant with the purpose of using a slow-fast system approach. 

Indeed, the experiments indicate that quasi-steady state exist{\color{black}s} at the scale of a cell population. However, to assess the stability of metabolic systems is challenging due to the large scale of models and their non-linear dynamics. Even if it is classic in the engineering context to consider that stability is an expected or even necessary property, other behaviors of the system may also be acceptable from the standpoint of the functioning of a dynamic system, as for example the fact that the system oscillates about an average value. Besides, multistability and oscillatory dynamics can emerge in metabolic pathways under gene regulation \cite{el2005stochastic, fung2005synthetic, wang2010conditions, oyarzun2012multistability, chaves2019dynamics}. 

In the context of cells, where the stochasticity is very large, fluctuations due to these oscillations would not necessarily be a problem.  In this paper, we thus take the counterpart of the classic approach by considering that the negative feedback is high. We will obtain {\color{black}conditions ensuring the existence of an oscillatory regime} by considering a limit case, i.e. by assuming that the feedback is an ON/OFF type mechanism.

In \Cref{sec_model}, we describe the model of local regulation. In order to represent the stiffness of allosteric regulation, we consider a deterministic model of a linear metabolic pathway with an input-feedback that switches between two modes (ON and OFF) according to the concentration of the metabolite effector. The fluxes among the metabolites are generalized so that kinetics can be nonlinear, only respecting some monotonicity conditions.

In \Cref{sec_hybrid}, we define a differential inclusion for the {\color{black}switching} system following the theory of Filippov for discontinuous systems \cite{filippov1988differential}. We prove that, under some conditions, the solution of the differential inclusion can converge uniformly and asymptotically to an equilibrium point, remain in equilibrium at the sliding mode or oscillate around the sliding mode. 

In \Cref{section_rMM}, we present an example of a metabolic pathway with Michaelis-Menten reversible reactions. Finally, in \Cref{sec_continuous}, we show that the solution of the {\color{black}switching} system is the limit of function sequences with smooth or piecewise linear inputs that tend towards an ON / OFF type mechanism.

\section{Model of local regulation}\label{sec_model}

In this Section we introduce the feedback model studied through the text, which is based on the end-product control structure proposed in \cite{goelzer2008reconstruction} (see also \cite{goelzer2014towards}). The model corresponds to a metabolic pathway where the end product is the metabolite effector. 

We consider the model as a slow-fast system, where metabolites are the fast species and enzymes the slow species. In order to analyze the stability conditions for the fast part of the system according to Tikhonov's Theorem \cite{khalil2002differential, kokotovic1999singular, tikhonov1985differential, verhulst2007singular}, we assume that enzyme concentrations are constant. 

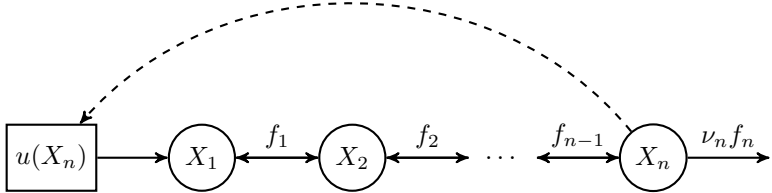
\begin{figure}[!]
\centering
\begin{tikzpicture}[->,>=stealth',shorten >=0pt,auto,node distance=2cm,thick]
\tikzstyle{every state}=[rectangle, fill=white,draw=black,text=black]
 	\node[state]		(u)							{$u(X_n)$};                   
\tikzstyle{every state}=[fill=white,draw=black,text=black]
 	\node[state]		(A)		[right of=u]			{$X_1$};                            
  	\node[state]		(B)		[right of=A]			{$X_2$};
  	\tikzstyle{every state}=[fill=white,draw=white,text=black]
 	\node[state]		(C)		[right of=B]			{$\dots$};
 	\tikzstyle{every state}=[fill=white,draw=black,text=black]
  	\node[state]		(D)		[right of=C]			{$X_n$};
\tikzstyle{every state}=[rectangle, fill=white,draw=white,text=white]
 	\node[state]		(out)			[right of=D]				{};       
\path
(u)			edge			[]			node	{}											(A)
(A)			edge			[]			node	{$f_1$}								(B)
(B)			edge			[]			node	{}											(A)
(B)			edge			[]			node	{$f_2$}								(C)
(C)			edge			[]			node	{}											(B)
(C)			edge			[]			node {$f_{n-1}$}						(D)
(D)			edge			[]			node {}											(C)
(D)			edge			[dashed, bend right=50*+1]		node {}					(u)
(D)			edge			[]			node {$\nu_nf_n$}										(out)	;
\end{tikzpicture}
\caption{End-product control structure.}\label{meta_path}
\end{figure} 

The ODE describing the concentration of metabolites in the pathway of \Cref{meta_path} is
\begin{align}\label{ode_meta_path}
\frac{dX_1}{dt}&=u(X_n)-f_1(X_1,X_2)-\mu\cdot X_1\\\notag
\frac{dX_2}{dt}&=f_1(X_1,X_2)-f_2(X_2,X_3)-\mu\cdot X_2\\\notag
\vdots\\\notag
\frac{dX_n}{dt}&=f_{n-1}(X_{n-1},X_n)-\nu_nf_n(X_n)-\mu\cdot X_n\notag,
\end{align}where $u\colon\mathbb R^n\to[0,\infty)$ is the input function, $\mu\geq0$ the growth rate {\color{black}(or decay rate)} and $\nu_n\geq0$ an output rate. In this case we consider that allosteric regulation acts on the input flux of the pathway, which can be interpreted as the regulation of the enzyme activity leading to the input by the metabolite effector $X_n$ (the end-product of the pathway).

In order to simplify the notation we define
\begin{align*}
f(u,X,\mu,\nu_n):=\begin{pmatrix}
u(X_n)-f_1(X_1,X_2)-\mu\cdot X_1\\
f_1(X_1,X_2)-f_2(X_2,X_3)-\mu\cdot X_2\\
\vdots\\
f_{n-1}(X_{n-1},X_n)-\nu_nf_n(X_n)-\mu\cdot X_n
\end{pmatrix}.
\end{align*}Then, \Cref{ode_meta_path} can be rewritten as
\begin{align*}
\frac{dX}{dt}=f(u,X,\mu,\nu_n).
\end{align*}

We assume that kinetics of the metabolic pathway in \Cref{meta_path} can be nonlinear, but they respect some monotonicity conditions that are established in Assumption \ref{assumption1}. The monotonicity condition implying that functions $f_i$ are strictly increasing with respect to the first entry assures the flux from the input to the end-product of the pathway. On the other hand, we consider that reactions can be reversible, but this is not imposed as a condition. Therefore, functions $f_i$ are supposed to be decreasing (but not strictly) with respect to the second entry.

\begin{assumption}\label{assumption1}
For every $i=1,\dots,n-1$ assume 
\begin{itemize}
\item[(i)] $f_i:[0,\infty)\times[0,\infty)\to\mathbb R$ and $f_n(X_n):[0,\infty)\to\mathbb R$ are continuous functions,
\item[(ii)]$f_i(X_i,X_{i+1})$ is strictly increasing w.r.t. $X_i$ and $f_n(X_n)$ is strictly increasing w.r.t. $X_n$, i.e. 
\begin{align*}
f_i(X_i,X_{i+1})&<f_i(X_i',X_{i+1})&\forall X_i<X_i',\\
&& \forall X_{i+1}\in [0,\infty),\\
f_n(X_n)&<f_n(X_n')&\forall X_n<X_n',
\end{align*}
\item[(iii)] $f_i(X_i,X_{i+1})$ is decreasing w.r.t. $X_{i+1}$, i.e.
\begin{align*}
f_i(X_i,X_{i+1})&\leq f_i(X_i,X'_{i+1})&\forall X'_{i+1}<X_{i+1},\\
&& \forall X_i\in [0,\infty),
\end{align*}
\item[(iv)] $f_i(0,X_{i+1})\leq0$ for all $X_{i+1}\in [0,\infty)$, 
\item[(v)]$f_i(X_i,0)\geq0$ for all $X_i\in [0,\infty)$ and
\item[(vi)] and $f_n(0)=0$.
\end{itemize}
\end{assumption}

\section{Hybrid model}\label{sec_hybrid}

We consider the limit case where allosteric regulation can be represented by a step function, which can be understood as the induction or repression of the enzyme activity (see \Cref{repress}). For this purpose, we assume the input of system \eqref{ode_meta_path} to be a step function that depends on the concentration of $X_n$ the metabolite effector with respect to $K>0$ a threshold value:
\begin{align}\label{input}
u[X_n]:=\begin{cases}
k_1 & \text{ if } X_n< K \\
0 & \text{ if } K<X_n
\end{cases}.
\end{align}Through the text, we refer to $k_1$ as the constant input ON and $0$ as the constant input OFF.

Hence, we suppose that the enzyme activity leading to the input is induced if the concentration of the metabolite effector is under a threshold $K$ (i.e. $X_n<K$), and it is repressed when the concentration of the metabolite effector is over the threshold (i.e. $K<X_n$).

\subsection{Oscillatory system}

Switching the input allows to keep the concentration of the metabolite effector $X_n$ as close as possible to the threshold $K$. Indeed, the input is OFF when the metabolite effector concentration exceeds the threshold, which allows a decrease of the flux pathway and of $X_n$ concentration consequently. Reciprocally, the input is ON if the metabolite effector concentration is under the threshold, which leads to an increment of the flux pathway and $X_n$ concentration.  

The next \Cref{proposition1} states that, if the constant input ON is enough large and the sliding mode is not attained, then the switching leads to an oscillatory behavior in all the states of the system. {\color{black}In \cite{wang2010conditions}, the same characteristic was observed for linear monotone tridiagonal systems with nonlinear negative feedback. For the example presented in equation (21) of \cite{wang2010conditions}, they have shown that the equilibrium of the system does not satisfy their stability conditions when the effect of the input is lower than a threshold. Then, numerical simulations have put in evidence oscillations in that system with a large value input.}

\begin{theorem}\label{proposition1}
Under Assumption \ref{assumption1}, consider the system
\begin{align}\label{end_product_feedback}
\frac{dX}{dt}&=f(u[X_n],X,\mu,\nu_n)
\end{align}with initial conditions $X_i(t_0)\geq0$, $i=1,2,\dots,n$, $K>0$, $k_1>0$, $\mu\geq0$, $\nu_n\geq0$, $\mu+\nu_n>0$ and the input $u[X_n]$ defined in \eqref{input}. 

Suppose that there exist positive values $Y_1^*,Y_2^*,\dots,Y_{n-1}^*$ such that 
\begin{align*}
0&=f_1(Y_1^*,Y_2^*)-f_2(Y_2^*,Y_3^*)-\mu\cdot Y_2^*\\\notag
0&=f_2(Y_2^*,Y_3^*)-f_3(Y_3^*,Y_4^*)-\mu\cdot Y_3^*\\\notag
\vdots\\\notag
0&=f_{n-1}(Y_{n-1}^*,K)-\nu_nf_n(K)-\mu\cdot K\notag
\end{align*}
and that 
\begin{align}\label{hyp_alpha}
\sum_{i=1}^{n-1}\mu\cdot Y_i^* + \mu\cdot K + \nu_nf_n(K)<k_1.
\end{align}

Then, there is an absolutely continuous function $X=(X_1,X_2,\dots,X_n)$ that satisfies equation \eqref{end_product_feedback} for a.e. $t\in[t_0,\infty)$ and right uniqueness holds in $[t_0,\infty)\times[0,\infty)^n$. 

Furthermore, if there is $t^*\geq t_0$ such that $X_i(t^*)=Y_i^*$ for every $i=1,2,\dots,n-1$ and $X_n(t^*)=K$, then the {\color{black}switching} system \eqref{end_product_feedback} remains at sliding mode, i.e., for all $t\geq t^*$, $X_i(t)=Y_i^*$ for every $i=1,2,\dots,n-1$ and $X_n(t)=K$.

Otherwise, the {\color{black}switching} system \eqref{end_product_feedback} oscillates around $(Y_1^*,Y_2^*,\dots,Y_{n-1}^*,K)$. In other words, $X_i$ has an oscillatory trajectory that takes the value $Y_i^*$ infinitely many times for every $i=1,2,\dots,n-1$ and $X_n$ has an oscillatory trajectory that takes the value $K$ infinitely many times.
\end{theorem}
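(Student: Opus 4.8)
The plan is to treat \eqref{end_product_feedback} as a Filippov differential inclusion across the single discontinuity surface $S := \{X \in [0,\infty)^n : X_n = K\}$. Since $u[X_n]$ enters only the first component of $f$, the two one-sided vector fields differ solely in their first coordinate, $f(k_1,X,\mu,\nu_n) - f(0,X,\mu,\nu_n) = (k_1,0,\dots,0)^{\top}$, so the Filippov regularization replaces the input on $S$ by the whole interval $[0,k_1]$. First I would record the standing structural facts. The set-valued right-hand side is upper semicontinuous with nonempty compact convex values, so Filippov's existence theorem yields an absolutely continuous solution on $[t_0,\infty)$; by Assumption \ref{assumption1}(iv)--(vi) the field points inward (or is tangent) on each face $\{X_i = 0\}$ — for instance $\dot X_1|_{X_1=0} = u - f_1(0,X_2) \geq 0$ — so $[0,\infty)^n$ is positively invariant and solutions are bounded. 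Right uniqueness I would obtain from Filippov's criterion: off $S$ the quasimonotone (cooperative tridiagonal) structure supplies a one-sided Lipschitz estimate, while on $S$ the jump is tangential and the common normal component $\dot X_n = f_{n-1}(X_{n-1},K) - \nu_n f_n(K) - \mu K$ never points away from $S$ on both sides simultaneously, so only transversal crossing or the uniquely defined sliding solution occurs.

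The decisive algebraic observation is that summing the $n-1$ defining equations for $Y_1^*,\dots,Y_{n-1}^*$ telescopes the internal fluxes and yields $f_1(Y_1^*,Y_2^*) + \mu Y_1^* = \nu_n f_n(K) + \mu\sum_{i=1}^{n-1} Y_i^* + \mu K$, so hypothesis \eqref{hyp_alpha} says precisely that $u^* := f_1(Y_1^*,Y_2^*) + \mu Y_1^* \in (0,k_1)$. Hence $P^* := (Y_1^*,\dots,Y_{n-1}^*,K)$ is the unique equilibrium of the inclusion, realized by the interior sliding value $u^*$; if $X(t^*) = P^*$, choosing the constant Filippov selection $u \equiv u^*$ keeps $\dot X \equiv 0$, and right uniqueness forces $X(t) \equiv P^*$ for $t \geq t^*$. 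This disposes of the sliding-mode case.

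For the oscillatory case I would first show that $X_n$ meets $K$ infinitely often. Suppose instead $X_n < K$ for all large $t$; then the input is eventually the constant $k_1$, and by the global asymptotic stability of the constant-input (ON) pathway — a companion convergence result for the non-switched cooperative system — $X$ tends to the ON-equilibrium $\bar X$. Because the steady-state end-product depends strictly monotonically on the input and equals $K$ at input $u^* < k_1$, we get $\bar X_n > K$, contradicting $X_n < K$. The symmetric argument (input eventually $0$, convergence to the OFF-equilibrium whose end-product lies below $K$) rules out $X_n > K$ forever. Thus $X_n$ crosses $K$ transversally, upward and downward, infinitely often. On $S$ the sign of $\dot X_n$ equals the sign of $X_{n-1} - Y_{n-1}^*$, since $f_{n-1}(Y_{n-1}^*,K) = \nu_n f_n(K)+\mu K$ and $f_{n-1}$ is strictly increasing in its first slot; hence at upward crossings $X_{n-1} > Y_{n-1}^*$ and at downward crossings $X_{n-1} < Y_{n-1}^*$, and the intermediate value theorem applied to the continuous $X_{n-1}$ gives $X_{n-1} = Y_{n-1}^*$ infinitely often.

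The main obstacle is propagating the oscillation up the remaining chain, from $X_{n-1}$ back to $X_1$. The clean surface argument does not iterate directly, because at a crossing $X_{i+1} = Y_{i+1}^*$ the sign of $\dot X_{i+1}$ is governed by \emph{both} $X_i - Y_i^*$ and $X_{i+2} - Y_{i+2}^*$, so neither neighbour can be isolated. My plan is a downward induction from $i=n-1$ to $i=1$ run by contradiction: let $j$ be the largest index whose coordinate fails to reach $Y_j^*$ infinitely often, so $X_j - Y_j^*$ has eventually constant sign while $X_{j+1},\dots,X_n$ still oscillate about their targets. Exploiting the cooperative (quasimonotone) structure — each $\dot X_m$ is nondecreasing in its neighbours $X_{m-1},X_{m+1}$ — together with the established oscillation of the downstream variables, I would show that a persistent sign of $X_j-Y_j^*$ forces, via monotone comparison with the two constant-input flows, a persistent one-sided bias that propagates down the chain and eventually pins $X_n$ to one side of $K$, contradicting the infinitely many crossings proved above. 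Making this comparison rigorous — controlling the coupling to $X_{j+2}$ and converting ``eventually one-sided'' into a genuine contradiction with the end-product oscillation — is the step I expect to require the most care.
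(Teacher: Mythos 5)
Your overall framework (the Filippov inclusion across $\{X_n=K\}$, invariance of the orthant, the telescoping identity showing the sliding input is $\alpha k_1=\sum_{i=1}^{n-1}\mu Y_i^*+\mu K+\nu_n f_n(K)\in(0,k_1)$, and the disposal of the sliding-mode case) matches the paper's. But there is a genuine gap at the decisive step where you show that $X_n$ meets $K$ infinitely often: you assume the ON system $\frac{d\widetilde X}{dt}=f(k_1,\widetilde X,\mu,\nu_n)$ has a globally attracting equilibrium $\bar X$ with $\bar X_n>K$, and derive a contradiction from convergence to it. Hypothesis \eqref{hyp_alpha} does not give you such an equilibrium: it asserts an equilibrium only for the sliding input $\alpha k_1<k_1$, and under Assumption \ref{assumption1} the fluxes $f_i$ may saturate, so for large $k_1$ the ON system has no equilibrium at all. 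This is not a pathological corner case: the paper's \Cref{fig:osc2} (reversible Michaelis--Menten, $\mu=0$, $k_1>\nu_n$) is exactly such a situation, and large $k_1$ is precisely the regime where \eqref{hyp_alpha} holds and oscillation is expected. The paper closes this hole with \Cref{lemma9}: there exists an intermediate input $\mathbf I\in(\alpha k_1,k_1)$ whose system \eqref{sys_inter} does have an equilibrium $\mathcal X^*$ with $\mathcal X_i^*>Y_i^*$ for all $i$, in particular $\mathcal X_n^*>K$; that system is GUAS by \Cref{lemma1}, and the comparison \Cref{lemma5} (larger constant input gives a larger trajectory) transfers the conclusion to the ON dynamics: while the switched system runs in ON mode it dominates the trajectory of \eqref{sys_inter}, whose $n$-th component rises above $K$, so a switch occurs in finite time. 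Your argument becomes correct if you replace ``convergence to the ON equilibrium'' by this intermediate-input comparison; as written, the contradiction step simply fails whenever the ON equilibrium does not exist. (The downward direction, via the OFF system converging to $\bar 0$, is fine as you wrote it.)

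Two smaller points. First, your right-uniqueness sketch leans on a ``one-sided Lipschitz estimate,'' but Assumption \ref{assumption1} provides only continuity and monotonicity of the $f_i$; the paper instead settles uniqueness at the switching surface through the sign analysis of \Cref{lemma11} and \Cref{lemma12} (just after a time with $X_n=K$, the ON and OFF fields force the same sign of $\dot X_n$ unless the state is the sliding equilibrium, so only one mode is consistent with the inclusion). Second, you correctly identify the propagation of the oscillation to the upstream coordinates ($X_j$ about $Y_j^*$ for $j<n-1$) as the hard step, but you leave it as a plan; the paper's own proof is admittedly also only a sketch here (an induction on the index using the switching structure), so this is a shared weakness rather than a divergence---still, your proposal as submitted does not prove that part of the statement, and your surface/intermediate-value argument, while clean for $X_{n-1}$, does not by itself iterate to $j<n-1$ for the reason you state.
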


The proof of \Cref{proposition1} is in \Cref{sec:existence}. In the next subsection we present some results necessary to it. 

{\color{black}
\begin{note}
To explain the key condition \eqref{hyp_alpha}, suppose that there exists a constant value $\alpha\cdot k_1$ such that the system with constant input
$$\frac{dX}{dt}=f(\alpha\cdot k_1,X,\mu,\nu_n)$$
has an equilibrium point $(Y_1^*, Y_2^*,\dots, Y_n^*)$ with $Y_n^*=K$. Then, by the mass-action kinetics, we have
$$\alpha\cdot k_1=\sum_{i=1}^{n-1}\mu\cdot Y_i^*+\mu\cdot K+\nu_nf_n(K).$$This implies by condition \eqref{hyp_alpha} that
$$\alpha\cdot k_1<k_1.$$In other words, that the input necessary to stabilize $X_n$ at $K$ is less than $k_1$. In this sense, to have a ``large'' input $k_1$ implies that the {\color{black}switching} system oscillates around $(Y_1^*, Y_2^*,\dots, Y_{n-1}^*,K)$, as an alternative to the stabilization.
\end{note}

\begin{note}
The sliding mode equilibrium has also been called singular equilibrium in the literature \cite{casey2006piecewise, duncan2021equilibria, mestl1995mathematical, plahte1994global}. In \cite{casey2006piecewise}, an approach to model genetic regulatory networks has been proposed using equations described by piecewise linear functions and differential inclusions. It differs from our approach since we consider hybrid systems with ODEs described by continuous {\color{black}monotonic} functions. Moreover, in \cite{casey2006piecewise} it is only given the characterization for the stability of singular equilibria, while in this paper the existence of oscillatory regimes is also rigorously proved, which is an important feature for metabolic pathways under genetic regulation. In \cite{duncan2021equilibria}, the stability of cyclic feedback networks with sigmoidal and irreversible kinetics has been studied by means of their Jacobian matrices. Feedback systems are not analyzed in \cite{mestl1995mathematical, plahte1994global}, but results on the relationship of stable and periodic solutions of logoid systems and piecewise linear equations are presented in \cite{plahte1994global} and conditions necessary for the stability of steady states are obtained from a Logoid-Jacobian matrix for gene regulatory networks in \cite{mestl1995mathematical}.
\end{note}}

\subsection{Systems with constant inputs}
To prove \Cref{proposition1}, it is useful to analyze the metabolic pathway system \eqref{ode_meta_path} when the input is a constant function. In this section, we introduce several lemmas for systems with constant inputs that are used in the proof of \Cref{proposition1}. The proofs of all Lemmas are in \Cref{supp_mat}.

\begin{definition}
We say that a vector $X=(X_1,X_2,\dots,X_n)\in\mathbb R^n$ is nonnegative if $X_i\geq0$ for all $i=1,2,\dots,n.$

{\color{black}Similarly}, a vector $X=(X_1,X_2,\dots,X_n)\in\mathbb R^n$ is positive if $X_i>0$ for all $i=1,2,\dots,n.$
\end{definition}

The next lemma states that, in case of having a nonnegative constant input, system \eqref{ode_meta_path} is positively invariant. Moreover, if the system has a nonnegative equilibrium point, this delimits the boundary of some invariant regions.

\begin{lemma}\label{lemma4}

Under Assumption \ref{assumption1}, consider the system
\begin{align*}
\frac{dX}{dt}=f(\mathbf I,X,\mu,\nu_n),
\end{align*}where $\mathbf I$ indicates a constant input function with value $\mathbf I\geq0$, $\nu_n\geq0$ and $\mu\geq0$. Then,
\begin{align*}
\Omega_1&:=[0,\infty)\times[0,\infty)\times\dots\times[0,\infty)
\end{align*}is positively invariant under the flux $X$.

Moreover, suppose that the system above has a nonnegative equilibrium point $X^*:=(X_1^*,X_2^*,\dots,X_n^*)$. Then, the subsets
\begin{align*}
\Omega_2&:=[0,X_1^*]\times[0,X_2^*]\times\dots\times[0,X_n^*],\\
\Omega_3&:=[X_1^*,\infty)\times[X_2^*,\infty)\times\dots\times[X_n^*,\infty),\\
\Omega_4&:=\{X_1^*\}\times\{X_2^*\}\times\dots\times\{X_n^*\},
\end{align*}are positively invariant under the flux $X$. 
\end{lemma}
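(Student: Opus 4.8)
The plan is to prove each invariance statement by a standard boundary (subtangentiality) argument: a closed set defined by coordinate inequalities is flow-invariant for $\dot X=f(\mathbf I,X,\mu,\nu_n)$ provided that, on every face where one coordinate attains its prescribed bound, the corresponding component of the vector field points into the set (Nagumo's flow-invariance criterion). Since all the regions here are boxes, this reduces to checking the sign of $dX_i/dt$ on each face $\{X_i=c\}$, and the monotonicity hypotheses of Assumption \ref{assumption1} together with the equilibrium relations will supply exactly the signs required.

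For $\Omega_1=[0,\infty)^n$ the only faces are $\{X_i=0\}$, and invariance follows if $dX_i/dt\ge0$ there. On $\{X_1=0\}$ one has $dX_1/dt=\mathbf I-f_1(0,X_2)\ge0$ by (iv) and $\mathbf I\ge0$; on an interior face $\{X_i=0\}$, $2\le i\le n-1$, $dX_i/dt=f_{i-1}(X_{i-1},0)-f_i(0,X_{i+1})\ge0$ by (v) and (iv); and on $\{X_n=0\}$, $dX_n/dt=f_{n-1}(X_{n-1},0)-\nu_n f_n(0)=f_{n-1}(X_{n-1},0)\ge0$ by (v) and (vi). This is the core of the first claim.

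For $\Omega_2=\prod_i[0,X_i^*]$ the lower faces $\{X_i=0\}$ are handled exactly as for $\Omega_1$, so it remains to treat the upper faces $\{X_i=X_i^*\}$, where I want $dX_i/dt\le0$. The key step is to subtract the defining equations of the equilibrium $X^*$. For $2\le i\le n-1$ this yields, on $\{X_i=X_i^*\}$,
\begin{align*}
\frac{dX_i}{dt}=\big[f_{i-1}(X_{i-1},X_i^*)-f_{i-1}(X_{i-1}^*,X_i^*)\big]-\big[f_i(X_i^*,X_{i+1})-f_i(X_i^*,X_{i+1}^*)\big],
\end{align*}
and since in $\Omega_2$ we have $X_{i-1}\le X_{i-1}^*$ and $X_{i+1}\le X_{i+1}^*$, the first bracket is $\le0$ by the monotonicity in the first argument (ii) and the second bracket is $\ge0$ by the monotonicity in the second argument (iii), so $dX_i/dt\le0$. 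The endpoints $i=1$ and $i=n$ are analogous, using the input and output terms in place of the missing neighbor. For $\Omega_3=\prod_i[X_i^*,\infty)$ the roles reverse: only the lower faces $\{X_i=X_i^*\}$ bound the set, the inequalities $X_{i-1}\ge X_{i-1}^*$, $X_{i+1}\ge X_{i+1}^*$ hold, and the same telescoped expression is now $\ge0$. Finally $\Omega_4=\Omega_2\cap\Omega_3$, so its invariance is immediate once $\Omega_2$ and $\Omega_3$ are invariant.

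The main obstacle is not the sign computations but the passage from these pointwise boundary conditions to genuine invariance under the sole continuity assumption (i): Nagumo's criterion applies to continuous fields, but care is needed at the edges and corners of the boxes, where several coordinates are pinned simultaneously. There the tangent cone is the intersection of the half-space tangent cones, so one must verify that the vector field lies in this intersection; this is exactly what the face-by-face estimates provide, because each inequality $dX_i/dt\le0$ (resp.\ $\ge0$) was derived assuming only that $X_i$ equals its bound, with the remaining coordinates free inside the box. I would therefore state the face conditions uniformly over all sets of simultaneously active constraints and then invoke Nagumo's theorem (or a direct first-exit-time argument) to close the proof.
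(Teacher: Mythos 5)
Your proposal is correct and takes essentially the same route as the paper: the paper's entire proof is the one-line remark that, by the monotonicity conditions of Assumption~\ref{assumption1}, the vector field points into each $\Omega_j$ on its boundary, which is exactly what your face-by-face sign computations (using conditions (ii)--(vi) and the subtracted equilibrium relations) establish in detail. Your explicit treatment of corners via tangent cones and Nagumo's criterion simply makes rigorous what the paper leaves implicit.
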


The next proposition claims that if system \eqref{ode_meta_path}  with a nonnegative constant input has a nonnegative equilibrium point, then, this is globally uniformly asymptotically stable. The proof is divided in two cases. In the case when $\mu>0$, the proof consists on defining a Lyapunov function that is bounded by a positive definite function. Then, using an extension of the LaSalle invariance principle \cite{orlov2009discontinuous}, the result is concluded. 

In the other case, when $\mu=0$ and $\nu_n>0$, the proof follows the ideas of the particular case with Michaelis-Menten kinetics presented in Proposition 8 of \cite{ndiaye2013global}. Using that the Jacobian is a column diagonally dominant matrix due to the monotonicity conditions of Assumption \ref{assumption1}, it is proved that the equilibrium point is globally attractive and also locally asymptotically stable (see Appendix \ref{proof_gouze}). 

\begin{proposition}\label{lemma1}
Under Assumption \ref{assumption1}, consider the system
\begin{align}\label{input_on_I}
\frac{dX}{dt}=f(\mathbf I,X,\mu,\nu_n),
\end{align}with initial conditions $X_i(t_0)\geq 0$ for every $i=1,2,\dots,n$, the constant input $\mathbf I\geq0$, $\mu\geq0$, $\nu_n\geq0$ and $0<\mu+\nu_n$. 

If system \eqref{input_on_I} has a nonnegative equilibrium point $X^*:=(X_1^*,X_2^*,\dots,X_n^*)$, then $X^*$ is globally uniformly asymptotically stable (GUAS).
\end{proposition}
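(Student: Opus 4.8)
The plan is to prove global uniform asymptotic stability of the given equilibrium $X^*$ by exhibiting a single, explicit non-smooth ($L^1$-type) Lyapunov function adapted to the monotone cascade structure, and then to split the argument according to whether the decay it produces is strict. Writing $Z_i:=X_i-X_i^*$ and introducing the flux differences $g_i:=f_i(X_i,X_{i+1})-f_i(X_i^*,X_{i+1}^*)$ for $i=1,\dots,n-1$ together with $g_n:=f_n(X_n)-f_n(X_n^*)$, I would first use that $X^*$ is an equilibrium of \eqref{input_on_I} (so every component of $f(\mathbf I,X^*,\mu,\nu_n)$ vanishes) to rewrite the dynamics as $\dot Z_1=-g_1-\mu Z_1$, $\dot Z_i=g_{i-1}-g_i-\mu Z_i$ for $2\le i\le n-1$, and $\dot Z_n=g_{n-1}-\nu_n g_n-\mu Z_n$.

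Next I would take $V(X):=\sum_{i=1}^n|X_i-X_i^*|$, which is locally Lipschitz, positive definite with respect to $X^*$ and radially unbounded. Since $V(X(t))$ is absolutely continuous and, for a.e.\ $t$, the derivative of $|Z_i|$ equals $\mathrm{sgn}(Z_i)\dot Z_i$ (the level set $\{Z_i=0\}$ contributing nothing, because an absolutely continuous function has a.e.\ vanishing derivative on any of its level sets), the flux terms telescope to give, for a.e.\ $t$,
\begin{align*}
\dot V=\sum_{i=1}^{n-1}\bigl(\mathrm{sgn}(Z_{i+1})-\mathrm{sgn}(Z_i)\bigr)g_i-\nu_n\,\mathrm{sgn}(Z_n)\,g_n-\mu V .
\end{align*}
The heart of the argument is that every summand is nonpositive. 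For the last term, strict monotonicity of $f_n$ (\Cref{assumption1}(ii)) makes $g_n$ have the sign of $Z_n$, so $\mathrm{sgn}(Z_n)g_n\ge0$. For a term with $i\le n-1$, I would split $g_i=[f_i(X_i,X_{i+1})-f_i(X_i^*,X_{i+1})]+[f_i(X_i^*,X_{i+1})-f_i(X_i^*,X_{i+1}^*)]$: by \Cref{assumption1}(ii) the first bracket carries the sign of $Z_i$, and by \Cref{assumption1}(iii) the second carries the opposite sign to $Z_{i+1}$; a short case check over the signs of $Z_i,Z_{i+1}$ (including the cases where one vanishes) then shows $\bigl(\mathrm{sgn}(Z_{i+1})-\mathrm{sgn}(Z_i)\bigr)g_i\le0$ throughout. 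Hence $\dot V\le-\mu V\le0$, and by \Cref{lemma4} the trajectory stays in $\Omega_1$ and, $V$ being radially unbounded and nonincreasing, remains bounded and thus defined for all $t\ge t_0$.

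From here the two regimes separate. If $\mu>0$, the differential inequality $\dot V\le-\mu V$ integrates to $V(X(t))\le V(X(t_0))e^{-\mu(t-t_0)}$, giving global exponential — hence global uniform asymptotic — convergence to $X^*$, with a rate independent of $t_0$. If instead $\mu=0$ and $\nu_n>0$, one only obtains $\dot V\le0$, so I would invoke a non-smooth LaSalle invariance principle (in the spirit of \cite{orlov2009discontinuous}) on the compact sublevel sets of $V$: on $\{\dot V=0\}$ the term $\nu_n\,\mathrm{sgn}(Z_n)g_n=0$ forces $X_n=X_n^*$, and along any invariant trajectory contained in this set $\dot X_n\equiv0$ combined with the equilibrium relations and the strict monotonicity of $f_{n-1}$ forces $X_{n-1}=X_{n-1}^*$; iterating this back-substitution down the chain collapses the largest invariant set to $\{X^*\}$. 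Global attractivity together with the Lyapunov stability already supplied by $\dot V\le0$ yields global asymptotic stability, which is uniform since the system is autonomous.

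I expect the main obstacle to lie not in the telescoping algebra but in the analytic bookkeeping forced by the lack of smoothness: rigorously justifying the a.e.\ identity for $\dot V$ for a merely Lipschitz $V$ along absolutely continuous solutions, handling the boundary of $[0,\infty)^n$ where components vanish, and deploying a LaSalle-type principle valid for non-smooth Lyapunov functions in the case $\mu=0$. It is precisely the back-substitution identifying the largest invariant set with $\{X^*\}$ that makes the \emph{strict} monotonicity hypotheses (rather than mere monotonicity) indispensable.
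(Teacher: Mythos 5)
Your proof is correct, and in the case $\mu>0$ it is essentially the paper's proof: the same $\ell^1$ Lyapunov function $V(X)=\sum_{i=1}^n\vert X_i-X_i^*\vert$ and the same decay estimate $\dot V\le-\mu V$, except that you make explicit the telescoping and sign case-check that the paper leaves implicit (``as a consequence of the monotonicity\dots''), and you close by integrating the differential inequality directly, whereas the paper invokes the nonsmooth LaSalle-type result of \cite{orlov2009discontinuous}. The genuine divergence is the case $\mu=0$, $\nu_n>0$: there the paper (Appendix \ref{proof_gouze}) uses $\dot V\le0$ only to obtain boundedness and then switches tools entirely, arguing through the Jacobian --- a column diagonally dominant, compartmental matrix --- so that global attractivity follows from Theorem 5 of \cite{jacquez1993qualitative}, and out-flow connectedness plus Gershgorin yield local asymptotic stability; you instead stay with $V$, observe that a.e. $\dot V\le-\nu_n\vert f_n(X_n)-f_n(X_n^*)\vert$, trap the omega-limit set in $\{X_n=X_n^*\}$, and collapse the largest invariant subset to $\{X^*\}$ by back-substitution along the strictly monotone chain. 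Your route has a concrete advantage: the paper's Jacobian argument tacitly requires the $f_i$ to be differentiable, which \Cref{assumption1} does not grant (only continuity and monotonicity), whereas your argument runs under the stated hypotheses and treats both cases with one Lyapunov function; what the paper's route buys is the use of off-the-shelf structural theorems for compartmental systems (no nonsmooth invariance principle to set up) and, as a by-product, spectral information about the linearization. One caveat on your side: since the right-hand side is only continuous, solutions need not be unique, so the LaSalle step must be stated with \emph{weak} invariance of the omega-limit set (through every limit point there exists a solution remaining in it); your back-substitution applies verbatim to such a solution, so this is a matter of phrasing rather than a gap.
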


\begin{proof}{
First suppose that $0<\mu$. Define the Lyapunov norm-like function $$V(X):=\sum_{i=1}^n\vert X_i-X_i^*\vert,$$ where $X:=(X_1, X_2,\dots X_n)$. The function $V$ is continuous, nonnegative and $V(X)=0$ if and only if $X=X^*$ (i.e. $V$ is positive definite).

According to \Cref{lemma4}, $X_i(t)\in[0,\infty)$ for every $t\geq t_0$, $i=1,2,\dots,n$. Then, as a consequence of the monotonicity of the functions $f_i$ established in Assumption \ref{assumption1} and the existence of the equilibrium point, it follows,

{\color{black}
\begin{align*}
\dot V(X)&=\sum_{i=1}^{n-1}f_i(X_i,X_{i+1})[sgn(X_{i+1}-X_{i+1}^*)-sgn(X_i-X_i^*)]\\
&+\mathbf I\cdot sgn(X_1-X_1^*)-\nu_nf_n(X_n)\cdot sgn(X_n-X_n^*) - \sum_{i=1}^n \mu\cdot X_i\cdot sgn(X_i-X_i^*)\\
&\leq -\sum_{i=1}^{n-1}\sum_{j=1}^i \mu\cdot X_j^*[sgn(X_{i+1}-X_{i+1}^*)-sgn(X_i-X_i^*)]\\\notag
&+ \sum_{i=1}^n \mu\cdot X_i^*sgn(X_n-X_n^*)  - \sum_{i=1}^n \mu\cdot X_i\cdot sgn(X_i-X_i^*)\\
&\leq\sum_{i=1}^{n}\mu\cdot (X_i^*-X_i)\cdot sgn(X_i-X_i^*)=-\sum_{i=1}^n \mu\cdot\vert X_i-X_i^*\vert.
\end{align*}}

Defining $W(X):=\sum_{i=1}^n \mu\cdot\vert X_i-X_i^*\vert$, we have that $W(X)$ is a continuous nonnegative function such that $W(X)=0$ if and only if $X=X^*$ and
\begin{align*}
\dot V(X)&\leq -W(X). 
\end{align*}

Therefore, by means of an extension of LaSalle invariance principle (see Theorem 3.3 in \cite{orlov2009discontinuous}), we conclude that $X^*$ is globally uniformly asymptotically stable.

The proof for the case $\mu=0$ and $0<\nu_n$ is in Appendix \ref{proof_gouze}.
}\end{proof}

The next \Cref{lemma5} states an order for the metabolic pathways with constant inputs. In other words, it compares two systems of the form \eqref{ode_meta_path} with constant inputs, according to the input values and their initial conditions.

Moreover, \Cref{lemma2} asserts an order for nonnegative equilibrium points of two systems of the form \eqref{ode_meta_path} with positive constant inputs.
 
Finally, \Cref{lemma9} states that if a system of the form \eqref{ode_meta_path} with constant input has a nonnegative equilibrium point, then there is also a system of the form \eqref{ode_meta_path} with a larger constant input that has a larger (entry by entry) equilibrium point. 

\begin{lemma}\label{lemma5}
Under Assumption \ref{assumption1}, consider two systems
\begin{align*}
\frac{dX}{dt}&=f(\mathbf I_1,X,\mu,\nu_n)\\
\frac{dY}{dt}&=f(\mathbf I_2,Y,\mu,\nu_n),
\end{align*}with constant inputs that satisfy 
$$0\leq\mathbf I_2\leq\mathbf I_1,$$$\nu_n\geq0$, $\mu\geq0$ and initial conditions $X(t_0)$ and $Y(t_0)$, respectively, such that
\begin{align*}
0\leq Y_i(t_0)&\leq X_i(t_0)&\forall i=1,2,\dots,n.
\end{align*}If $\mathbf I_2<\mathbf I_1$, then
\begin{align*}
Y_i(t)&< X_i(t)&\forall t\in(t_0,\infty), \forall i=1,2,\dots,n.
\end{align*}
Moreover, if $\mathbf I_1=\mathbf I_2$ and $Y_i(t_0)<X_i(t_0)$ for every $i=1,2, \dots,n$, there is $T>t_0$ such that
\begin{align*}
Y_i(t)&< X_i(t)&\forall t\in(t_0,T), \forall i=1,2,\dots,n,
\end{align*}and
\begin{align*}
Y_i(t)&\leq X_i(t)&\forall t\in[T,\infty), \forall i=1,2,\dots,n.
\end{align*}
\end{lemma}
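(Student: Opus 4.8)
The plan is to track the difference $Z_i:=X_i-Y_i$ and to exploit that, under Assumption~\ref{assumption1}, the vector field of \eqref{ode_meta_path} with a fixed input is \emph{cooperative} (quasimonotone in the sense of Kamke) and nondecreasing in the input. Everything rests on the sign of $\dot Z_j$ at a \emph{touching configuration}, i.e.\ at a time $t$ where $Z_j(t)=0$ (equivalently $X_j(t)=Y_j(t)$) while $Z_k(t)\ge0$ for all $k$. Writing the $j$-th line of \eqref{ode_meta_path} for both systems and subtracting, the diagonal term $-\mu Z_j$ vanishes, the upstream term equals $f_{j-1}(X_{j-1},Y_j)-f_{j-1}(Y_{j-1},Y_j)\ge0$ by monotonicity in the first argument~(ii) (using $X_{j-1}\ge Y_{j-1}$), and the downstream contribution $-\bigl(f_j(Y_j,X_{j+1})-f_j(Y_j,Y_{j+1})\bigr)\ge0$ by monotonicity in the second argument~(iii); for $j=1$ the upstream flux is replaced by the input, leaving the gap $\mathbf I_1-\mathbf I_2\ge0$, and for $j=n$ the output contributes $-\nu_n\bigl(f_n(Y_n)-f_n(Y_n)\bigr)=0$. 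Consequently $\dot Z_1\ge\mathbf I_1-\mathbf I_2$ and $\dot Z_j\ge0$ for $j\ge2$ at any touching configuration.

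First I would establish the \textbf{weak inequality} $Z_i(t)\ge0$ for all $t\ge t_0$ and all $i$. This is exactly the conclusion of the standard comparison theorem for quasimonotone systems, since $f(\mathbf I_1,\cdot,\mu,\nu_n)\ge f(\mathbf I_2,\cdot,\mu,\nu_n)$ pointwise and the field is cooperative. To keep the argument self-contained I would prove it by perturbation: let $X^\varepsilon$ solve $\dot X^\varepsilon=f(\mathbf I_1,X^\varepsilon,\mu,\nu_n)+\varepsilon\mathbf 1$ with $X^\varepsilon(t_0)=X(t_0)$, so that at every touching configuration the offending component has strictly outward velocity $\ge\varepsilon>0$. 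At the first putative time where $\min_k\bigl(X^\varepsilon_k-Y_k\bigr)$ reaches $0$ this strict sign is incompatible with the component having just decreased to $0$, so $X^\varepsilon\ge Y$ for every $\varepsilon>0$; letting $\varepsilon\to0$ and using continuous dependence (available from the well-posedness of these constant-input systems) yields $Z\ge0$.

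Next I would \textbf{bootstrap to the strict inequality} when $\mathbf I_2<\mathbf I_1$, by induction along the pathway. For $j=1$: if $Z_1(t_2)=0$ at some $t_2>t_0$, then by the sign rule above $\dot Z_1(t_2)\ge\mathbf I_1-\mathbf I_2>0$; but $Z_1\ge0$ everywhere makes $t_2$ an interior minimum, forcing $\dot Z_1(t_2)=0$, a contradiction, so $Z_1>0$ on $(t_0,\infty)$. For the inductive step, assume $Z_{i-1}>0$ on $(t_0,\infty)$; at a touching time $t_2>t_0$ for $Z_i$ the upstream term is now $f_{i-1}(X_{i-1},Y_i)-f_{i-1}(Y_{i-1},Y_i)>0$ \emph{strictly}, because $f_{i-1}$ is \emph{strictly} increasing in its first argument~(ii) and $X_{i-1}(t_2)>Y_{i-1}(t_2)$; hence $\dot Z_i(t_2)>0$, again contradicting the interior minimum (for $i=n$ the downstream term is replaced by $-\nu_n(f_n(X_n)-f_n(Y_n))=0$). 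This yields $Z_i>0$ on $(t_0,\infty)$ for every $i$ and proves the first claim. The decisive ingredient is precisely the \emph{strict} monotonicity in the first argument, which transports strict positivity from $Z_{i-1}$ to $Z_i$ down the chain.

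Finally, in the equal-input case $\mathbf I_1=\mathbf I_2$ with $Y_i(t_0)<X_i(t_0)$ for all $i$, I would set $T:=\inf\{t>t_0:\ Z_i(t)=0\text{ for some }i\}$, with $T=\infty$ if no component ever vanishes. Continuity and $Z_i(t_0)>0$ give $T>t_0$; on $(t_0,T)$ no component vanishes, so each $Z_i$, being continuous, nonnegative (by the weak inequality) and positive at $t_0$, stays strictly positive; on $[T,\infty)$ the already-established weak inequality gives the non-strict bound. I expect the only genuinely delicate step to be the rigorous justification of the weak comparison under the bare continuity-plus-monotonicity hypotheses of Assumption~\ref{assumption1} (the quasimonotone comparison / continuous-dependence argument); once $Z\ge0$ is secured, the strict and borderline statements are elementary consequences of the minimum principle and of the strict monotonicity~(ii).
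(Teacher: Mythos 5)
Your proposal is correct and takes essentially the same route as the paper: both pass to the difference $Z=X-Y$ and establish that the closed nonnegative orthant (and, when $\mathbf I_2<\mathbf I_1$, its interior) is invariant under the flow of $Z$, by checking the sign of $\dot Z_j$ at points where a component vanishes while the others remain nonnegative. The paper compresses all of this into an unproved assertion of invariance, so your touching-configuration computation, the $\varepsilon$-perturbation argument for the weak inequality, the induction along the chain that transports strict positivity via the strict monotonicity in the first argument, and the continuous-dependence subtlety you flag (which the paper glosses over as well, since Assumption \ref{assumption1} provides only continuity, not Lipschitz regularity) are precisely the details hidden behind its two-line proof.
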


\begin{lemma}\label{lemma2}
Under Assumption \ref{assumption1}, consider the following systems 
\begin{align}\label{input_on_I1}
\frac{dX}{dt}&=f(\mathbf I_1,X,\mu,\nu_n),\\
\label{input_on_I2}
\frac{dY}{dt}&=f(\mathbf I_2,Y,\mu,\nu_n),
\end{align}where $\nu_n\geq0$ and $\mu\geq0$ and the constant inputs satisfy $$0\leq\mathbf I_2<\mathbf I_1.$$

Suppose that systems \eqref{input_on_I1} and \eqref{input_on_I2} have nonnegative equilibrium points $X^*:=(X_1^*,X_2^*,\dots,X_n^*)$ and $Y^*:=(Y_1^*,Y_2^*,\dots,Y_n^*)$, respectively. Then,
\begin{align*}
 Y_i^*&<X_i^*&\forall i=1,2,\dots,n.
 \end{align*} \end{lemma}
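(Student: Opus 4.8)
The plan is to establish the strict ordering in two stages: first obtain the non-strict ordering $Y_i^*\le X_i^*$ for every $i$ from the comparison principle \Cref{lemma5} together with the global stability in \Cref{lemma1}, and then upgrade each inequality to a strict one by telescoping the equilibrium equations and invoking the strict monotonicity of the $f_i$ in their first argument.

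For the first stage, note that by \Cref{lemma1} the points $X^*$ and $Y^*$ are the globally uniformly asymptotically stable equilibria of \eqref{input_on_I1} and \eqref{input_on_I2}, respectively. I would run both systems from the common initial condition $Y^*$: the solution $Y$ of \eqref{input_on_I2} then stays at $Y(t)\equiv Y^*$, while the solution $X$ of \eqref{input_on_I1} converges to $X^*$. Since $\mathbf I_2<\mathbf I_1$ and $Y_i(t_0)=X_i(t_0)$, \Cref{lemma5} gives $Y_i^*=Y_i(t)<X_i(t)$ for all $t>t_0$, and letting $t\to\infty$ yields $Y_i^*\le X_i^*$ for every $i$. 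Write $g_i:=X_i^*-Y_i^*\ge0$.

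For the second stage, I would telescope the two sets of equilibrium identities. Summing the last $n-k+1$ equilibrium equations of each system and subtracting produces, for $k=1,\dots,n-1$, the relation $f_k(X_k^*,X_{k+1}^*)-f_k(Y_k^*,Y_{k+1}^*)=(\mathbf I_1-\mathbf I_2)-\mu\sum_{i=1}^{k}g_i$, together with $\nu_n\bigl(f_n(X_n^*)-f_n(Y_n^*)\bigr)=(\mathbf I_1-\mathbf I_2)-\mu\sum_{i=1}^{n}g_i$. Each left-hand flux difference splits as $[f_k(X_k^*,X_{k+1}^*)-f_k(Y_k^*,X_{k+1}^*)]+[f_k(Y_k^*,X_{k+1}^*)-f_k(Y_k^*,Y_{k+1}^*)]$, whose first bracket is nonnegative and, by strict monotonicity in the first argument, strictly positive exactly when $g_k>0$, and whose second bracket is nonpositive and vanishes whenever $g_{k+1}=0$. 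The $k=1$ relation already forces $g_1>0$: if $g_1=0$ its right-hand side equals $\mathbf I_1-\mathbf I_2>0$ while its left-hand side is nonpositive, a contradiction.

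The main obstacle is propagating strictness to every component, and in particular to $g_n$. When $\mu=0$ and $\nu_n>0$ this is immediate, since the telescoped identities read (flux difference)$\,=\mathbf I_1-\mathbf I_2>0$ for $k\le n-1$ and $\nu_n(f_n(X_n^*)-f_n(Y_n^*))=\mathbf I_1-\mathbf I_2>0$, so the first bracket of each flux difference is strictly positive and $f_n$ is strictly increasing, forcing all $g_i>0$. When $\mu>0$ I would first rule out $g_n=0$ by contradiction: if $g_n=0$ then the second bracket of the $(n-1)$st flux difference vanishes and the telescoped identities force that flux difference to equal its first bracket and to be zero, hence $g_{n-1}=0$; iterating downward collapses every $g_i$ to $0$, whereupon the total balance reads $\mathbf I_1-\mathbf I_2=0$, a contradiction. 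Once $g_n>0$, rewriting the telescoped identity as $f_k(X_k^*,X_{k+1}^*)-f_k(Y_k^*,Y_{k+1}^*)=\nu_n(f_n(X_n^*)-f_n(Y_n^*))+\mu\sum_{i=k+1}^{n}g_i\ge\mu g_n>0$ shows the flux difference, and hence its (larger) first bracket, is bounded below by $\mu g_n>0$ for every $k\le n-1$, so $g_k>0$ throughout, completing the proof.
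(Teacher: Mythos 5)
Your proof is correct, but it takes a genuinely different route from the paper's. The paper argues entirely statically and by contradiction: it supposes $X_n^*\leq Y_n^*$, propagates this backward through the equilibrium identities (using that each $f_i$ is strictly increasing in its first argument and decreasing in its second) to obtain $X_i^*\leq Y_i^*$ for all $i$, sums all equations to conclude $\mathbf I_1\leq\mathbf I_2$, a contradiction, and then repeats similar arguments for the remaining components; no stability or comparison results are invoked. You instead split the work in two: the weak ordering $Y_i^*\leq X_i^*$ comes from a dynamical argument (running both systems from the common initial condition $Y^*$ and combining the trajectory comparison of \Cref{lemma5} with the global convergence given by \Cref{lemma1}), and strictness comes from your telescoped equilibrium identities, whose collapse-and-contradict step for $g_n=0$ is in essence the paper's argument restricted to the already-ordered situation. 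Your route yields a quantitative byproduct (for $\mu>0$ every flux difference is bounded below by $\mu g_n$), but it costs extra hypotheses and bookkeeping: \Cref{lemma1} requires $\mu+\nu_n>0$, which is not assumed in the lemma, so you should note that it follows from the existence of the equilibrium of \eqref{input_on_I1} together with $\mathbf I_1>0$ (and that in the degenerate case $\mu=\nu_n=0$ the hypotheses are vacuous, since no equilibrium of \eqref{input_on_I1} can exist); moreover your case split on $\mu$ is something the paper's uniform argument avoids. One slip of wording: the identity $f_k(X_k^*,X_{k+1}^*)-f_k(Y_k^*,Y_{k+1}^*)=(\mathbf I_1-\mathbf I_2)-\mu\sum_{i=1}^{k}g_i$ comes from summing the \emph{first} $k$ equilibrium equations, not the last ones; the last ones give the other form you use, $\nu_n\bigl(f_n(X_n^*)-f_n(Y_n^*)\bigr)+\mu\sum_{i=k+1}^{n}g_i$. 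Both identities are correct and equivalent via the total balance, so nothing breaks.
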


\begin{lemma}\label{lemma9}
Under Assumption \ref{assumption1}, suppose that there is a nonnegative vector $X^*:=(X_1^*,X_2^*,\dots,X_n^*)$ such that
\begin{align*}
f(\mathbf I_1,X^*,\mu,\nu_n)=\bar0,
\end{align*}with $\mathbf I_1>0$, $\nu_n\geq0$ and $\mu\geq0$. 

Then, for any $\varepsilon>0$, there exist
$$\mathbf I\in(\mathbf I_1,\mathbf I_1+\varepsilon)$$and a nonnegative vector $X':=(X_1',X_2',\dots,X_n')$ such that  
\begin{align*}
X_i^*&<X_i' &\forall i=1,\dots,n
\end{align*}and
\begin{align*}
f(\mathbf I,X',\mu,\nu_n)=\bar0.
\end{align*}
\end{lemma}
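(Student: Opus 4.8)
The plan is to parametrize the equilibria of the constant-input system by the value $s$ of the last coordinate $X_n$ and to show that, as $s$ is raised slightly above $X_n^*$, the whole equilibrium vector together with the input that sustains it increase continuously and strictly. To set this up, I rewrite the equilibrium equations for a generic constant input $\mathbf I$ as a backward chain: put $c_n(s):=\nu_nf_n(s)+\mu\cdot s$, so the last equation reads $f_{n-1}(X_{n-1},s)=c_n(s)$; the $k$-th equation for $k=n-1,\dots,2$ reads $f_{k-1}(X_{k-1},X_k)=c_{k+1}+\mu\cdot X_k=:c_k$; and the first equation reads $\mathbf I=f_1(X_1,X_2)+\mu\cdot X_1$. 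Read from $k=n$ downward, a choice of $s=X_n$ successively determines $X_{n-1},X_{n-2},\dots,X_1$ and finally $\mathbf I$, and at $s=X_n^*$ it reproduces exactly $(X_1^*,\dots,X_{n-1}^*)$ and $\mathbf I_1$, which anchor the recursion. Summing all the equations also gives the identity $\mathbf I=\nu_nf_n(X_n)+\mu\sum_{i=1}^nX_i$; evaluated at the anchor this shows in particular that $\mathbf I_1>0$ forces $\mu+\nu_n>0$, a fact I will use for strictness.

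First I would prove, by downward induction on $k=n,n-1,\dots,2$, the following statement: for every $\eta>0$ there is a right-neighborhood of $X_n^*$ on which the recursion is well defined, $X_{k}(s)\in(X_k^*,X_k^*+\eta)$, and the partial target satisfies $c_k(s)>c_k^*$, where $c_k^*$ denotes the value of $c_k$ at the anchor $X^*$; in particular $X_k(s)\to X_k^*$ and $c_k(s)\to c_k^*$ as $s\to X_n^{*+}$. The base level is $c_n(s)-c_n^*=\nu_n(f_n(s)-f_n(X_n^*))+\mu(s-X_n^*)$, which is strictly positive for $s>X_n^*$ since $f_n$ is strictly increasing and $\mu+\nu_n>0$, and tends to $0$ as $s\to X_n^{*+}$ by continuity of $f_n$. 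For the inductive step I solve $f_{k-1}(X_{k-1},X_k(s))=c_k(s)$ for $X_{k-1}$: because $f_{k-1}$ is decreasing in its second argument and $X_k(s)>X_k^*$ (Assumption \ref{assumption1}), one has $f_{k-1}(X_{k-1}^*,X_k(s))\le f_{k-1}(X_{k-1}^*,X_k^*)=c_k^*<c_k(s)$, while strict monotonicity in the first argument gives $f_{k-1}(X_{k-1}^*+\eta,X_k^*)>c_k^*$, so by continuity and the inductive convergence $X_k(s)\to X_k^*,\ c_k(s)\to c_k^*$ we get $f_{k-1}(X_{k-1}^*+\eta,X_k(s))>c_k(s)$ for $s$ sufficiently close to $X_n^*$. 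The intermediate value theorem then furnishes a solution $X_{k-1}(s)\in(X_{k-1}^*,X_{k-1}^*+\eta)$, unique by strict monotonicity, and setting $c_{k-1}(s):=c_k(s)+\mu\cdot X_{k-1}(s)$ yields $c_{k-1}(s)>c_{k-1}^*$ and $c_{k-1}(s)\to c_{k-1}^*$, closing the induction. The first equation, treated the same way, produces $X_1(s)\in(X_1^*,X_1^*+\eta)$ and $\mathbf I(s):=c_2(s)+\mu\cdot X_1(s)$.

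It then remains to collect the consequences. The summing identity gives $\mathbf I(s)=\nu_nf_n(s)+\mu\cdot s+\mu\sum_{i=1}^{n-1}X_i(s)$; since each term strictly exceeds its value at $X^*$ we obtain $\mathbf I(s)>\mathbf I_1$, while the convergence $X_i(s)\to X_i^*$ and continuity of $f_n$ give $\mathbf I(s)\to\mathbf I_1$ as $s\to X_n^{*+}$. Choosing $s$ close enough to $X_n^*$ therefore makes $\mathbf I:=\mathbf I(s)\in(\mathbf I_1,\mathbf I_1+\varepsilon)$, and $X':=(X_1(s),\dots,X_{n-1}(s),s)$ is by construction a nonnegative equilibrium of the constant-input system with $X_i^*<X_i'$ for every $i$, which is exactly the assertion. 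The step I expect to be the main obstacle is the existence part of the recursion: because the $f_i$ are only assumed continuous I cannot appeal to the implicit function theorem, so each $X_{k-1}(s)$ must be secured by the intermediate value argument above, which in turn forces the neighborhoods of $X_n^*$ to be shrunk step by step so that the moving target $c_k(s)$ stays inside the range of $f_{k-1}(\cdot,X_k(s))$ over $[X_{k-1}^*,X_{k-1}^*+\eta]$.
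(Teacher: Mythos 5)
Your proposal is correct and takes essentially the same route as the paper's proof: both parametrize the new equilibrium by the last coordinate taken slightly above $X_n^*$, solve the equilibrium chain backward using continuity and the monotonicity conditions to obtain entries strictly above $X^*$, recover the input from the first equation, and conclude by noting that this input strictly exceeds $\mathbf I_1$ yet converges to $\mathbf I_1$ as the last coordinate decreases to $X_n^*$, so it can be placed in $(\mathbf I_1,\mathbf I_1+\varepsilon)$. The differences are only technical (the paper works with a sequence $X_n^{\alpha(j)}\downarrow X_n^*$ and secures solvability of each stage via the limit $X_{n-1}\to\infty$, whereas you use a continuous parameter $s$ and a local intermediate-value argument with $\eta$-control), and your write-up supplies the inductive details that the paper leaves as a sketch.
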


\subsection{Solution existence and right uniqueness}\label{sec:existence}

In this section we will prove the existence and right uniqueness of an absolutely continuous solution for the {\color{black}switching} system \eqref{end_product_feedback}. For this purpose, we use the theory of differential inclusions of Filippov \cite{filippov1988differential}. {\color{black}The proofs of all Lemmas are in \Cref{supp_mat}.}

\begin{definition}
We say that for the equation
\begin{align*}
\frac{dx}{dt} = f(t,x)
\end{align*}\emph{right uniqueness} holds at a point $(t_0, x_0)$ if there exists $t_1 > t_0$ such that each two solutions of this equation satisfying the condition $x(t_0) = x_0$ coincide on the interval $t_0 \leq t \leq t_1$ or on the part of this interval on which they are both defined.
Moreover, \emph{right uniqueness holds in a domain $D$} (open or closed) if for each point $(t_0, x_0)\in D$ every two solutions satisfying the condition $x(t_0) = x_0$ coincide on each interval $t_0\leq t \leq t_1$ on which they both exist and lie in this domain \cite{filippov1988differential}.
\end{definition}

\begin{definition}
We define the sign as a function $sgn\colon\mathbb R\to\{-1,0,1\}$ such that
$$sgn(x):=\begin{cases}
-1&\text{if }x<0\\
0&\text{if }x=0\\
1&\text{if }x>0
\end{cases}.$$
\end{definition}

The purpose of \Cref{lemma11} and \Cref{lemma12} is to analyze the behavior of the {\color{black}switching} system \eqref{end_product_feedback} when its last state (metabolite $X_n$) takes the value $K$ at which the systems switches. This unequivocally defines the value taken by the entry of the {\color{black}switching} system \eqref{end_product_feedback} in the differential inclusion, allowing to conclude in \Cref{lemma_uniqueness} that \emph{right uniqueness} holds for its solution.

\begin{lemma}\label{lemma11}
Under Assumption \ref{assumption1}, consider the system
\begin{align*}
\frac{d X}{dt}&=f(\mathbf I,X,\mu,\nu_n),
\end{align*}with $\mathbf I\geq0$, $\mu\geq0$ and $\nu_n\geq0$. 
Suppose that for some $m\in\{1,2,\dots,n-1\}$
\begin{align*}
\dot X_i(t_0)&=0&&\forall i>m.
\end{align*}Then, there exists $\varepsilon>0$ such that
\begin{align*}
sgn(\dot X_i(t))&=sgn(\dot X_m(t))&\forall t\in(t_0,t_0+\varepsilon),\forall i>m.
\end{align*}
\end{lemma}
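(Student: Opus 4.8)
The plan is to propagate the sign of $\dot X_m$ up the cascade $X_{m+1}\to\cdots\to X_n$, exploiting that the last $n-m$ equations form a \emph{cooperative} subsystem driven monotonically by $X_m$. First I would dispose of the degenerate cases. If $\dot X_m(t_0)=0$ then $\dot X_i(t_0)=0$ for every $i\ge m$, so the hypothesis also holds with $m$ replaced by $m-1$; iterating, it suffices to treat the case $\dot X_m(t_0)\neq0$ (and if the descent reaches an index where the whole right-hand side vanishes at $t_0$, the solution is the constant equilibrium and every $\dot X_i\equiv0$, so the claim is trivial). The case $\dot X_m(t_0)<0$ being entirely symmetric, I assume $\dot X_m(t_0)>0$. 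By continuity of $\dot X_m$, fix $\delta>0$ with $\dot X_m(t)>0$ on $[t_0,t_0+\delta]$, so that $X_m$ is strictly increasing there; set $\varepsilon=\delta$. Then $\mathrm{sgn}(\dot X_m(t))=1$ throughout $(t_0,t_0+\delta)$, and the claim reduces to showing $\dot X_i(t)>0$ for $i>m$.

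Next I would establish that the downstream states never drop below their values at $t_0$. Viewing $X_m(\cdot)$ as an exogenous input, the equations for $Z:=(X_{m+1},\dots,X_n)$ have the form $\dot Z=G(X_m,Z)$ where, by Assumption \ref{assumption1}(ii)--(iii), $G$ is nondecreasing in the off-diagonal states and strictly increasing in the driver $X_m$; that is, the subsystem is cooperative and monotone in its input, exactly the structure underlying \Cref{lemma5}. Since $\dot X_i(t_0)=0$ for $i>m$ means $G(X_m(t_0),Z(t_0))=\bar0$, and $X_m(t)\ge X_m(t_0)$ for $t\ge t_0$, the constant $Z(t_0)$ is a subsolution, $0=G(X_m(t_0),Z(t_0))\le G(X_m(t),Z(t_0))$, so the comparison principle for cooperative systems gives $X_i(t)\ge X_i(t_0)$ for all $i>m$, $t\in[t_0,t_0+\delta]$. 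Feeding this back through a time-shift argument, comparing the tail trajectory with $t\mapsto Z(t+h)$, which is driven by the larger signal $X_m(\cdot+h)$ and starts above $Z$ by the previous inequality, yields $X_i(t+h)\ge X_i(t)$ for every small $h>0$; hence each $X_i$ ($i>m$) is nondecreasing on $[t_0,t_0+\delta]$ and therefore $\dot X_i(t)\ge0$ there.

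It remains to upgrade $\dot X_i\ge0$ to the strict inequality $\dot X_i>0$, which I expect to be the main obstacle, because Assumption \ref{assumption1} only makes the $f_i$ continuous, so increments of $f_i$ cannot be compared quantitatively with increments of the arguments. Two observations organize this step. First, arguing by induction on $i$ starting from the strictly increasing $X_m$, no $\dot X_i$ ($i>m$) can vanish on a whole subinterval $[a,b]\subset(t_0,t_0+\delta)$: there $X_i$ would be constant, whence the identity $f_{i-1}(X_{i-1},X_i)=f_i(X_i,X_{i+1})+\mu X_i$ would force its left-hand side, strictly increasing in $t$ because $X_{i-1}$ is strictly increasing and $f_{i-1}$ is strictly increasing in its first argument, to equal a function that is nonincreasing in $t$ (as $X_{i+1}$ is nondecreasing and $f_i$ is nonincreasing in its second argument); this is impossible on a nondegenerate interval, and together with the monotonicity of Step~2 it makes each $X_i$ strictly increasing. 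Second, and this is the delicate point, one must exclude \emph{isolated} zeros of $\dot X_i$: heuristically the tail tracks the moving quasi-equilibrium of the frozen subsystem strictly from below, so its velocity stays positive, and I would make this rigorous via the strong comparison available because the cascade is \emph{irreducible} (each coupling $X_{i-1}\mapsto\dot X_i$ is active by the strict monotonicity in Assumption \ref{assumption1}(ii)), which turns the weak ordering of Step~2 into the strict ordering $X_i(t+h)>X_i(t)$ and, propagated along the chain from the strictly increasing driver $X_m$, forces $\dot X_i(t)>0$ for every $t\in(t_0,t_0+\delta)$ and $i>m$. Combined with $\mathrm{sgn}(\dot X_m(t))=1$ on this interval, this gives $\mathrm{sgn}(\dot X_i(t))=\mathrm{sgn}(\dot X_m(t))$, as claimed.
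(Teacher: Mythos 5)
Your reduction to the case $\dot X_m(t_0)>0$, the comparison argument giving $X_i(t)\ge X_i(t_0)$ and hence $\dot X_i\ge 0$, and the observation that no $\dot X_i$ can vanish on a whole subinterval (via the strict monotonicity of $f_{i-1}$ in its first argument) are sound, and they follow a genuinely different route from the paper, whose proof of \Cref{lemma11} is only a one-line sketch of a downward induction on $m$. The problem is the last step, the one you yourself flag as delicate: you pass from the strict ordering $X_i(t+h)>X_i(t)$ to the pointwise conclusion $\dot X_i(t)>0$. That implication is false. A strictly increasing $C^1$ function can have vanishing derivative at isolated points (e.g. $t\mapsto t^3$ at $t=0$), and the strong-comparison/irreducibility argument does not exclude this; it only re-proves strict monotonicity of $X_i$, which you already had. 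Since the conclusion of the lemma requires $sgn(\dot X_i(t))=sgn(\dot X_m(t))$ at \emph{every} $t$ of some interval $(t_0,t_0+\varepsilon)$, isolated zeros of $\dot X_i$ --- in particular zeros accumulating at $t_0^{+}$ --- are exactly what must be ruled out, and your argument does not do so. (A smaller instance of the same issue: ``the whole right-hand side vanishes at $t_0$, hence the solution is the constant equilibrium'' presupposes forward uniqueness, which Assumption \ref{assumption1} does not give, the $f_i$ being merely continuous.)

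Moreover, this gap cannot be closed under the stated hypotheses, because the strict inequality you need can actually fail when the $f_i$ are only continuous. Take $n=2$, $\mu=\nu_n=0$, $\mathbf I=1$, so that $X_1+X_2=S_0+(t-t_0)$ and the last equation reduces to the scalar problem $\dot X_2=f_1\big(S_0+(t-t_0)-X_2,\,X_2\big)$. Choose $f_1(X_1,X_2):=\Psi\big(\phi(X_1+X_2-S_0)-X_2\big)$ with $\Psi(w):=sgn(w)\sqrt{\vert w\vert}$ and $\phi$ continuous, strictly increasing, of slope at most $1$, and vanishing at $-S_0$; every item of Assumption \ref{assumption1} is then satisfied. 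But the square-root shape of $\Psi$ permits $X_2$ to reach the moving root curve $\phi$ in \emph{finite} time, and one can build $\phi$ (a $C^1$ function whose derivative vanishes at a sequence $t_k\downarrow t_0$) so that the corresponding solution has $\dot X_2\ge0$, $\dot X_2(t_k)=0$ for all $k$, and $\dot X_2>0$ in between, while $\dot X_1=1-\dot X_2>0$ throughout: then no $\varepsilon$ works and the conclusion of \Cref{lemma11} fails. So the missing ingredient is a regularity hypothesis, e.g. locally Lipschitz $f_i$; with a Lipschitz constant $L$ one gets a Gronwall-type bound $z(t)\ge z(s)e^{-L(t-s)}$ on the gap $z$ between $X_i$ and the moving quasi-equilibrium, finite-time touching becomes impossible, and your final step (and with it the whole proposal) goes through. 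As written, however, the proposal --- like the paper's own one-line induction sketch --- leaves the actual crux of the lemma unproved.
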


\begin{lemma}\label{lemma12}
Under Assumption \ref{assumption1}, let $k_1>0$, $\mu\geq0$, $\nu_n\geq0$ and consider the systems
\begin{align*}
\frac{d\widetilde X}{dt}&=f(k_1,\widetilde X,\mu,\nu_n)\\
\frac{dZ}{dt}&=f(0,Z,\mu,\nu_n),
\end{align*}with the same initial conditions
\begin{align*}
\widetilde X_i(t_0)&=Z_i(t_0)&\forall i=1,2,\dots,n.
\end{align*}

Assume that there is $m\in\{2,3,\dots,n\}$ such that 
$$\dot{\widetilde X}_m(t_0)=\dot Z_m(t_0)\not=0.$$

Then, there exists $\varepsilon>0$ such that 
\begin{align*}
sgn(\dot{\widetilde X}_n(t))&=sgn(\dot Z_n(t))&\forall t\in(t_0,t_0+\varepsilon).
\end{align*}
\end{lemma}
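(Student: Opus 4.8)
The plan is to exploit that the input appears only in the first component of $f$: the two vector fields $f(k_1,\cdot,\mu,\nu_n)$ and $f(0,\cdot,\mu,\nu_n)$ differ solely in their first coordinate. Since $\widetilde X$ and $Z$ start from the same point at $t_0$ and, for each index $i\geq 2$, the $i$-th component of $f$ does not involve the input, I would first record that
\begin{align*}
\dot{\widetilde X}_i(t_0)=\dot Z_i(t_0)\qquad\forall i=2,\dots,n,
\end{align*}
while $\dot{\widetilde X}_1(t_0)-\dot Z_1(t_0)=k_1>0$. Thus the hypothesis merely says that the common derivative of the two systems is nonzero in some coordinate of index at least $2$. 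I would then let $m^*$ be the largest index in $\{2,\dots,n\}$ at which this common derivative is nonzero; it is well defined and, crucially, the same for both systems because their derivatives coincide for $i\geq 2$. By construction $\dot{\widetilde X}_{m^*}(t_0)=\dot Z_{m^*}(t_0)\neq0$ and $\dot{\widetilde X}_i(t_0)=\dot Z_i(t_0)=0$ for all $i>m^*$.

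Because $f$ is continuous, the solutions are $C^1$ and the maps $t\mapsto\dot{\widetilde X}_n(t)$, $t\mapsto\dot Z_n(t)$ are continuous. If $m^*=n$, then $\dot{\widetilde X}_n(t_0)=\dot Z_n(t_0)\neq0$, and continuity alone forces both to keep the common sign of their value at $t_0$ on a small interval $(t_0,t_0+\varepsilon)$, giving the claim. If $m^*<n$, I would apply \Cref{lemma11} to each system with index $m^*$, which is legitimate since $\dot{\widetilde X}_i(t_0)=0$ (resp. $\dot Z_i(t_0)=0$) for all $i>m^*$ and $m^*\in\{1,\dots,n-1\}$. This yields $\varepsilon_1,\varepsilon_2>0$ with $sgn(\dot{\widetilde X}_n(t))=sgn(\dot{\widetilde X}_{m^*}(t))$ on $(t_0,t_0+\varepsilon_1)$ and $sgn(\dot Z_n(t))=sgn(\dot Z_{m^*}(t))$ on $(t_0,t_0+\varepsilon_2)$. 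Since $\dot{\widetilde X}_{m^*}(t_0)=\dot Z_{m^*}(t_0)\neq0$, continuity lets me shrink $\varepsilon$ so that both level-$m^*$ derivatives retain the common sign $sgn(\dot{\widetilde X}_{m^*}(t_0))$ on $(t_0,t_0+\varepsilon)$; chaining the equalities then gives $sgn(\dot{\widetilde X}_n(t))=sgn(\dot Z_n(t))$ there.

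The main obstacle is the degenerate case $\dot X_n(t_0)=0$, where continuity says nothing about the sign immediately after $t_0$; this is exactly what \Cref{lemma11} resolves, by transporting the nonzero sign at level $m^*$ down the chain to level $n$. The one point requiring care is that $m^*$ must be common to the two systems, which is guaranteed by the input entering only the first equation, so that the signs being propagated in $\widetilde X$ and in $Z$ originate from one and the same nonzero quantity $\dot{\widetilde X}_{m^*}(t_0)=\dot Z_{m^*}(t_0)$.
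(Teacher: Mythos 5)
Your proof is correct, and it is organized differently from the paper's. Both arguments hinge on the same key tool, \Cref{lemma11}, but the paper proceeds by an exhaustive case analysis on $sgn(\dot{\widetilde X}_n(t_0))$: the nonzero cases are dispatched by continuity, and the degenerate case $\dot{\widetilde X}_n(t_0)=\dot Z_n(t_0)=0$ is handled by contradiction, enumerating the possible mismatched sign patterns after $t_0$ and ruling them out one by one --- the pattern $(\dot{\widetilde X}_n<0,\ \dot Z_n>0)$ via the trajectory-comparison result \Cref{lemma5}, the identically-zero pattern via an equilibrium (cascade) argument, and the pattern $(\dot{\widetilde X}_n>0,\ \dot Z_n<0)$ via \Cref{lemma11}. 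Your proof instead is direct and uniform: you single out the largest index $m^*\in\{2,\dots,n\}$ with nonzero common derivative (the paper's ``without loss of generality'' step makes the same normalization, but only inside one subcase), treat $m^*=n$ by continuity, and for $m^*<n$ transport the constant nonzero sign of $\dot{\widetilde X}_{m^*}=\dot Z_{m^*}$ near $t_0$ down to level $n$ in both systems at once via \Cref{lemma11}. This buys you two things: you never need \Cref{lemma5} or the argument excluding identically-zero derivatives (these possibilities are precluded automatically, since the sign at level $n$ is pinned to the nonzero sign at level $m^*$), and you avoid the contradiction framework altogether, which also makes explicit where the hypothesis $\dot{\widetilde X}_m(t_0)=\dot Z_m(t_0)\neq0$ enters --- it guarantees $m^*$ exists and is \emph{common} to the two systems because the input appears only in the first coordinate of $f$. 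The paper's longer route, on the other hand, makes visible the dynamical reasons why each mismatched behavior is impossible (e.g., the ordering of trajectories from \Cref{lemma5}), which is perhaps more informative but logically redundant given your argument.
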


\begin{proposition}[Existence and uniqueness]\label{lemma_uniqueness}
Under Assumption \ref{assumption1}, consider the system
\begin{align}\label{switch_sys}
\frac{dX}{dt}&=f(u[X_n],X,\mu,\nu_n)
\end{align}with initial conditions $X_i(t_0)\geq0$, $i=1,2,\dots,n$, $K>0$, $k_1>0$, $\mu\geq0$, $\nu_n\geq0$ and the input $u[X_n]$ defined in \eqref{input}. 

Then, there exists an absolutely continuous function $X$ that satisfies \eqref{switch_sys} for almost every (a.e.) $t\in[t_0,\infty)$ and right uniqueness holds in $[t_0,\infty)\times[0,\infty)^n$.
\end{proposition}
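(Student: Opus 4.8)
The plan is to regularize the discontinuous right-hand side of \eqref{switch_sys} in the sense of Filippov and then treat existence and right uniqueness separately, with the real work concentrated on the switching surface $\Sigma:=\{X\in[0,\infty)^n:X_n=K\}$. First I would replace \eqref{switch_sys} by the differential inclusion $\dot X\in F(X)$, where $F(X)=\{f(k_1,X,\mu,\nu_n)\}$ on $\{X_n<K\}$, $F(X)=\{f(0,X,\mu,\nu_n)\}$ on $\{X_n>K\}$, and on $\Sigma$ the value $F(X)$ is the closed convex hull of the two one-sided limits. Because the input enters \eqref{ode_meta_path} only additively in the first coordinate, this hull is exactly $\{f(\lambda k_1,X,\mu,\nu_n):\lambda\in[0,1]\}$; in particular only $\dot X_1$ is genuinely set-valued, while $\dot X_2,\dots,\dot X_n$ are single-valued and continuous on all of $[0,\infty)^n$. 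The map $F$ is then upper semicontinuous with nonempty, compact, convex values and locally bounded (immediate from the continuity in Assumption~\ref{assumption1}), so Filippov's existence theorem gives a local absolutely continuous solution. Positive invariance of $[0,\infty)^n$ from \Cref{lemma4} keeps it nonnegative, and the telescoping of the internal fluxes gives $\frac{d}{dt}\sum_i X_i = u(X_n)-\nu_nf_n(X_n)-\mu\sum_i X_i\leq k_1$, a linear-in-time bound precluding finite-time escape; hence the solution extends to all of $[t_0,\infty)$.

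For right uniqueness away from $\Sigma$, on each open half-space the input is constant and the field is the continuous, quasimonotone right-hand side of a system with constant input. I would deduce right uniqueness there from the ordering in \Cref{lemma5}: two solutions issuing from a common point cannot become strictly ordered in either direction, and the cooperative structure then forces them to coincide for as long as they remain on one side of $\Sigma$.

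The crux is the behavior on $\Sigma$, where the decisive fact is that $\dot X_n=f_{n-1}(X_{n-1},X_n)-\nu_nf_n(X_n)-\mu X_n$ is independent of the input: the jump $k_1e_1$ is \emph{tangent} to $\Sigma$, so the normal component of the field is continuous across it. Let $t_1$ be a time at which the solution meets $\Sigma$. If $\dot X_n(t_1)\neq0$, the sign of $\dot X_n$ is fixed regardless of the input, $X_n$ crosses the level $K$ transversally, and the continuation is uniquely determined. If $\dot X_n(t_1)=0$, I would look upstream for the largest index $m$ with $\dot X_m(t_1)\neq0$. When $m\geq2$ this derivative is input-independent, so \Cref{lemma11} propagates its sign downstream to give $\operatorname{sgn}(\dot X_n(t))=\operatorname{sgn}(\dot X_m(t))$ on a right neighborhood of $t_1$, and \Cref{lemma12} shows the ON and OFF fields issue the \emph{same} sign of $\dot X_n$ there; both admissible continuations then leave $\Sigma$ on the same side, and uniqueness holds again. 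In the fully degenerate case $\dot X_2(t_1)=\dots=\dot X_n(t_1)=0$, maintaining $X_n\equiv K$ forces $\dot X_{n-1}\equiv\dots\equiv\dot X_1\equiv0$ and hence the input value $\lambda k_1=f_1(X_1,X_2)+\mu X_1$; if this lies in $[0,k_1]$ the only Filippov selection consistent with staying on $\Sigma$ is the sliding mode, whereas if it falls outside $[0,k_1]$ every admissible selection shares the sign of $\dot X_1$, so \Cref{lemma11} with $m=1$ forces a determined transversal departure. Checking which one-sided limit is compatible with the sign of $\dot X_n$ it produces (ON only with $X_n<K$, OFF only with $X_n>K$) excludes the remaining alternatives, so the continuation is unique in every case.

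I expect the main obstacle to be precisely this degenerate set $\{\dot X_n=0\}\cap\Sigma$: because the discontinuity is tangential, the direction in which the trajectory leaves $\Sigma$ is invisible at first order and must be recovered from the upstream dynamics through the sign-propagation results \Cref{lemma11} and \Cref{lemma12}, while simultaneously ruling out a repulsive sliding configuration that would break right uniqueness. A secondary technical point is establishing right uniqueness on the constant-input pieces from monotonicity alone, since the $f_i$ are assumed only continuous, which I would handle through the comparison estimates of \Cref{lemma5}.
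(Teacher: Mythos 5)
Your proposal is correct and follows essentially the same route as the paper: the same Filippov convexification $\{f(\lambda k_1,X,\mu,\nu_n):\lambda\in[0,1]\}$ on $\{X_n=K\}$ with existence from the basic conditions, and right uniqueness obtained from the sign-propagation results (\Cref{lemma11}, \Cref{lemma12}) via the same trichotomy at the switching surface — transversal crossing when $\dot X_n\neq 0$, upstream sign propagation when some $\dot X_m\neq 0$ with $m\geq 2$, and the sliding-mode versus forced-departure dichotomy in the fully degenerate case according to whether the required input $\alpha k_1$ lies in $[0,k_1]$. Your additions (the telescoped bound $\tfrac{d}{dt}\sum_i X_i\leq k_1$ for global existence, and the comparison-based argument for uniqueness on the constant-input pieces) merely elaborate points the paper treats tersely, so the two proofs are in substance the same.
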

\begin{proof}{
Consider the differential inclusion
\begin{align*}
\frac{dX}{dt}\in\mathbf F(t,X)
\end{align*}with
\begin{align*}
\mathbf F(t,X):=\begin{cases}
\big\{
f(k_1,X,\mu,\nu_n)
\big\}&\text{if }X_n< K\\
\big\{
f(0,X,\mu,\nu_n)
\big\}&\text{if }K<X_n\\
\big\{
f(\alpha\cdot k_1,X,\mu,\nu_n)\colon \alpha\in[0,1]
\big\}&\text{if }X_n=K
\end{cases}.
\end{align*}
For every $(t,X)\in[0,\infty)\times\mathbb R^n_+$, $\mathbf F(t,X)$ satisfies the basic conditions \cite{filippov1988differential}: $\mathbf F(t,X)$ is nonempty, bounded, closed, convex and upper semi-continuous in $(t,X)$ according to Lemma 3, p. 67 of \cite{filippov1988differential}. Then, by Theorem 1 and Theorem 2, pp. 77-78 of \cite{filippov1988differential}, there exists an absolutely continuous function $X$ that satisfies \eqref{switch_sys} for a.e. $t\in[t_0,\infty)$. 

Right uniqueness follows from the fact that for every $t\in[t_0,\infty)$ the derivative of an absolutely continuous solution can only take a single value which is given in agreement with the differential inclusion. Indeed, the derivative is uniquely determined in a neighborhood of any $t$ such that $X_n(t)\not=K$. 

On the other hand, if $X_n(t')=K$ and there is $m\in\{2,3,\dots,n\}$ such that $\dot X_m(t')\not=0$, by \Cref{lemma12}, there is only one valid definition for $\dot X$ in $(t',t'+\varepsilon)$, because the options with input $k_1$ and $0$ has the same sign in $\dot X_n$ in $(t',t'+\varepsilon)$, and therefore, one of this will fail to the restriction regarding the value of $X_n$ with respect to $K$. And the option with an input $\alpha\cdot k_1$ can only be taken in a complete interval when the system has reached an equilibrium, which cannot be the case since we are supposing $\dot X_m(t')\not=0$.

Finally, if $X_n(t')=K$ and $\dot X_i(t')=0$ for every $i\in\{2,3,\dots,n\}$, then the system takes the value $Y^*$ satisfying $Y_n^*=K$, which corresponds to the equilibrium of a system with input $\alpha\cdot k_1$ for some $\alpha\in\mathbb R$. If $1\leq\alpha$, then $\dot X_1(t')\leq0$ according to \Cref{lemma2}. Then, by \Cref{lemma11}, $X_n(t)\leq K$ for $t\in(t',t'+\varepsilon)$ with any of the inputs $k_1$ or 0. Therefore, the system takes the mode with input $k_1$. If $\alpha\in(0,1)$, the options with inputs $k_1$ an{\color{black}d} 0 fail because
\begin{align*}
0&=\alpha\cdot k_1-f_1(Y^*_1,Y^*_2)-\mu\cdot Y^*_1\\
&< k_1-f_1(Y^*_1,Y^*_2)-\mu\cdot Y^*_1,
\end{align*}
\begin{align*}
-f_1(Y^*_1,Y^*_2)-\mu\cdot Y^*_1=-\alpha\cdot k_1<0,
\end{align*}which, according to \Cref{lemma12}, means for the system with input $k_1$ that $K< X_n(t)$ for $t\in(t',t'+\varepsilon)$ and for the system with input $0$ that $X_n(t)< K$ for $t\in(t',t'+\varepsilon)$, which contradicts the inclusion. Therefore, the systems remains at equilibrium in \emph{sliding mode}.
}\end{proof}

{\color{black}In the preceding \Cref{lemma_uniqueness}, notice that satisfying the basic conditions given in \cite{filippov1988differential} for every $(t,X)\in[0,\infty)\times\mathbb R^n_+$ allows to define for every $t\in[0,\infty)$ a solution of the differential inclusion
\begin{align*}
\frac{dX}{dt}\in\mathbf F(t,X).
\end{align*}}Finally, we have the elements to prove \Cref{proposition1}. 

\begin{proof}[Proof of \Cref{proposition1}]{
If there exists $t^*\geq t_0$ such that $X_i(t^*)=Y_i^*$ for ever{\color{black}y} $i=1,2,\dots,n$, then the {\color{black}switching} system \eqref{end_product_feedback} remains at the sliding mode, i.e.,  $X_i(t)=Y_i^*$ for all $t\geq t^*$ (see the proof of \Cref{lemma_uniqueness}). 

To continue with the proof, without loss of generality, we assume that $X(t)\not=Y^*$ for every $t\geq t_0$. Consider the ODE systems 
\begin{align}
\label{input_on_P1}
\frac{d\widetilde X}{dt}&=f(k_1,\widetilde X,\mu,\nu_n),\\
\label{input_on_K}
\frac{dY}{dt}&=f(\alpha\cdot k_1,Y,\mu,\nu_n),\\
\label{sys_inter}
\frac{d\mathcal X}{dt}&=f(\mathbf I,\mathcal X,\mu,\nu_n),\\
\label{input_off}
\frac{dZ}{dt}&=f(0,Z,\mu,\nu_n),
\end{align}where, in agreement with hypothesis \eqref{hyp_alpha}, 
$$\alpha:=\frac{1}{k_1}\big(\sum_{i=1}^{n-1}\mu\cdot Y_i^* + \mu\cdot K + \nu_nf_n(K)\big)<1$$and, by virtue of \Cref{lemma9}, $\mathbf I\in(\alpha\cdot k_1,k_1)$ is an input such that system \eqref{sys_inter} has an equilibrium point $\mathcal X^*:=(\mathcal X_1^*,\dots,\mathcal X_n^*)$ satisfying $Y_i^*<\mathcal X_i^*$ for all $i=1,2,\dots,n-1$ and 
\begin{align}\label{K_X_n}
K<\mathcal X_n^*.
\end{align}

Suppose that $X_n(t_0)=K$. Since $X(t_0)\not=Y^*$, as a consequence of \Cref{lemma11} and \Cref{lemma_uniqueness}, $X_n$ either increase{\color{black}s} or decrease{\color{black}s} after $t_0$. We will suppose that it decreases, i.e. $X_n(t)<K$ for $t\in(t_0,t_0+\varepsilon)$ and we will reset the initial condition in such way that $X_n(t_0)<K$ to continue with the demonstration. The reciprocal case when $X_n$ increases after $t_0$ can be proved analogously.

Hence, let $X_n(t_0)<K$ and consider system \eqref{input_on_P1} and \eqref{sys_inter} with the same initial conditions, i.e., $\widetilde X(t_0)=\mathcal X(t_0)=X(t_0)$. According to \Cref{lemma1} the equilibrium point $\mathcal X^*$ is GUAS. Then, \eqref{K_X_n} implies that there exists $t'>t_0$ such that $$K<\mathcal X_n(t').$$
On the other hand, since $\mathbf I<k_1$, according to \Cref{lemma5}, $\widetilde X$ {\color{black}upper} bounds $\mathcal X$. Then, $$K<\mathcal X_n(t')<\widetilde X_n(t')$$and we can assure that there exist{\color{black}s} $t_1\in(t_0,t{\color{black}'})$ such that 
\begin{align}\notag
\widetilde X_n(t)&<K&\forall t\in(t_0,t_1),\\\notag
\widetilde X_n(t_1)&=K,\\
\label{X_n_K}
\widetilde X_n(t)&>K&\forall t\in(t_1,t').
\end{align}
Hence, according to \Cref{lemma11} and \Cref{lemma_uniqueness},
\begin{align*}
X_n(t)&=\widetilde X_n(t)&\text{for a.e. }t\in[t_0,t_1].
\end{align*}Moreover, by continuity,
\begin{align*}
X_n(t_1)&=\widetilde X_n(t_1)=K,\\
\dot X_n(t_1)&=\dot{\widetilde X}_n(t_1)\geq0.
\end{align*}

Now consider system \eqref{input_off} with initial condition $Z(t_1)=X(t_1)$. Since we have supposed that $X(t)\not=Y^*$ for every $t\geq t_0$, it follows by \Cref{lemma11}, \Cref{lemma_uniqueness} and the inequality \eqref{X_n_K} that 
\begin{align*}
K&<Z_n(t)&\forall t\in(t_1,t_1+\varepsilon),\\
X_n(t)&=Z_n(t)&\text{for a.e. } t\in[t_1,t_1+\varepsilon],
\end{align*}for some $\varepsilon>0$. In other words, system \eqref{end_product_feedback} has switched at $t_1$ and follows the dynamics of system \eqref{input_off} in an interval $[t_1,t_1+\varepsilon]$.

Using similar arguments for the GUAS equilibrium point of system \eqref{input_off}, $\bar 0$, it can be proved that system \eqref{end_product_feedback} switches at a point $t_2>t_1$ and returns to the dynamics of \eqref{input_on_P1}. The oscillatory behavior of $X_i$ around $Y_i^*$ for $i=1,2,\dots,n-1$ can be proved by induction using as induction hypothesis that $X_j$ oscillates around $Y_j^*$ for every $j=i+1,\dots,n-1$ and $X_n$ oscillates around $K$.
}
\end{proof}

\subsection{Stable system}

In this Section, we present and prove \Cref{proposition2} that states some conditions under which the {\color{black}switching} system \eqref{end_product_feedback_st} converges uniformly and asymptotically to an equilibrium point. 

The {\color{black}switching} systems \eqref{end_product_feedback} and \eqref{end_product_feedback_st} are the same, but in \Cref{proposition2} we consider that the constant input ON is equal to or lower than the threshold value also considered in \Cref{proposition1}. Or that this threshold is not defined (because there is not a positive sliding mode) and that the system with the constant input ON has a nonnegative equilibrium point.

\begin{theorem}\label{proposition2}
Under Assumption \ref{assumption1}, consider the {\color{black}switching} system
\begin{align}\label{end_product_feedback_st}
\frac{dX}{dt}&=f(u[X_n],X,\mu,\nu_n)
\end{align}with initial conditions $X_i(t_0)\geq0$, $i=1,2,\dots,n$, $K>0$, $k_1>0$, $\mu\geq0$, $\nu_n\geq0$, $\mu+\nu_n>0$ and the input $u[X_n]$ defined in \eqref{input}. {\color{black}Consider the system with constant input ON \begin{align}\label{input_on}
\frac{d\widetilde X}{dt}&=f(k_1,\widetilde X,\mu,\nu_n).
\end{align}}

Suppose that one of the following conditions is satisfied:
\begin{itemize}
\item[i)] There exist positive values $Y_1^*,Y_2^*,\dots,Y_{n-1}^*$ such that 
\begin{align*}
0&=f_1(Y_1^*,Y_2^*)-f_2(Y_2^*,Y_3^*)-\mu\cdot Y_2^*\\\notag
0&=f_2(Y_2^*,Y_3^*)-f_3(Y_3^*,Y_4^*)-\mu\cdot Y_3^*\\\notag
\vdots\\\notag
0&=f_{n-1}(Y_{n-1}^*,K)-\nu_nf_n(K)-\mu\cdot K\notag,
\end{align*}and $$k_1\leq\sum_{i=1}^{n-1}\mu\cdot Y_i^* + \mu\cdot K + \nu_nf_n(K).$$
 \item[ii)] There are not positive values $Y_1^*,Y_2^*,\dots,Y_{n-1}^*$ such that 
\begin{align*}
0&=f_1(Y_1^*,Y_2^*)-f_2(Y_2^*,Y_3^*)-\mu\cdot Y_2^*\\\notag
0&=f_2(Y_2^*,Y_3^*)-f_3(Y_3^*,Y_4^*)-\mu\cdot Y_3^*\\\notag
\vdots\\\notag
0&=f_{n-1}(Y_{n-1}^*,K)-\nu_nf_n(K)-\mu\cdot K\notag,
\end{align*}and the system with constant input
{\color{black}\eqref{input_on}} has a nonnegative equilibrium point $\widetilde X^*:=(\widetilde X_1^*,\widetilde X_2^*,\dots,\widetilde X_n^*)$.\\
\end{itemize}

Then, there is an absolutely continuous solution $X=(X_1,X_2,\dots,X_n)$ that satisfies equation \eqref{end_product_feedback_st} for a.e. $t\in(t_0,\infty)$ and right uniqueness holds in $[t_0,\infty)\times[0,\infty)^n$. 

{\color{black}Furthermore, supposing that i) is satisfied,} if there is $t^*\geq t_0$ such that $X_i(t^*)=Y_i^*$ for every $i=1,2,\dots,n-1$ and $X_n(t^*)=K$, then the {\color{black}switching} system \eqref{end_product_feedback_st} remains at sliding mode, i.e., $X_i(t)=Y_i^*$ for every $i=1,2,\dots,n-1$ and $X_n(t)=K$ for all $t\geq t^*$. Otherwise, {\color{black}the system with constant input
{\color{black}\eqref{input_on}} has a nonnegative equilibrium point $\widetilde X^*:=(\widetilde X_1^*,\widetilde X_2^*,\dots,\widetilde X_n^*)$ and} $\widetilde X^*$ is globally u{\color{black}n}iformly asymptotically stable (GUAS) for the {\color{black}switching} system \eqref{end_product_feedback_st}. 

{\color{black}On the other hand, if ii) is satisfied, $\widetilde X^*$ is globally u{\color{black}n}iformly asymptotically stable (GUAS) for the {\color{black}switching} system \eqref{end_product_feedback_st}.}

\end{theorem}

{\color{black}
\begin{note}
Under the conditions of \Cref{proposition2}, the {\color{black}switching} system \eqref{end_product_feedback_st} can converge to the equilibrium point $Y^*:=(Y_1^*, Y_2^*,\dots, Y^*_{n-1}, K)$ corresponding to the sliding mode in two different ways. In one case, it converges in finite time to $Y^*$ when there is $t^*\geq t_0$ such that $X_i(t^*)=Y_i^*$ for all $i=1,2,\dots,n-1$ and $X_n(t^*)=K$ (we then say that it remains at sliding mode). In the other case, it converges asymptotically to $Y^*$ when $k_1=\sum_{i=1}^{n-1}\mu\cdot Y_i^* + \mu\cdot K + \nu_nf_n(K)$, because this implies that $Y^*=\widetilde X^*$, where $\widetilde X^*$ is the equilibrium point of the system with the constant input ON \eqref{input_on}.
\end{note}}

The proof of \Cref{proposition2} is given at the end of this section. The following \Cref{lemma10} allows to prove in \Cref{proposition2} that the system of the form \eqref{ode_meta_path} and the constant input ON has an equilibrium point lower or equal (entry by entry) to the equilibrium point corresponding to the sliding mode of the {\color{black}switching} system \eqref{end_product_feedback_st}. 

On the other hand, in \Cref{lemma8} it is shown that the {\color{black}switching} system is bounded by any system of the form \eqref{ode_meta_path} with constant input larger or equal to the constant input ON. {\color{black}The proofs of all Lemmas are in \Cref{supp_mat}.}

\begin{lemma}\label{lemma10}
Under Assumption \ref{assumption1}, suppose that there is a positive vector $X^*:=(X_1^*,X_2^*,\dots,X_n^*)$ ($0< X_i^*$ for all $i=1,\dots,n$) such that
\begin{align*}
0&=f_1(X^*_1,X^*_2)-f_2(X^*_2,X^*_3)-\mu\cdot X^*_2\\
0&=f_2(X^*_2,X^*_3)-f_2(X^*_3,X^*_4)-\mu\cdot X^*_3\\
\vdots\\\notag
0&=f_{n-1}(X^*_{n-1},X^*_n)-\nu_nf_n(X^*_n)-\mu\cdot X^*_n,
\end{align*}with $\mu\geq0$, $\nu_n\geq0$.

Then, for any $X_n'\in(0,X_n^*)$, there are unique $X_1',X_2',\dots,X_{n-1}'$ such that $0\leq X_i'<X_i^*$, for all $i=1,2,\dots,n$, and
\begin{align*}
0&=f_1(X'_1,X'_2)-f_2(X'_2,X'_3)-\mu\cdot X'_2\\
0&=f_2(X'_2,X'_3)-f_2(X'_3,X'_4)-\mu\cdot X'_3\\
\vdots\\\notag
0&=f_{n-1}(X'_{n-1},X'_n)-\nu_nf_n(X'_n)-\mu\cdot X'_n,
\end{align*}
\end{lemma}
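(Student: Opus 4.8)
The plan is to decouple the $n-1$ equilibrium equations by replacing them with their cumulative sums. For each $j=1,\dots,n-1$, summing the given equations from index $j+1$ up to $n$ makes the intermediate flux terms telescope, leaving the single relation
\begin{align}\label{star_j}
f_j(X_j,X_{j+1})=\nu_nf_n(X_n)+\mu\sum_{i=j+1}^{n}X_i .
\end{align}
The family \eqref{star_j}, $j=1,\dots,n-1$, is equivalent to the original system (the equation for component $j$ is recovered as the difference of the $(j-1)$-th and $j$-th cumulative relations, and the $n$-th equation is \eqref{star_j} with $j=n-1$), and the hypothesis guarantees that $X^*$ satisfies every \eqref{star_j}. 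The advantage is that, reading \eqref{star_j} from $j=n-1$ down to $j=1$, the unknown $X_j$ enters only through $f_j(\cdot,X_{j+1})$ once $X_{j+1},\dots,X_n$ are fixed; since $f_j$ is strictly increasing in its first argument (Assumption \ref{assumption1}(ii)), this pins down $X_j$ uniquely.

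I would then run a backward induction on $j$. Fix $X_n'\in(0,X_n^*)$, and suppose $X_{j+1}',\dots,X_{n-1}'$ have already been constructed with $0\le X_i'\le X_i^*$, keeping in mind that $X_n'<X_n^*$ strictly. Write $R_j'$ and $R_j^*$ for the right-hand sides of \eqref{star_j} at the primed and the starred values. Because every metabolite term is nonnegative ($f_n(X_n')\ge f_n(0)=0$ by Assumption \ref{assumption1}(vi) and strict monotonicity), we get $R_j'\ge0$; because $X_i'\le X_i^*$ for all $i>j$, we get $R_j'\le R_j^*$. Evaluating $f_j(\cdot,X_{j+1}')$ at the endpoints of $[0,X_j^*]$ gives $f_j(0,X_{j+1}')\le0\le R_j'$ by Assumption \ref{assumption1}(iv) on the left, and $R_j'\le R_j^*=f_j(X_j^*,X_{j+1}^*)\le f_j(X_j^*,X_{j+1}')$ on the right, the last step using that $f_j$ is decreasing in its second argument (Assumption \ref{assumption1}(iii)) together with $X_{j+1}'\le X_{j+1}^*$. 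Hence the intermediate value theorem produces a root $X_j'\in[0,X_j^*]$ of \eqref{star_j}, and strict monotonicity in the first argument makes it unique.

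For the strict bound $X_j'<X_j^*$, I would exploit that $X_n'$ sits inside every cumulative sum and inside $f_n$: since $X_n'<X_n^*$ strictly, the comparison $R_j'<R_j^*$ becomes strict as soon as $\mu+\nu_n>0$ (the $\nu_nf_n$ term is strictly smaller when $\nu_n>0$, and the $\mu$-sum is strictly smaller when $\mu>0$ since it contains $X_n'<X_n^*$). Then $f_j(X_j',X_{j+1}')=R_j'<R_j^*\le f_j(X_j^*,X_{j+1}')$ forces $X_j'<X_j^*$ by strict monotonicity in the first slot. Iterating from $j=n-1$ down to $j=1$ then yields $X_1',\dots,X_{n-1}'$ with $0\le X_i'<X_i^*$, together with uniqueness at every step, which is the claim.

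The step I expect to be most delicate is the strictness, and with it the precise role of the standing hypotheses. Existence and the weak bounds $X_i'\le X_i^*$ go through for $\mu,\nu_n\ge0$, but the strict conclusion genuinely requires $\mu+\nu_n>0$: if $\mu=\nu_n=0$ then every $R_j'=R_j^*=0$, and since Assumption \ref{assumption1}(iii) only provides non-strict monotonicity in the second argument, one cannot exclude $X_j'=X_j^*$. I would therefore carry $\mu+\nu_n>0$ (the standing assumption in \Cref{proposition2}, which invokes this lemma) as the condition that activates strictness, and take care to use the two-sided monotonicity with the correct orientation throughout — increasing in the first slot to locate and separate the root, decreasing in the second slot to bracket the target $R_j'$ between $f_j(0,X_{j+1}')$ and $f_j(X_j^*,X_{j+1}')$.
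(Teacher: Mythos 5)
Your proof is correct and takes essentially the same route as the paper's: a backward induction from the last equation, bracketing each root between $f_j(0,X_{j+1}')\leq 0$ (Assumption \ref{assumption1}(iv)) and $f_j(X_j^*,X_{j+1}')$ (monotonicity in the second slot), then pinning it down uniquely by strict monotonicity in the first slot — your telescoped right-hand sides $R_j'$ are exactly the quantities $f_{j+1}(X_{j+1}',X_{j+2}')+\mu X_{j+1}'$ that the paper's recursion produces. Your remark that the strict bounds $X_i'<X_i^*$ genuinely require $\mu+\nu_n>0$ is a refinement worth keeping: the paper's proof invokes the strict inequalities $0<\nu_nf_n(X_n')+\mu X_n'<f_{n-1}(X_{n-1}^*,X_n')$, which fail when $\mu=\nu_n=0$, even though the lemma's statement only assumes $\mu,\nu_n\geq0$ (the hypothesis $\mu+\nu_n>0$ is available where the lemma is applied, in \Cref{proposition2}).
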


\begin{lemma}\label{lemma8}
Under Assumption \ref{assumption1}, consider the {\color{black}switching} system
\begin{align*}
\frac{dX}{dt}&=f(u[X_n],X,\mu,\nu_n),
\end{align*}with initial conditions $X_i(t_0)\geq0$, $i=1,2,\dots,n$, $K>0$, $k_1>0$, $\mu\geq0$, $\nu_n\geq0$ and the input $u[X_n]$ defined in \eqref{input}. 

Then, the system 
\begin{align*}
\frac{d\widetilde X}{dt}&=f(\mathbf I,\widetilde X,\mu,\nu_n),
\end{align*}with constant input $\mathbf I\geq k_1$ and initial conditions $\widetilde X(t_0)=X(t_0)$, {\color{black}is an upper bound of the switching system, i.e.,}
\begin{align*}
X_i(t)&\leq\widetilde X_i(t)&\forall t_0\leq t, \forall i=1,2,\dots,n.
\end{align*}

Moreover, the system 
\begin{align*}
\frac{dZ}{dt}&=f(0,Z,\mu,\nu_n),
\end{align*}with constant input $0< k_1$ and initial conditions $Z(t_0)=X(t_0)$, {\color{black}is a lower bound of the switching} system, i.e.,
\begin{align*}
Z_i(t)&\leq X_i(t)&\forall t_0\leq t, \forall i=1,2,\dots,n.
\end{align*}
\end{lemma}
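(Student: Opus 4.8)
The plan is to exploit a monotonicity structure that the paper has not yet named explicitly but which is forced by Assumption~\ref{assumption1}: each constant-input field $f(\mathbf I,\cdot,\mu,\nu_n)$ is \emph{quasimonotone nondecreasing} (cooperative), because its tridiagonal coupling passes through the $f_i$, which are increasing in the first argument and decreasing in the second. I would first verify the Kamke condition directly from continuity and monotonicity: if $x\leq y$ componentwise and $x_j=y_j$ for some index $j$, then $[f(\mathbf I,x,\mu,\nu_n)]_j\leq[f(\mathbf I,y,\mu,\nu_n)]_j$. In coordinate $j$ only the neighbours $x_{j-1}\leq y_{j-1}$ and $x_{j+1}\leq y_{j+1}$ enter besides $x_j=y_j$, and items (ii)--(iii) of Assumption~\ref{assumption1} make $f_{j-1}(\cdot,x_j)$ increase and $-f_j(x_j,\cdot)$ increase in exactly those neighbours, while the $-\mu x_j$ and $-\nu_nf_n(x_n)$ terms are unchanged when $x_j=y_j$. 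The identical computation applies to the input $0$.

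Next I would record the decisive one-sided differential inequality. Let $X$ be the absolutely continuous Filippov solution of the switched system; at almost every $t$ its derivative is a selection from $\mathbf F(t,X)$ whose admissible input value lies in $[0,k_1]$ in every mode, including the convexified sliding value $\alpha k_1$. Since the input enters only the first coordinate and $0\leq u[X_n]\leq k_1\leq\mathbf I$, I obtain for the constant-$\mathbf I$ field the coordinatewise inequality $\dot X\leq f(\mathbf I,X,\mu,\nu_n)$ a.e.\ (equality in coordinates $2,\dots,n$, and a nonpositive shift $u[X_n]-\mathbf I$ in the first), and symmetrically $\dot X\geq f(0,X,\mu,\nu_n)$ a.e.\ (a nonnegative shift $u[X_n]$ in the first coordinate). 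Thus $X$ is an absolutely continuous subsolution of $\dot{\widetilde X}=f(\mathbf I,\widetilde X,\mu,\nu_n)$ and a supersolution of $\dot Z=f(0,Z,\mu,\nu_n)$, with matching initial data $X(t_0)=\widetilde X(t_0)=Z(t_0)$.

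The conclusion then follows from the comparison principle for quasimonotone systems. For the upper bound I would use the standard first-contact/perturbation method: compare $X$ with the solution $\widetilde X^{\varepsilon}$ of the strictly enlarged field $f(\mathbf I,\cdot,\mu,\nu_n)+\varepsilon\,(1,\dots,1)^{\top}$. Quasimonotonicity forbids any coordinate of $X$ from touching the corresponding coordinate of $\widetilde X^{\varepsilon}$ from below, since at a first contact time $\tau$ in coordinate $j$ one would have $X(\tau)\leq\widetilde X^{\varepsilon}(\tau)$ with $X_j(\tau)=\widetilde X^{\varepsilon}_j(\tau)$, whence $\dot X_j(\tau)\leq[f(\mathbf I,X(\tau),\mu,\nu_n)]_j\leq[f(\mathbf I,\widetilde X^{\varepsilon}(\tau),\mu,\nu_n)]_j<\dot{\widetilde X}^{\varepsilon}_j(\tau)$, a contradiction. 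Hence $X<\widetilde X^{\varepsilon}$ on $[t_0,\infty)$, and letting $\varepsilon\to0$ gives $X\leq\widetilde X$; the lower bound $Z\leq X$ is the mirror-image argument. Equivalently, since between consecutive switches the switched input is constant and bounded above by $\mathbf I$ (respectively below by $0$), one may instead glue the strict inequalities of \Cref{lemma5} interval by interval and pass to the limit by continuity.

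I expect the main obstacle to be precisely the irregular switching: the along-trajectory input $u[X_n(t)]$ is discontinuous and, in the oscillatory regime, may switch infinitely often, so a naive interval-by-interval use of \Cref{lemma5} risks an accumulation of switching times. The cooperative structure is what dissolves this difficulty, because the one-sided inequalities $\dot X\leq f(\mathbf I,X,\mu,\nu_n)$ and $\dot X\geq f(0,X,\mu,\nu_n)$ hold at almost every $t$ irrespective of the switching pattern, so the comparison principle applies globally on $[t_0,\infty)$ with no bookkeeping of the switching set. A secondary technical point is that the $f_i$ are only assumed continuous, not Lipschitz, which is exactly why I route the argument through the strictly perturbed supersolution $\widetilde X^{\varepsilon}$ rather than appealing to uniqueness directly.
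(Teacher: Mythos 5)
Your proposal is correct, but it takes a genuinely different route from the paper's proof, which is in fact the alternative you sketch in your closing sentences: the paper partitions $[t_0,\infty)$ into maximal intervals on which the switched system is under a single regime, applies the comparison of \Cref{lemma5} on each such interval, and propagates the bound across switching times by induction and by continuity of $X$ and $\widetilde X$. Your argument instead makes explicit the cooperative (Kamke/quasimonotone) structure forced by Assumption~\ref{assumption1}, observes that the Filippov solution is, a.e.\ and irrespective of mode (ON, OFF, or sliding with effective input $\alpha k_1\in[0,k_1]$), a subsolution of the $\mathbf I$-field and a supersolution of the $0$-field, and concludes by a global comparison principle through the strictly perturbed supersolution $\widetilde X^{\varepsilon}$. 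What your route buys is exactly what you identify: it is insensitive to the switching pattern, so it covers sliding intervals (where the input is $\alpha k_1$, neither $k_1$ nor $0$) and a possible accumulation of switching times, both of which the paper's interval-by-interval induction (written as an induction over a finite set $\{0,1,\dots,k\}$) glosses over. What it costs is a technical point you do not fully discharge: since $X$ is only absolutely continuous and $\dot X\leq f(\mathbf I,X,\mu,\nu_n)$ holds only a.e., you cannot evaluate $\dot X_j$ at the first-contact time $\tau$ itself; the standard repair is to note that $X_j(t_k)>\widetilde X^{\varepsilon}_j(t_k)$ for $t_k\downarrow\tau$ together with absolute continuity forces sets of positive measure in $(\tau,t_k)$ on which $\dot X_j>\dot{\widetilde X}^{\varepsilon}_j$ and the a.e.\ inequality holds simultaneously, and then to let these times tend to $\tau$, using continuity of $f$, $X$ and $\widetilde X^{\varepsilon}$, to contradict quasimonotonicity at $\tau$. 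Two smaller items should also be recorded: global existence of $\widetilde X^{\varepsilon}$ (positive-orthant invariance of the perturbed field together with the at-most-linear growth of $\sum_i X_i$ excludes blow-up), and the identification of $\lim_{\varepsilon\to0}\widetilde X^{\varepsilon}$: because the $f_i$ are merely continuous, this limit is the maximal solution of the $\mathbf I$-system, so your bound is against that solution --- consistent with the paper's implicit treatment of $\widetilde X$ as ``the'' solution of the constant-input system, but worth stating.
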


\begin{note}
\Cref{lemma8} can also be applied to the {\color{black}switching} system of \Cref{proposition1}.
\end{note}

There are now all the components necessaries to prove \Cref{proposition2}.

\begin{proof}[Proof of \Cref{proposition2}]{
The existence and uniqueness of the solution follow by \Cref{lemma_uniqueness}. To prove that $\widetilde X^*$ is GUAS for the {\color{black}switching} system \eqref{end_product_feedback_st}, we first consider the case where \emph{i)} is satisfied. Define 
$$\alpha:=\frac{1}{k_1}\big(\sum_{i=1}^{n-1}\mu\cdot Y_i^* + \mu\cdot K + \nu_nf_n(K)\big).$$ Notice that $1\leq\alpha$. First suppose that $1<\alpha$ and consider the system
\begin{align}\label{input_on_K_st}
\frac{dY}{dt}&=f(\alpha\cdot k_1,Y,\mu,\nu_n),
\end{align}which has the positive equilibrium point $Y^*:=(Y_1^*,Y_2^*,\dots,Y_{n-1}^*,K)$.  
By \Cref{lemma1}, $Y^*$ is globally uniformly asymptotically stable (GUAS) for \eqref{input_on_K_st}. Moreover, system \eqref{input_on} has also an equilibrium point $\widetilde X^*$ according to \Cref{lemma10}, because $k_1<\alpha\cdot k_1$. Moreover, by \Cref{lemma5},
\begin{align*}
\widetilde X_i(t)&<Y_i(t)&\forall t_0\leq t,\forall i=1,2,\dots,n,
\end{align*}$\widetilde X^*$ is a GUAS equilibrium point for system \eqref{input_on} by \Cref{lemma1} and $\widetilde X_n^*< K$ by \Cref{lemma2}. Then,there exists $T\geq t_0$ large enough such that 
\begin{align*}
\widetilde X_n(t)&<K&\forall t\geq T.
\end{align*}

On the other hand, according to \Cref{lemma8}, the {\color{black}switching} system \eqref{end_product_feedback_st} is bounded by the system with constant input \eqref{input_on}. That is to say,
\begin{align*}
X_i(t)&\leq\widetilde X_i(t)&\forall t_0\leq t, \forall i=1,2,\dots,n.
\end{align*}Therefore, 
\begin{align*}
X_i(t)&\leq\widetilde X_n(t)<K&\forall T\leq t.
\end{align*}

Then, the {\color{black}switching} system \eqref{end_product_feedback_st} is in the regime of the system with constant input \eqref{input_on} in the interval $[T,\infty)$. 
Therefore, $\widetilde X^*$ is a GUAS equilibrium point for the {\color{black}switching} system \eqref{end_product_feedback_st}.

For the case $\alpha=1$, suppose there is not $t'\geq t_0$ such that $X_n(t)\leq K$ for all $t\geq t'$. Then it can be proved that the {\color{black}switching} system \eqref{end_product_feedback_st} oscillates around the sliding mode $(Y_1^*,Y_2^*,\dots,Y_{n-1}^*,K)$ (in a similar way as in \Cref{proposition1}). This implies the existence of $t''\geq t_0$ such that $X_i(t'')\leq Y_i^*$ for every $i=1,2,\dots,n-1$ and $X_n(t'')\leq K$. But according to \Cref{lemma4}, system \eqref{input_on} is invariant in $[0,Y_1^*]\times[0,Y_2^*]\times\dots\times[0,Y_{n-1}^*]\times[0,K]$, because $\widetilde X_i^*=Y_i^*$ for $i=1,2,\dots,n-1$ and $\widetilde X^*=K$. Hence, $X_n(t)\leq K$ for every $t\geq t''$, which contradicts our supposition. 

We conclude that, when $\alpha=1$, there is $t'\geq t_0$ such that $X_n(t)\leq K$ for all $t\geq t'$. Then the switches system \eqref{end_product_feedback_st} follows the regime of system \eqref{input_on} and converges to the sliding mode $(Y_1^*,Y_2^*,\dots,Y_{n-1}^*,K)$, which is equal to the equilibrium point $\widetilde X^*$ of system \eqref{input_on}.

Now suppose that point \emph{i)} is not satisfied. Then, point \emph{ii)} holds by hypothesis. Moreover, $\widetilde X_n^*< K$ according to \Cref{lemma10}. Hence, there exists $T\geq t_0$ large enough such that $\widetilde X_n(t)<K$ for every $t\geq T$. But, by \Cref{lemma8}, the {\color{black}switching} system \eqref{end_product_feedback_st} is bounded by the system with constant input $k_1$ \eqref{input_on}, i.e.
\begin{align*}
X_i(t)&\leq\widetilde X_i(t) &\forall t_0\leq t.
\end{align*}In particular,
\begin{align*}
X_n(t)&\leq\widetilde X_n(t)<K &\forall T\leq t.
\end{align*}
Therefore, the {\color{black}switching} system \eqref{end_product_feedback_st} has the dynamics of system \eqref{input_on} in the interval $[T,\infty)$ and converges uniformly and asymptotically to $\widetilde X^*$.
}
\end{proof}

{\color{black}
\begin{note}
In this section we have presented results for a system where all the states have the same decay rate. A generalization of these results can be achieved considering different decay rates. This can be useful in the context of gene networks, where very distinct degradation rates are involved \cite{farcot2009periodic, farcot2010limit}.

We can then enunciate a more general result as follows:

\begin{theorem}\label{th_diff_decay}
Under Assumption \ref{assumption1}, consider the system
\begin{align}\label{diff_decay}
\frac{dX_1}{dt}&=u[X_n]-f_1(X_1,X_2)-\mu_1\cdot X_1\\\notag
\frac{dX_2}{dt}&=f_1(X_1,X_2)-f_2(X_2,X_3)-\mu_2\cdot X_2\\\notag
\vdots\\\notag
\frac{dX_n}{dt}&=f_{n-1}(X_{n-1},X_n)-\nu_nf_n(X_n)-\mu_n\cdot X_n\notag,
\end{align}with initial conditions $X_i(t_0)\geq0$, $i=1,2,\dots,n$, $K>0$, $k_1>0$, $\mu_i\geq0$ for all $i=1,2,\dots,n$, $\nu_n\geq0$, $\mu_n+\nu_n>0$ and the input $u[X_n]$ defined in \eqref{input}. 

Suppose that there exist positive values $Y_1^*,Y_2^*,\dots,Y_{n-1}^*$ such that 
\begin{align}\label{diff_decay_Y}
0&=f_1(Y_1^*,Y_2^*)-f_2(Y_2^*,Y_3^*)-\mu_2\cdot Y_2^*\\\notag
0&=f_2(Y_2^*,Y_3^*)-f_3(Y_3^*,Y_4^*)-\mu_3\cdot Y_3^*\\\notag
\vdots\\\notag
0&=f_{n-1}(Y_{n-1}^*,K)-\nu_nf_n(K)-\mu_n\cdot K\notag
\end{align}
and define 
\begin{align*}
\alpha:=\frac{\sum_{i=1}^{n-1}\mu_i\cdot Y_i^* + \mu_n\cdot K + \nu_nf_n(K)}{k_1}.
\end{align*}

Then, there is an absolutely continuous function $X=(X_1,X_2,\dots,X_n)$ that satisfies equation \eqref{diff_decay} for a.e. $t\in[t_0,\infty)$ and right uniqueness holds in $[t_0,\infty)\times[0,\infty)^n$. 

Furthermore, if there is $t^*\geq t_0$ such that $X_i(t^*)=Y_i^*$ for every $i=1,2,\dots,n-1$ and $X_n(t^*)=K$, then the {\color{black}switching} system \eqref{diff_decay} remains at sliding mode, i.e., for all $t\geq t^*$, $X_i(t)=Y_i^*$ for every $i=1,2,\dots,n-1$ and $X_n(t)=K$.

Otherwise, 
\begin{itemize}
\item if $\alpha<1$, the {\color{black}switching} system \eqref{diff_decay} oscillates around $(Y_1^*,Y_2^*,\dots,Y_{n-1}^*,K)$, 
\item if $1\leq\alpha$, the system with constant input
\begin{align}\label{diff_decay_k_1}
\frac{d\widetilde X_1}{dt}&=k_1-f_1(\widetilde X_1,\widetilde X_2)-\mu_1\cdot \widetilde X_1\\\notag
\frac{d\widetilde X_2}{dt}&=f_1(\widetilde X_1,\widetilde X_2)-f_2(\widetilde X_2,\widetilde X_3)-\mu_2\cdot \widetilde X_2\\\notag
\vdots\\\notag
\frac{d\widetilde X_n}{dt}&=f_{n-1}(\widetilde X_{n-1},\widetilde X_n)-\nu_nf_n(\widetilde X_n)-\mu_n\cdot \widetilde X_n\notag,
\end{align}has a nonnegative equilibrium point $\widetilde X^*$ and $\widetilde X^*$ is globally uniformly asymptotically stable (GUAS) for the {\color{black}switching} system \eqref{diff_decay}.\\
\end{itemize}

If such positive values $Y_1^*,Y_2^*,\dots,Y_{n-1}^*$ satisfying \eqref{diff_decay_Y} do not exist and system \eqref{diff_decay_k_1} has a nonnegative equilibrium point $\widetilde X^*$, then $\widetilde X^*$ is globally uniformly asymptotically stable (GUAS) for the {\color{black}switching} system \eqref{diff_decay}.

\end{theorem} 

The proof of \Cref{th_diff_decay} can be obtained following the proofs for \Cref{proposition1} and \Cref{proposition2}, and substituting $\mu\cdot X_i$, $\mu\cdot Y_i$, etc, for $\mu_i\cdot X_i$, $\mu_i\cdot Y_i$, etc, respectively. 

\end{note}
}

{\color{black}
\begin{note}
According to the definition given in \cite{wang2010conditions}, system \eqref{ode_meta_path} can be considered as a tridiagonal feedback system provided that $u$ and $f_i$ are of class $C^1$ for all $i=1,2,\dots,n$. Theorems 1 and 2 in \cite{wang2010conditions} give conditions to prove global asymptotic stability of an equilibrium point for tridiagonal feedback systems. These conditions imply the existence of a compact absorbing subset of the domain (i.e. an invariant compact set) and a Metzler and Hurwitz matrix that upper bounds the Jacobian matrix of the system or a second compound matrix. The examples of linear monotone tridiagonal systems with nonlinear negative feedback and the Goldbeter model are also used in \cite{wang2010conditions} to exhibit oscillations in tridiagonal monotone systems when the conditions for stability fail. 

Contrary to the differentiability condition required in \cite{wang2010conditions}, we consider a non continuous differential equation by assuming a negative feedback that is piecewise constant and discontinuous (see definition \eqref{input}). Our approach allows to obtain conditions to prove not only global uniform asymptotic stability of equilibrium points (\Cref{proposition2}), but also sliding mode and oscillatory regimes (\Cref{proposition1}). Moreover, these conditions are equivalent to solve an algebraic equation to find equilibria, which in case of mass-action kinetics is often feasible.
\end{note}
}

\section{Example with Michaelis-Menten reversible reactions}\label{section_rMM}

We show an example of the {\color{black}switching} system \eqref{end_product_feedback} (or system \eqref{end_product_feedback_st}) with 3 metabolites and Michaelis-Menten reversible reactions. Let
\begin{align*}
u[X_3]&:=\begin{cases}
k_1 & \text{ if } X_3< K \\
0 & \text{ if } K<X_3.
\end{cases}\\
f_1(X_1,X_2)&:=\frac{k_2X_1-l_2X_2}{m_2X_1+n_2X_2+K_2},\\
f_2(X_2,X_3)&:=\frac{k_3X_2-l_3X_3}{m_3X_2+n_3X_3+K_3},\\
f_3(X_3)&:=\nu_n\frac{X_3}{X_3+K_n},
\end{align*}
\begin{align*}
f(u,X,\mu,\nu_n):=&\begin{pmatrix}
u - f_1(X_1,X_2) - \mu X_1\\
f_1(X_1,X_2) - f_2(X_2,X_3) - \mu X_2\\
f_2(X_2,X_3) - \nu_nf_3(X_3) - \mu X_3
\end{pmatrix},
\end{align*}for some $k_1>0$, $K>0$, $\mu\geq0$, $\nu_n\geq0$ and $0<\mu+\nu_n$.

Consider the ODE systems
\begin{align}\label{rMM}
\frac{dX}{dt}&=f(u[X_3],X,\mu,\nu_n),\\
\label{rMM2k1}
\frac{d\widetilde X}{dt}&=f(k_1,\widetilde X,\mu,\nu_n),\\
\label{rMMalpha}
\frac{dY}{dt}&=f(\alpha\cdot k_1,Y,\mu,\nu_n),\\
\label{rMMzero}
\frac{dZ}{dt}&=f(0,Z,\mu,\nu_n),
\end{align}where 
$$\alpha:=\frac{1}{k_1}\big(\sum_{i=1}^{n-1}\mu\cdot Y_i^* + \mu\cdot K + \nu_nf_n(K)\big).$$

\subsection{Oscillatory system}

\Cref{fig:osc1} and \Cref{fig:osc2} show two examples of the oscillatory behavior described in \Cref{proposition1}. In both cases, the constant input ON of the {\color{black}switching} system \eqref{rMM} is larger enough to satisfy the inequality in \eqref{hyp_alpha}. Moreover, we observe that the solution of the {\color{black}switching} system \eqref{rMM} is bounded between the solution of system \eqref{rMM2k1} with the constant input ON and the solution of system \eqref{rMMzero} with the constant input OFF, as stated in \Cref{lemma8}. 

Notice that even when system \eqref{rMM2k1} with the constant input ON has no positive equilibrium point, the {\color{black}switching} system \eqref{rMM} oscillates around the sliding mode (see \Cref{fig:osc2}). 

{\color{black}On the other hand, in \Cref{app_pasternack} we present an example to compare the dynamics of a piecewise linear model and a hybrid model as \eqref{end_product_feedback}. The example shows oscillations in both cases for a pathway with irreversible kinetics.}
 
\begin{figure}[!]
\includegraphics[width=\columnwidth]{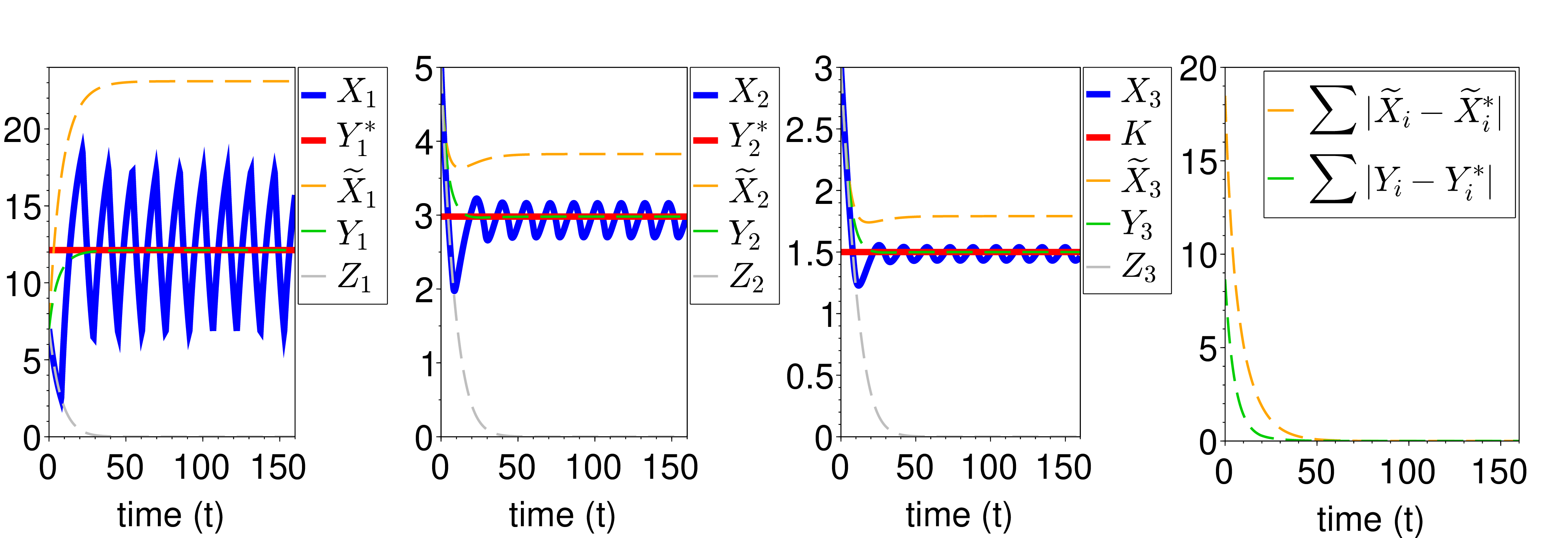}
\caption{Solution for systems \eqref{rMM}-\eqref{rMMzero} and candidate Lyapunov function for \eqref{rMM2k1} and \eqref{rMMalpha}. All parameters are equal to $1$ except for $k_1=3$, $\nu_n=0.2$, $\mu=0.1$ and $K=1.5$. The initial conditions are $X_1(0)=7$, $X_2(0)=5$ and $X_3(0)=3$. 
The {\color{black}switching} system \eqref{rMM} oscillates around the sliding mode $(Y_1^*,Y_2^*,K)$, which is the equilibrium point of system \eqref{rMMalpha}. Moreover, the {\color{black}switching} system \eqref{rMM} is bounded between systems \eqref{rMM2k1} and \eqref{rMMzero}, which converge both to their respective equilibrium points.}\label{fig:osc1}
\end{figure} 

\begin{figure}[!]
\includegraphics[width=\columnwidth]{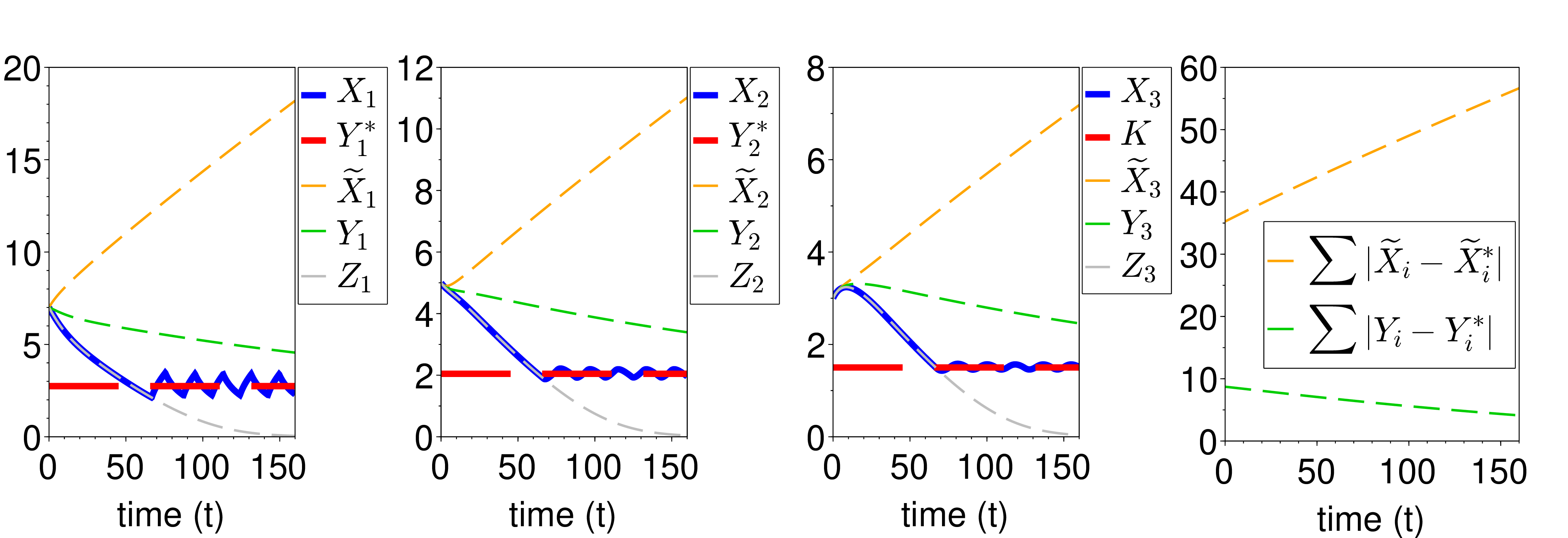}
\caption{Solution for systems \eqref{rMM}-\eqref{rMMzero} and candidate Lyapunov function for \eqref{rMM2k1} and \eqref{rMMalpha}. All parameters are equal to $1$ except for $k_1=0.3$, $\nu_n=0.2$, $\mu=0$ and $K=1.5$. The initial conditions are $X_1(0)=7$, $X_2(0)=5$ and $X_3(0)=3$. The {\color{black}switching} system \eqref{rMM} oscillates around the sliding mode $(Y_1^*,Y_2^*,K)$, which is the equilibrium point of system \eqref{rMMalpha}. Moreover, the {\color{black}switching} system \eqref{rMM} is bounded between systems \eqref{rMM2k1} and \eqref{rMMzero}. However, system \eqref{rMM2k1} {\color{black}does not have a positive equilibrium point and it is not stable.}. System \eqref{rMMzero} converges to $\bar0$.}\label{fig:osc2}
\end{figure} 

\subsection{Stable system}

In \Cref{fig:st1}, \Cref{fig:st2} and \Cref{fig:st3} there are three examples of the stabilization of the {\color{black}switching} system \eqref{rMM} as stated in \Cref{proposition2}.

\Cref{fig:st1} depicts the case when system \eqref{rMMalpha} with constant input $\alpha\cdot k_1$ has a positive equilibrium point and $1<\alpha$ (i.e. \emph{i)} is satisfied). In this case, the constant input ON is small enough to let the system stabilize and do not oscillate around the sliding mode. Notice that the equilibrium point of the {\color{black}switching} system \eqref{rMM} is uniformly asymptotically stable and lower (entry by entry) than the equilibrium point related to the sliding mode (i.e. $Y^*$ the equilibrium point of system \eqref{rMMalpha}). 

In \Cref{fig:st2}, condition \emph{i)} is not satisfied, but \emph{ii)} holds. That is to say, the system related to the sliding mode \eqref{rMMalpha} has not a nonnegative equilibrium point (it has an equilibrium point, but its first entry is negative) and system \eqref{rMM2k1} with the constant input ON has a positive equilibrium point. In this case, the {\color{black}switching} system \eqref{rMM} converges uniformly and asymptotically to the equilibrium point of system \eqref{rMM2k1}.

The case when the {\color{black}switching} system \eqref{rMM} reaches the sliding mode is represented in \Cref{fig:st3}. Here the constant input ON is such that system \eqref{rMM2k1} has an equilibrium point satisfying $\widetilde X_n^*=K$. The {\color{black}switching} system \eqref{rMM} converges uniformly and asymptotically to the sliding mode $Y^*=\widetilde X^*$.

Finally, as stated in \Cref{lemma8}, in the three examples it can be observed that the solution of the {\color{black}switching} system \eqref{rMM} is upper and lower bounded by system \eqref{rMM2k1} and system \eqref{rMMzero}, the solutions of the systems with the constant inputs ON and OFF, respectively.

\begin{figure}[!]
\includegraphics[width=\columnwidth]{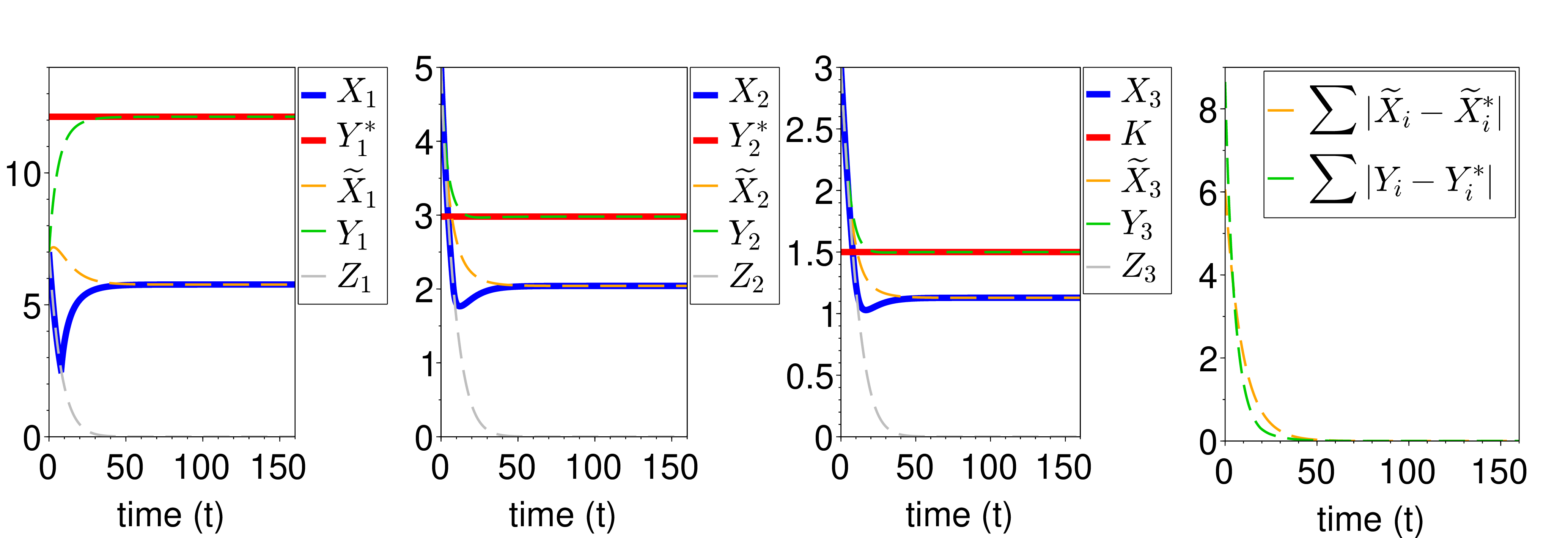}
\caption{Solution for systems \eqref{rMM}-\eqref{rMMzero} and candidate Lyapunov function for \eqref{rMM2k1} and \eqref{rMMalpha}. All parameters are equal to $1$ except for $\nu_n=0.2$, $\mu=0.1$ and $K=1.5$. The initial conditions are $X_1(0)=7$, $X_2(0)=5$ and $X_3(0)=3$. The {\color{black}switching} system \eqref{rMM} converges to the equilibrium point of system \eqref{rMM2k1}, because $k_1<\alpha\cdot k_1$. Moreover, the {\color{black}switching} system \eqref{rMM} is bounded between systems \eqref{rMM2k1} and \eqref{rMMzero}, which converge both to their respective equilibrium points. System \eqref{rMMalpha} converges to its positive equilibrium point $(Y_1^*,Y_2^*,K)$.}\label{fig:st1}
\end{figure} 

\begin{figure}[!]
\includegraphics[width=\columnwidth]{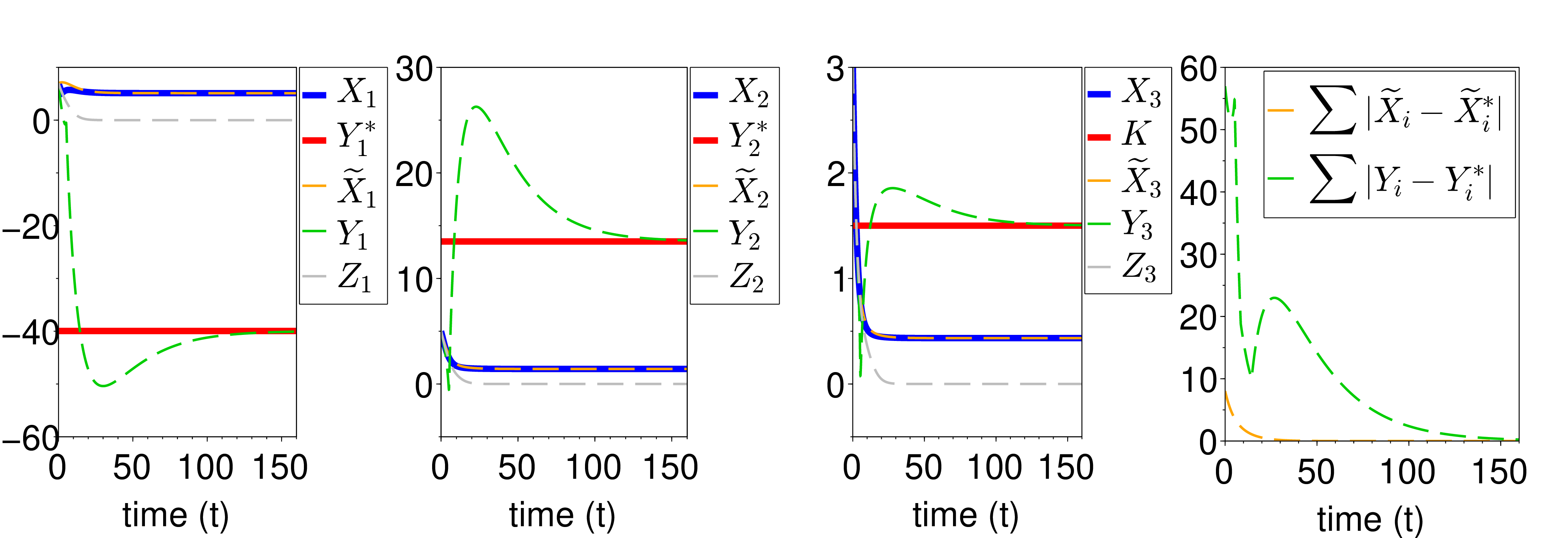}
\caption{Solution for systems \eqref{rMM}-\eqref{rMMzero} and candidate Lyapunov function for \eqref{rMM2k1} and \eqref{rMMalpha}. All parameters are equal to $1$ except for $\mu=0.1$ and $K=1.5$. The initial conditions are $X_1(0)=7$, $X_2(0)=5$ and $X_3(0)=3$. The {\color{black}switching} system \eqref{rMM} converges to the equilibrium point of system \eqref{rMM2k1}, because $\alpha<1$. Moreover, the {\color{black}switching} system \eqref{rMM} is bounded between systems \eqref{rMM2k1} and \eqref{rMMzero}, which converge both to their respective equilibrium points. The equilibrium point $(Y_1^*,Y_2^*,K)$ of system \eqref{rMMalpha} is nonpositive.}\label{fig:st2}
\end{figure}

\begin{figure}[!]
\includegraphics[width=\columnwidth]{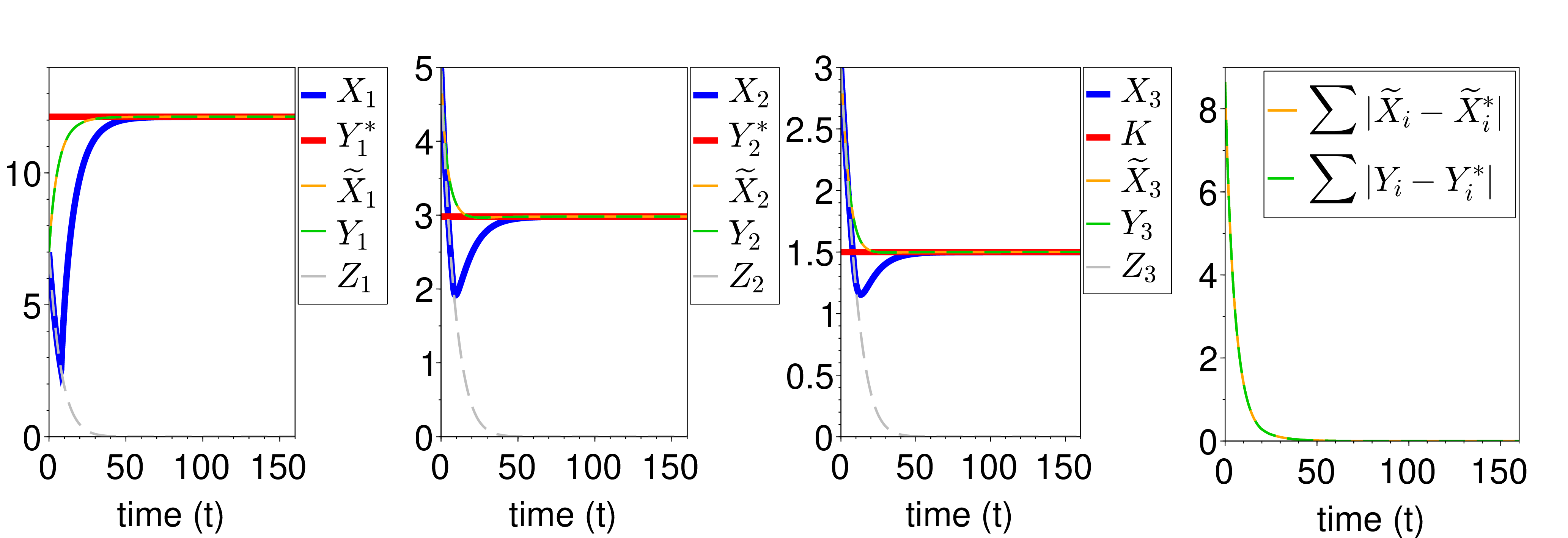}
\caption{Solution for systems \eqref{rMM}-\eqref{rMMzero} and candidate Lyapunov function for \eqref{rMM2k1} and \eqref{rMMalpha}. All parameters are equal to $1$ except for $k_1=\mu\cdot (Y_1^* + Y_2^* + K) + \nu_nf_n(K)\sim1.78$, $\nu_n=0.2$, $\mu=0.1$ and $K=1.5$. The initial conditions are $X_1(0)=7$, $X_2(0)=5$ and $X_3(0)=3$. The {\color{black}switching} system \eqref{rMM} converges {\color{black}asymptotically} to the sliding mode, which is the equilibrium point of systems \eqref{rMM2k1} and \eqref{rMMalpha}, because $\alpha=1$. Moreover, the {\color{black}switching} system \eqref{rMM} is bounded between systems \eqref{rMM2k1} and \eqref{rMMzero}, which converge both to their respective equilibrium points.}\label{fig:st3}
\end{figure} 
\newpage
{\color{black}
\subsection{Example with different decay rates}
We show an example of a system with different decay rates to illustrate \Cref{th_diff_decay}. As in the previous example, we consider reversible Michaelis-Menten kinetics. Let
\begin{align*}
u[X_3]&:=\begin{cases}
k_1 & \text{ if } X_3< K \\
0 & \text{ if } K<X_3.
\end{cases}\\
f_1(X_1,X_2)&:=\frac{k_2X_1-l_2X_2}{m_2X_1+n_2X_2+K_2},\\
f_2(X_2,X_3)&:=\frac{k_3X_2-l_3X_3}{m_3X_2+n_3X_3+K_3},\\
f_3(X_3)&:=\nu_n\frac{X_3}{X_3+K_n},
\end{align*}
\begin{align*}
g(u,X,\overline\mu,\nu_n):=&\begin{pmatrix}
u - f_1(X_1,X_2) - \mu_1 X_1\\
f_1(X_1,X_2) - f_2(X_2,X_3) - \mu_2 X_2\\
f_2(X_2,X_3) - \nu_nf_3(X_3) - \mu_3 X_3
\end{pmatrix},
\end{align*}for some $k_1>0$, $K>0$, $\mu_i\geq0$ for all $i=1,2,3$, $\nu_n\geq0$ and $0<\mu_3+\nu_n$.

Consider the ODE systems
\begin{align}\label{rMM_decay}
\frac{dX}{dt}&=g(u[X_3],X,\overline\mu,\nu_n),\\
\label{rMM2k1_decay}
\frac{d\widetilde X}{dt}&=g(k_1,\widetilde X,\overline\mu,\nu_n),\\
\label{rMMalpha_decay}
\frac{dY}{dt}&=g(\alpha\cdot k_1,Y,\overline\mu,\nu_n),\\
\label{rMMzero_decay}
\frac{dZ}{dt}&=g(0,Z,\overline\mu,\nu_n),
\end{align}where 
$$\alpha:=\frac{1}{k_1}\big(\sum_{i=1}^{n-1}\mu_i\cdot Y_i^* + \mu_n\cdot K + \nu_nf_n(K)\big).$$

\begin{figure}[!]
\includegraphics[width=\columnwidth]{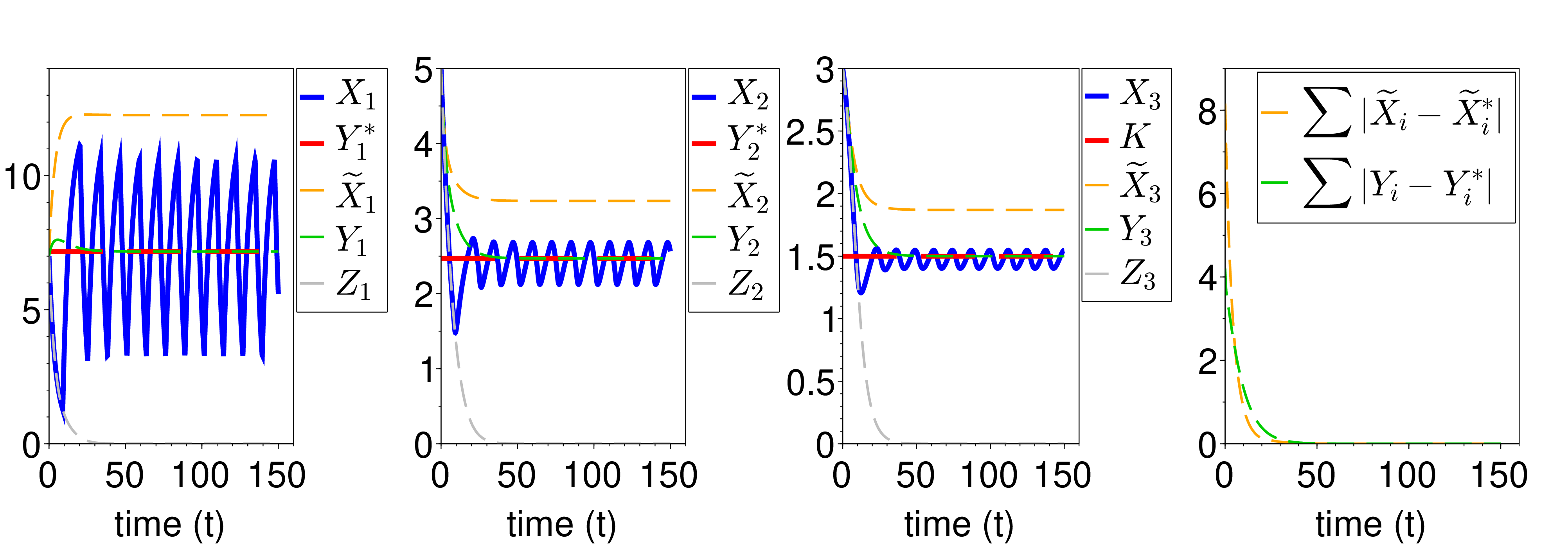}
\caption{\color{black}Solution for systems \eqref{rMM_decay}-\eqref{rMMzero_decay} with different decay rates and candidate Lyapunov function for \eqref{rMM2k1_decay} and \eqref{rMMalpha_decay}. All parameters are equal to $1$ except for $k_1=3$, $\nu_n=0.2$, $\mu_1=0.2$, $\mu_2=0.1$, $\mu_3=0.05$ and $K=1.5$. The initial conditions are $X_1(0)=7$, $X_2(0)=5$ and $X_3(0)=3$. 
The {\color{black}switching} system \eqref{rMM_decay} oscillates around the sliding mode $(Y_1^*,Y_2^*,K)$, which is the equilibrium point of system \eqref{rMMalpha_decay}. Moreover, the {\color{black}switching} system \eqref{rMM_decay} is bounded between systems \eqref{rMM2k1_decay} and \eqref{rMMzero_decay}, which converge both to their respective equilibrium points.}
\end{figure} 

\begin{figure}[!]
\includegraphics[width=\columnwidth]{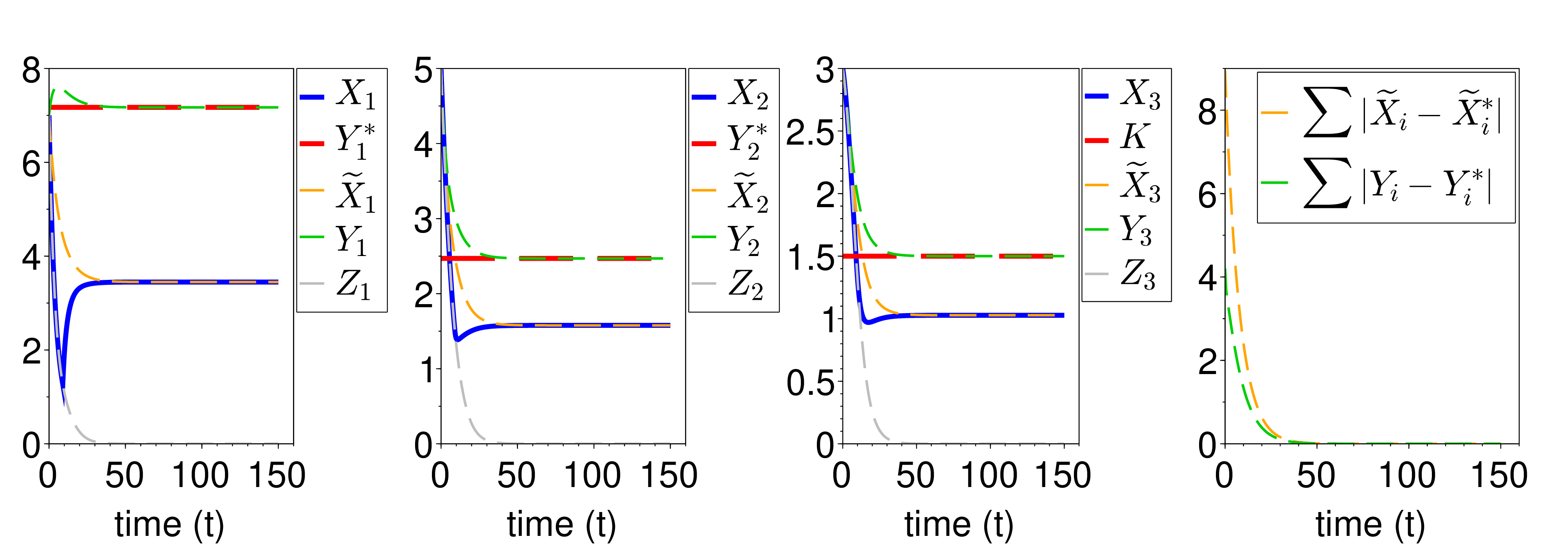}
\caption{\color{black}Solution for systems \eqref{rMM_decay}-\eqref{rMMzero_decay} with different decay rates and candidate Lyapunov function for \eqref{rMM2k1_decay} and \eqref{rMMalpha_decay}. All parameters are equal to $1$ except for $\nu_n=0.2$, $\mu_1=0.2$, $\mu_2=0.1$, $\mu_3=0.05$ and $K=1.5$. The initial conditions are $X_1(0)=7$, $X_2(0)=5$ and $X_3(0)=3$. The initial conditions are $X_1(0)=7$, $X_2(0)=5$ and $X_3(0)=3$. The {\color{black}switching} system \eqref{rMM_decay} converges to the equilibrium point of system \eqref{rMM2k1_decay}, because $k_1<\alpha\cdot k_1$. Moreover, the {\color{black}switching} system \eqref{rMM_decay} is bounded between systems \eqref{rMM2k1_decay} and \eqref{rMMzero_decay}, which converge both to their respective equilibrium points. System \eqref{rMMalpha_decay} converges to its positive equilibrium point $(Y_1^*,Y_2^*,K)$.}
\end{figure} 

\begin{figure}[!]
\includegraphics[width=\columnwidth]{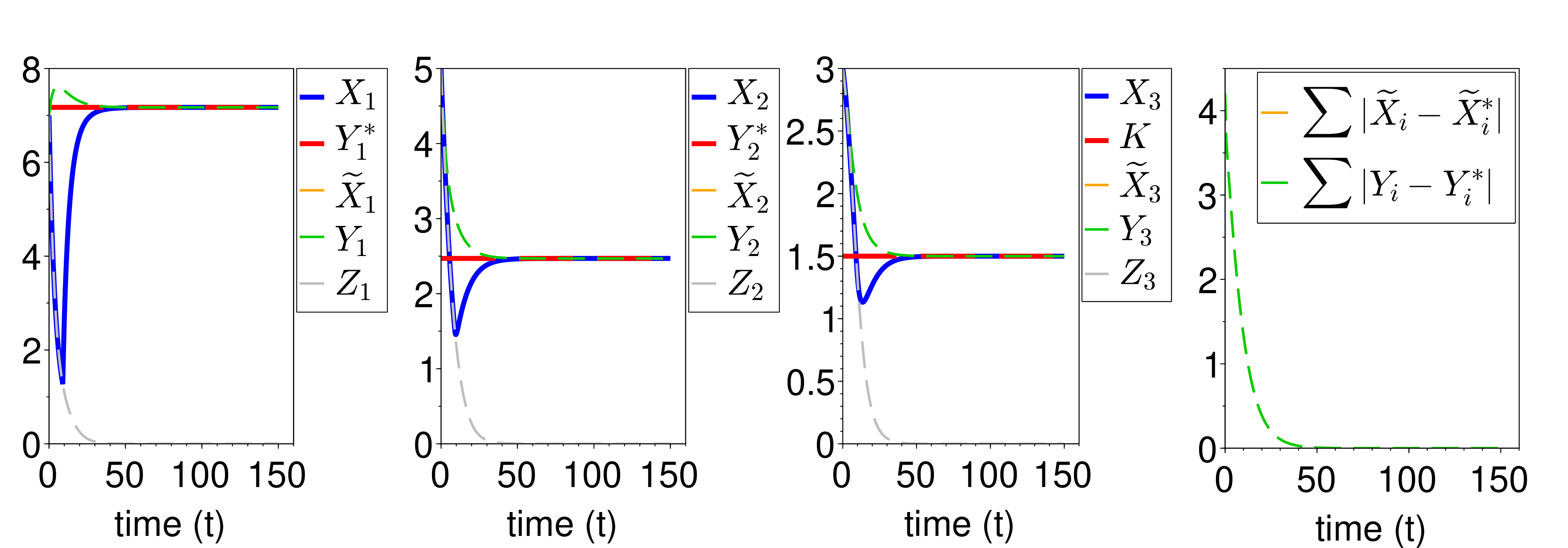}
\caption{\color{black}Solution for systems \eqref{rMM_decay}-\eqref{rMMzero_decay} with different decay rates and candidate Lyapunov function for \eqref{rMM2k1_decay} and \eqref{rMMalpha_decay}. All parameters are equal to $1$ except for $k_1=\mu_1\cdot Y_1^* + \mu_2\cdot Y_2^* + \mu_3\cdot K + \nu_nf_n(K)\sim1.88$, $\nu_n=0.2$, $\mu_1=0.2$, $\mu_2=0.1$, $\mu_3=0.05$ and $K=1.5$. The initial conditions are $X_1(0)=7$, $X_2(0)=5$ and $X_3(0)=3$. The {\color{black}switching} system \eqref{rMM_decay} converges asymptotically to the sliding mode, which is the equilibrium point of systems \eqref{rMM2k1_decay} and \eqref{rMMalpha_decay}, because $\alpha=1$. Moreover, the {\color{black}switching} system \eqref{rMM_decay} is bounded between systems \eqref{rMM2k1_decay} and \eqref{rMMzero_decay}, which converge both to their respective equilibrium points.}
\end{figure} 
 }

\section{Continuous feedback systems}\label{sec_continuous}

The purpose of this Section is to exhibit some sequences of equations of the form \eqref{ode_meta_path} with continuous inputs whose solutions converge pointwise to the solution of the {\color{black}switching} system \eqref{end_product_feedback} (equal to system \eqref{end_product_feedback_st}). This follows the idea of considering the switching input \eqref{input} as the limit case of allosteric regulation processes that occur very fast (see \Cref{introduction} and \Cref{repress}). 

\subsection{Smooth input}

In \Cref{smooth}, we introduce a sequence of equations of the form \eqref{ode_meta_path} with smooth inputs. The purpose is that the smooth inputs converge to the step function defined by the switching input \eqref{input}. For this, sigmoid functions of the form 
$$\frac{k_1}{1+\big(\frac{X_n}{K}\big)^m}$$are considered. However, a sigmoid input would not allow the system to {\color{black}remain constant in the sliding mode if there is $t^*\geq t_0$ such that $X_i(t^*)=Y_i^*$ for every $i=1,2,\dots,n-1$ and $X_n(t^*)=K$. This is a property that, according to \Cref{proposition1} and \Cref{proposition2}, the {\color{black}switching} systems \eqref{end_product_feedback} and \eqref{end_product_feedback_st} satisfy. In order to approximate this particular dynamics of the switching systems}, the sigmoid function is multiplied by the Gaussian function
\begin{align*}
\Big[(2\cdot\alpha-1)\cdot\exp\Big\{-\Big(\frac{X_n-K}{\sigma}\Big)^2\Big\}+1\Big].
\end{align*}

\begin{proposition}\label{smooth}
Under Assumption \ref{assumption1}, consider the {\color{black}switching} system
\begin{align*}
\frac{dX}{dt}&=f(u[X_n],X,\mu,\nu_n)
\end{align*}with initial conditions $X_i(t_0)\geq0$, $i=1,2,\dots,n$, $K>0$, $k_1>0$, $\mu\geq0$, $\nu_n\geq0$ and the input $u[X_n]$ defined in \eqref{input}. 

Suppose that there are positive values $Y_1^*,Y_2^*,\dots,Y_{n-1}^*$ such that 
\begin{align*}
0&=f_1(Y_1^*,Y_2^*)-f_2(Y_2^*,Y_3^*)-\mu\cdot Y_2^*\\\notag
0&=f_2(Y_2^*,Y_3^*)-f_3(Y_3^*,Y_4^*)-\mu\cdot Y_3^*\\\notag
\vdots\\\notag
0&=f_{n-1}(Y_{n-1}^*,K)-\nu_nf_n(K)-\mu\cdot K\notag,
\end{align*}and define
\begin{align*}
 \alpha:=\frac{1}{k_1}\Big(\sum_{i=1}^{n-1}\mu\cdot Y_i^* + \mu\cdot K + \nu_nf_n(K)\Big).
\end{align*}

For every $m\in\mathbb N$, let $\varphi^m:=(\varphi^m_1,\varphi^m_2,\dots,\varphi^m_n)$ be the solution for the continuous differential equation
\begin{align*}
\frac{d\varphi^m}{dt}&=f\big(U_m^\sigma(\varphi^m_n), \varphi^m,\mu,\nu\big)
\end{align*}where 
\begin{align*}
U_m^\sigma(\varphi^m_n):=\frac{k_1}{1+\big(\frac{\varphi^m_n}{K}\big)^m}\cdot \Big[(2\cdot\alpha-1)\cdot\exp\Big\{-\Big(\frac{\varphi^m_n-K}{\sigma}\Big)^2\Big\}+1\Big],
\end{align*}$\sigma$ is a small real number and the initial conditions $\varphi^m_i(t_0)=X_i(t_0)$ for all $i=1,2,\dots,n$.

{\color{black}Then}, for a.e. $t\in[t_0,\infty)$,
\begin{align*}
\lim_{m\to\infty,\sigma\to0}\varphi^m_i(t,\sigma)&=X_i(t)&\forall i=1,2,\dots,n.
\end{align*}
\end{proposition}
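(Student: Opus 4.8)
The plan is to realize $X$ as the \emph{unique} Filippov solution of the switched system (guaranteed by \Cref{lemma_uniqueness}) and to show that the family $\{\varphi^m\}$ with the prescribed smooth inputs accumulates, in the double limit, only on $X$. The argument proceeds in four stages: uniform a priori bounds, compactness, identification of the limit with a Filippov solution, and upgrading subsequential convergence to convergence of the whole family via right uniqueness.

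First I would record two elementary facts about the continuous systems. Since $f_i(0,\cdot)\le 0$, $f_i(\cdot,0)\ge 0$ and $f_n(0)=0$ by Assumption \ref{assumption1}(iv)--(vi), the orthant $[0,\infty)^n$ is forward invariant for every smooth system, exactly as in \Cref{lemma4}, so $\varphi^m_i(t)\ge 0$. Next, because $U_m^\sigma\le k_1$ and $\nu_n f_n(\varphi^m_n),\,\mu\varphi^m_i\ge 0$, the total mass $S^m:=\sum_{i=1}^n\varphi^m_i$ obeys $\dot S^m = U_m^\sigma(\varphi^m_n)-\nu_n f_n(\varphi^m_n)-\mu S^m\le k_1$, whence $0\le \varphi^m_i(t)\le S^m(t)\le S^m(t_0)+k_1(t-t_0)$ on any $[t_0,T]$. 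This bound is uniform in $(m,\sigma)$, so the family lives in a fixed compact set $\mathcal K_T\subset[0,\infty)^n$ on $[t_0,T]$; consequently the right-hand sides $f(U_m^\sigma(\varphi^m_n),\varphi^m,\mu,\nu_n)$ are uniformly bounded there and $\{\varphi^m\}$ is uniformly bounded and equi-Lipschitz. By Arzel\`a--Ascoli, along any sequence $(m_k,\sigma_k)$ with $m_k\to\infty$, $\sigma_k\to 0$ a subsequence converges uniformly on $[t_0,T]$ to an absolutely continuous $\psi$ (and, by a diagonal argument, on $[t_0,\infty)$).

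The heart of the proof is to identify $\psi$ as a Filippov solution, for which I would invoke the convergence theorem for solutions of approximating equations in \cite{filippov1988differential}: it suffices that for every $\delta>0$ the continuous field $f(U_{m_k}^{\sigma_k}(x_n),x,\mu,\nu_n)$ eventually lies in the $\delta$-inflation of the set-valued map $\mathbf F$ from \Cref{lemma_uniqueness}, uniformly for $x\in\mathcal K_T$. Away from the switching surface this is immediate: for $x_n\le K-\delta$ the sigmoid factor $\to k_1$ and the Gaussian factor $\to 1$, so $U_{m_k}^{\sigma_k}(x_n)\to k_1$; for $x_n\ge K+\delta$ it $\to 0$. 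The delicate region is $|x_n-K|<\delta$, where one must ensure $0\le U_{m_k}^{\sigma_k}\le k_1$, so that the field lands in the segment $\{f(\alpha k_1,x,\mu,\nu_n):\alpha\in[0,1]\}$, which is precisely the value of $\mathbf F$ on $\{x_n=K\}$ and is contained in its spatial $\delta$-inflation for nearby $x_n$. The potential obstruction is an \emph{overshoot} of the Gaussian bump above $k_1$ just below $K$; I would remove it by taking the joint limit with $\sigma$ small relative to $1/m$ (i.e. $m_k\sigma_k\to 0$), so that wherever the Gaussian factor is appreciable the sigmoid factor is still near $k_1/2$ and the product stays in $[0,k_1]$, with value exactly $\alpha k_1$ at $x_n=K$. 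This last equality is what lets the approximations reproduce the \emph{sliding mode}: near $Y^*$ the smooth input is $\approx\alpha k_1$, whose constant-input system has equilibrium $Y^*$, so $\psi$ can track the sliding-mode branch. With these checks the cited theorem gives $\dot\psi\in\mathbf F(t,\psi)$ a.e., i.e.\ $\psi$ is a Filippov solution with $\psi(t_0)=X(t_0)$.

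Finally, right uniqueness in $[t_0,\infty)\times[0,\infty)^n$ (\Cref{lemma_uniqueness}) forces $\psi=X$. Since every convergent subsequence has the same limit, the whole family converges: $\varphi^m_i(t,\sigma)\to X_i(t)$ uniformly on compact intervals, hence for a.e.\ (indeed every) $t\ge t_0$. The step I expect to be the \textbf{main obstacle} is the identification at the switching surface, namely controlling the Gaussian overshoot through the order and rate of the double limit so that the approximating fields stay inside the Filippov segment while still converging to the correct sliding-mode value $\alpha k_1$ on $\{X_n=K\}$; by contrast the a priori bound and compactness are routine, and the passage from subsequential to full convergence is immediate once right uniqueness is available.
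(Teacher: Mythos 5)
Your proposal rests on the same foundation as the paper's proof: the paper's entire argument is to record the pointwise limit of $U_m^\sigma$ and cite the convergence lemma for approximating equations (Lemma 3, Section 7, Chapter 2 of \cite{filippov1988differential}), which is exactly the theorem you invoke in your identification step. What you add is the scaffolding that citation conceals (uniform bounds, Arzel\`a--Ascoli, and right uniqueness from \Cref{lemma_uniqueness} to upgrade subsequential convergence of $\{\varphi^m\}$ to convergence of the whole family), and, more importantly, you catch the point the paper's two-line proof glosses over: pointwise convergence of the inputs is \emph{not} the hypothesis of Filippov's lemma. What is needed is that the approximating fields land uniformly in small inflations of the set-valued map $\mathbf F$, and this genuinely fails for uncoupled limits: if $\alpha>1/2$ and $\sigma\gg K/m$, then at points with $K/m\ll K-x_n\ll\sigma$ the sigmoid factor is already $\approx k_1$ while the Gaussian factor is $\approx 2\alpha$, so $U_m^\sigma\approx 2\alpha k_1>k_1$, outside the segment $[0,k_1]$ that defines $\mathbf F$ on the switching surface. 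Your coupling $m_k\sigma_k\to0$ repairs this; the only alternative would be to prove right uniqueness for the enlarged inclusion whose surface segment is $[0,2\alpha k_1]$, which the paper never does. So your proof is a correct completion of the paper's argument and makes precise the sense in which the double limit must be taken. (Minor internal inconsistency: in your a priori bound you assert $U_m^\sigma\le k_1$, which your own overshoot analysis contradicts; the uniform bound $U_m^\sigma\le\max(1,2\alpha)\,k_1$ holds and is all you need for linear growth of $S^m$.)

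One substantive caveat: your claim that near the surface ``the product stays in $[0,k_1]$, with value exactly $\alpha k_1$ at $x_n=K$'' is consistent only if $\alpha\le1$. If $\alpha>1$, then $U_m^\sigma(K)=\alpha k_1>k_1$ for every $(m,\sigma)$, so no coupling of the limits can force the approximating fields into $\mathbf F$ near the surface. In fact the proposition itself is then problematic: $(Y_1^*,\dots,Y_{n-1}^*,K)$ is an exact equilibrium of every smooth system (the input there is exactly $\alpha k_1$), whereas for $\alpha>1$ the switched system started at this point leaves it with input $k_1$ (see the proof of \Cref{lemma_uniqueness}), so $\varphi^m$ cannot converge to $X$ for that initial condition. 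This restriction $\alpha\le1$ is missing from the statement of \Cref{smooth} and from the paper's proof alike; your argument needs it too, and you should state it explicitly.
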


\begin{proof}
Notice that
\begin{align*}
\lim_{m\to\infty,\sigma\to0}
U_m^\sigma(\varphi^m_n)=\begin{cases}
k_1&\text{if } \varphi^m_n<K\\
0&\text{if }\varphi^m_n>K\\
\alpha\cdot k_1&\text{if }\varphi^m_n=K
\end{cases}.
\end{align*}
The result follows from Lemma 3 in Section 7, Chapter 2 (p. 82) of \cite{filippov1988differential}.
\end{proof}

\begin{example}
Consider the {\color{black}switching} system \eqref{rMM} with Michaelis-Menten kinetics introduced in \Cref{section_rMM}. {\color{black} In \Cref{smooth_oscillation}, \Cref{smooth_stable} and \Cref{smooth_sliding} is depicted the approximation of the {\color{black}switching} system \eqref{rMM} given by the series of function $\{\varphi^m\}$. Note that to closely approximate the solution of \eqref{rMM}, the parameter $m$ in the series of smooth equations has to be large, specially in case of sliding mode (see \Cref{smooth_sliding}).}

\begin{figure}[!]
\includegraphics[width=\columnwidth]{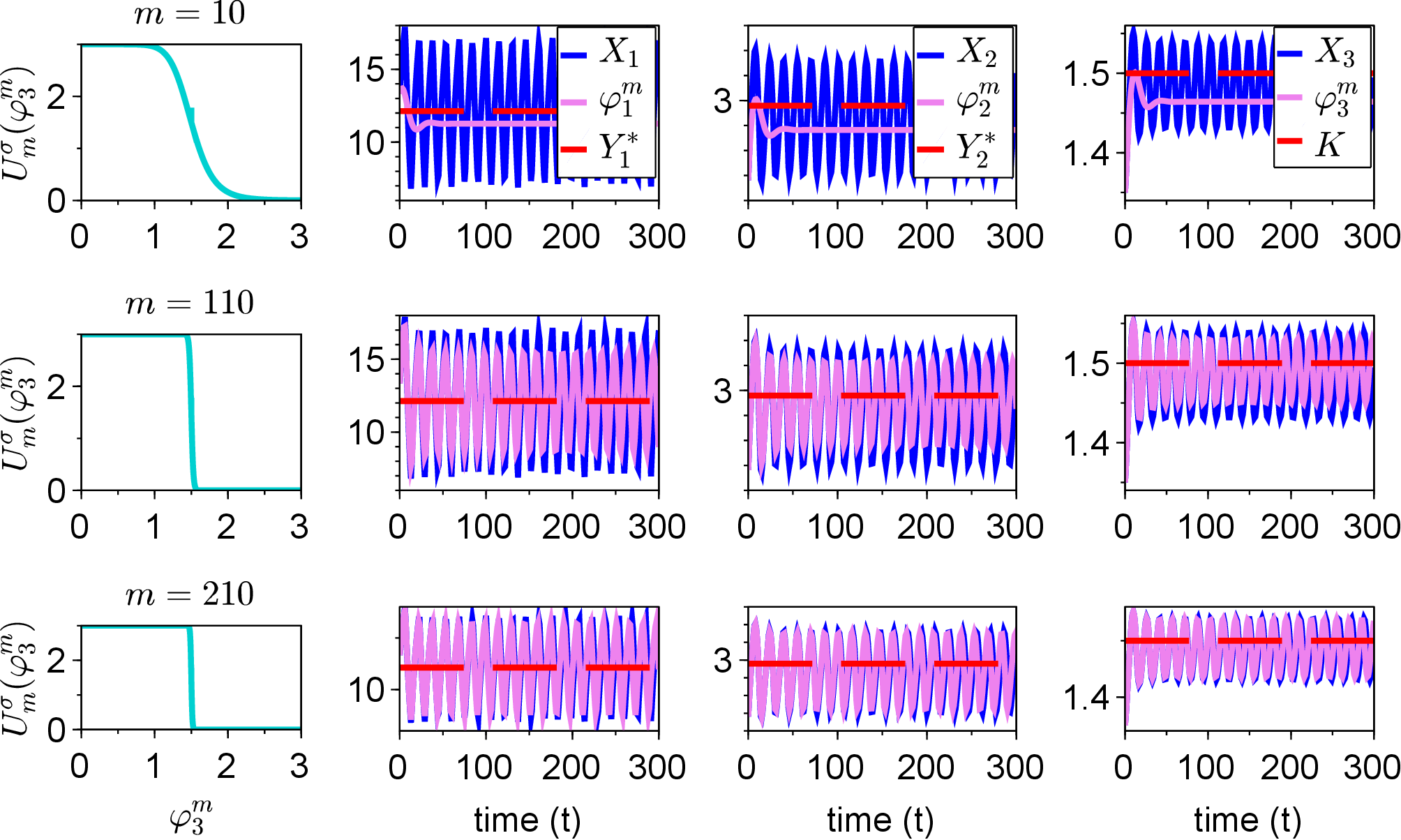}
\caption{The {\color{black}switching} system \eqref{rMM} oscillates around the sliding mode $(Y_1^*, Y_2^*,K)$. Functions $\varphi^m$ approximate $X$ as $m\to\infty$ and $\sigma\to0$. All parameters are equal to $1$ except for $k_1=3$, $\nu_n=0.2$, $\mu=0.1$, $K=1.5$ and $\sigma=10^{-6}$. The initial conditions are $X_1(0)=13.34$, $X_2(0)=2.68$ and $X_3(0)=1.35$. }\label{smooth_oscillation}
\end{figure}

\begin{figure}[!]
\includegraphics[width=\columnwidth]{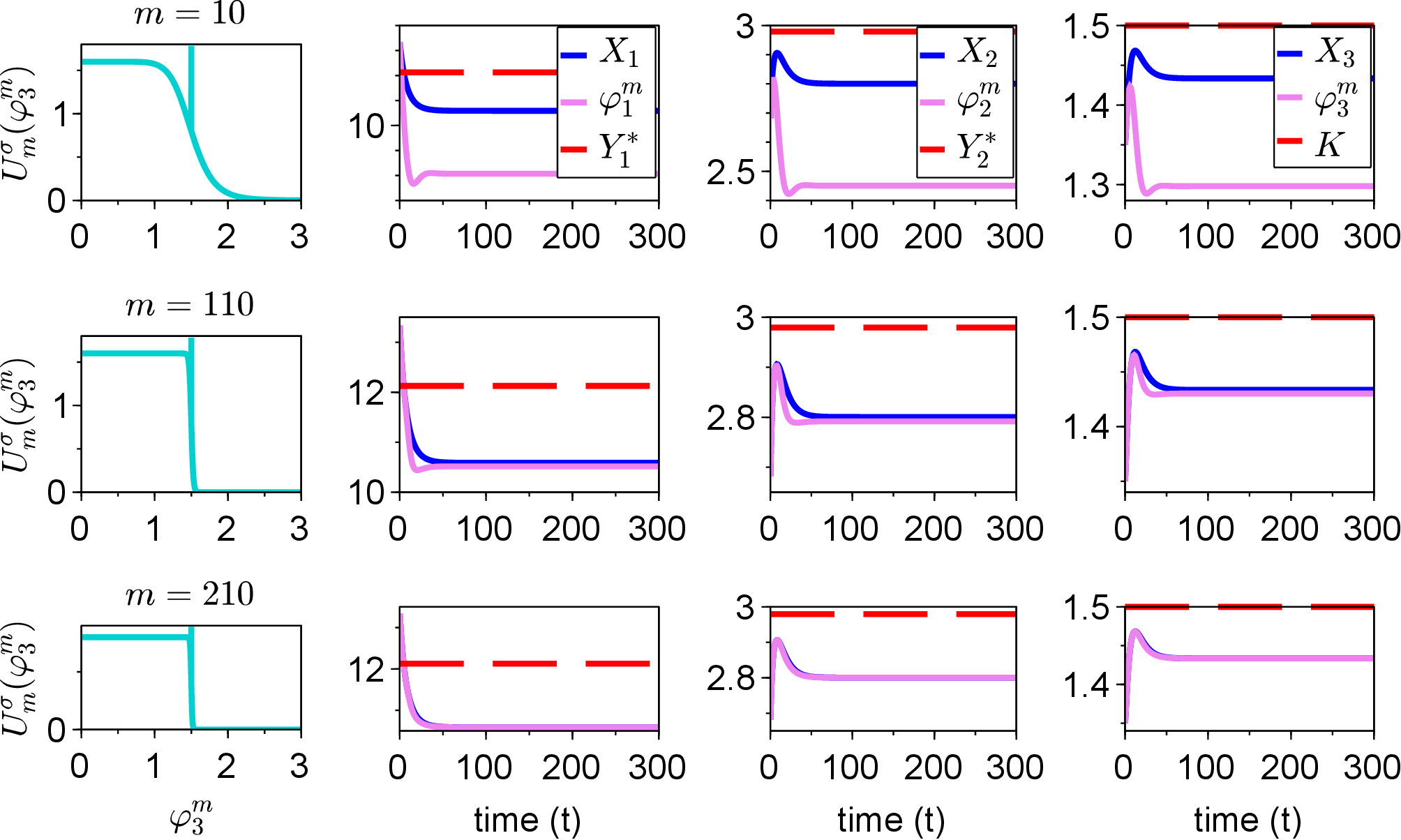}
\caption{The {\color{black}switching} system \eqref{rMM} converges to an equilibrium point {\color{black}smaller than} the sliding mode $(Y_1^*, Y_2^*,K)$. Functions $\varphi^m$ approximate $X$ as $m\to\infty$ and $\sigma\to0$. All parameters are equal to $1$ except for $k_1=1.6$, $\nu_n=0.2$, $\mu=0.1$, $K=1.5$ and $\sigma=10^{-6}$. The initial conditions are $X_1(0)=13.34$, $X_2(0)=2.68$ and $X_3(0)=1.35$. }\label{smooth_stable}
\end{figure}

\begin{figure}[!]
\includegraphics[width=\columnwidth]{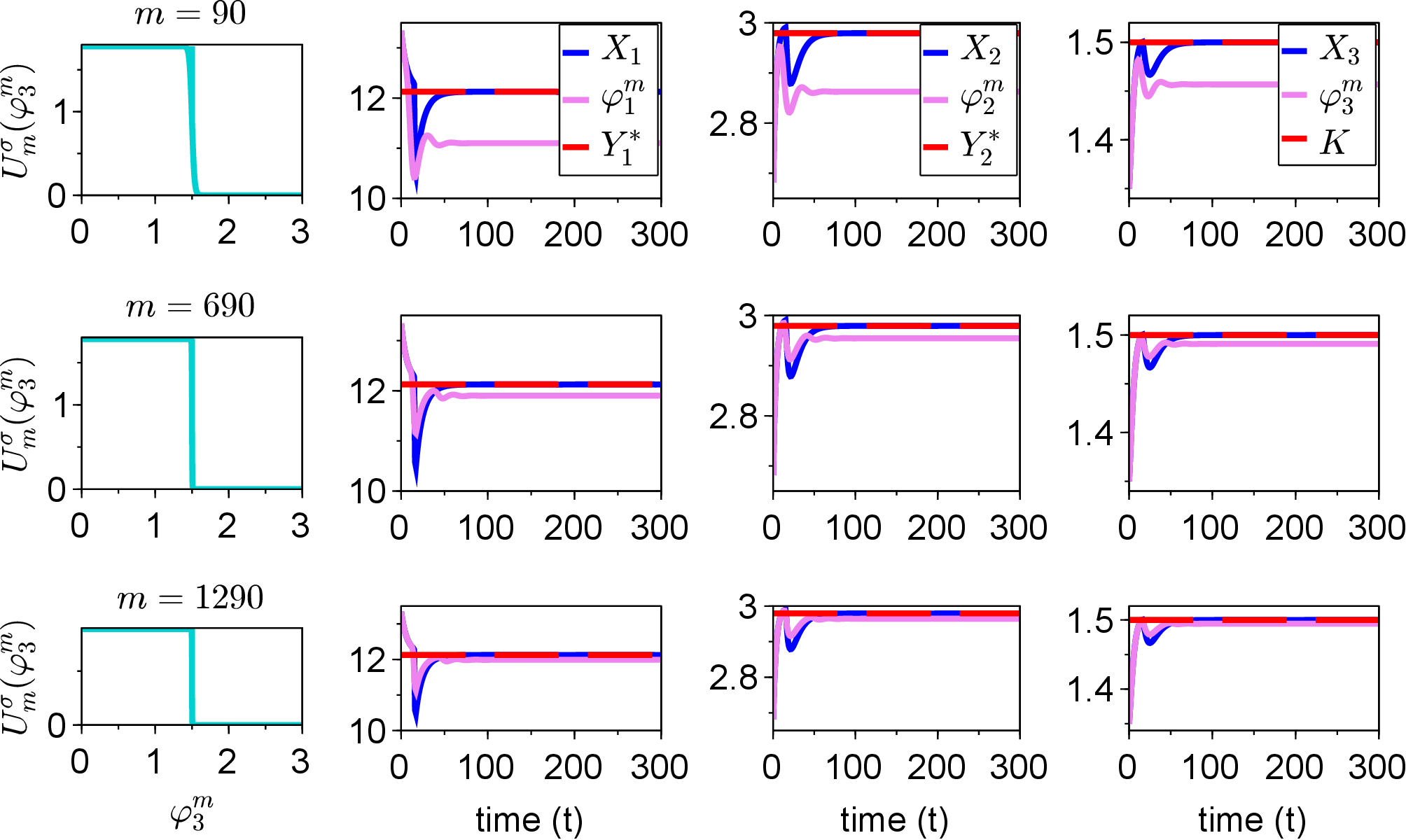}
\caption{The {\color{black}switching} system \eqref{rMM} converges to the sliding mode $(Y_1^*, Y_2^*,K)$. Functions $\varphi^m$ approximate $X$ as $m\to\infty$ and $\sigma\to0$. All parameters are equal to $1$ except for $k_1=\sum_{i=1}^{n-1}\mu\cdot Y_i^* + \mu\cdot K + \nu_nf_n(K)\sim 1.78$, $\nu_n=0.2$, $\mu=0.1$, $K=1.5$ and $\sigma=10^{-6}$. The initial conditions are $X_1(0)=13.34$, $X_2(0)=2.68$ and $X_3(0)=1.35$. }\label{smooth_sliding}
\end{figure} 

\end{example}

\subsection{Piecewise linear input}

In \Cref{continuous}, we introduce a sequence of equations with continuous inputs (even so not everywhere differentiable). The idea is to approximate the step function defined by the switching input \eqref{input} by a sequence of piecewise linear functions of the form
\begin{align*}
L_\varepsilon\big[X_n\big]:=\begin{cases}
k_1&\text{ if }X_n<K-\varepsilon\\
\frac{k_1}{2\cdot\varepsilon}(K+\varepsilon-X_n)&\text{ if }K-\varepsilon\leq X_n\leq K+\varepsilon\\
0&\text{ if }X_n>K+\varepsilon\\
\end{cases}
\end{align*}with $\varepsilon$ a small positive number. However, the input $L_\varepsilon$ above would not allow the functions of the sequence to {\color{black}remain constant in the sliding mode if there is $t^*\geq t_0$ such that $X_i(t^*)=Y_i^*$ for every $i=1,2,\dots,n-1$ and $X_n(t^*)=K$. This is a property that, according to \Cref{proposition1} and \Cref{proposition2}, the {\color{black}switching} systems \eqref{end_product_feedback} and \eqref{end_product_feedback_st} satisfy. In order to approximate this particular dynamics of the switching systems}, a slightly more elaborate piecewise linear function is defined \Cref{continuous}.  

\begin{proposition}\label{continuous}
Under Assumption \ref{assumption1}, consider the {\color{black}switching} system
\begin{align*}
\frac{dX}{dt}&=f(u[X_n],X,\mu,\nu_n)
\end{align*}with initial conditions $X_i(t_0)\geq0$, $i=1,2,\dots,n$, $K>0$, $k_1>0$, $\mu\geq0$, $\nu_n\geq0$ and the input $u[X_n]$ defined in \eqref{input}. 

Suppose that there are positive values $Y_1^*,Y_2^*,\dots,Y_{n-1}^*$ such that 
\begin{align*}
0&=f_1(Y_1^*,Y_2^*)-f_2(Y_2^*,Y_3^*)-\mu\cdot Y_2^*\\\notag
0&=f_2(Y_2^*,Y_3^*)-f_3(Y_3^*,Y_4^*)-\mu\cdot Y_3^*\\\notag
\vdots\\\notag
0&=f_{n-1}(Y_{n-1}^*,K)-\nu_nf_n(K)-\mu\cdot K\notag,
\end{align*}and define
\begin{align*}
 \alpha:=\frac{1}{k_1}\Big(\sum_{i=1}^{n-1}\mu\cdot Y_i^* + \mu\cdot K + \nu_nf_n(K)\Big).
\end{align*}

For every $m\in\mathbb N$, let $\psi^\varepsilon(t):=(\psi^\varepsilon_1,\psi^\varepsilon_2,\dots,\psi^\varepsilon_n)$ be the solution for the continuous differential equation
\begin{align*}
\frac{d\psi^\varepsilon}{dt}&=f\begin{pmatrix}U_\varepsilon\big[\psi^\varepsilon_n\big], \psi^\varepsilon,\mu,\nu \end{pmatrix}
\end{align*}where 
\begin{align*}
U_\varepsilon\big[\psi^\varepsilon_n\big]:=\begin{cases}
k_1&\text{ if }\psi^\varepsilon_n<K-\varepsilon\\
\frac{k_1(\alpha-1)}{\varepsilon}(\psi^\varepsilon_n-K)+\alpha\cdot k_1&\text{ if }K-\varepsilon\leq\psi^\varepsilon_n<K\\
\frac{\alpha\cdot k_1}{\varepsilon}(K+\varepsilon-\psi^\varepsilon_n)&\text{ if }K\leq\psi^\varepsilon_n<K+\varepsilon\\
0&\text{ if }\psi^\varepsilon_n>K+\varepsilon\\
\end{cases}
\end{align*}$\varepsilon$ is a small number and the initial conditions $\psi^\varepsilon_i(t_0)=X_i(t_0)$ for all $i=1,2,\dots,n$.

{\color{black}Then}, for a.e. $t\in[t_0,\infty)$,
\begin{align*}
\lim_{\varepsilon\to0}\psi^\varepsilon_i(t)&=X_i(t)&\forall i=1,2,\dots,n.
\end{align*}
\end{proposition}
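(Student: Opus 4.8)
The plan is to follow exactly the strategy used for the smooth input in \Cref{smooth}: show that the piecewise linear inputs $U_\varepsilon$ converge, as $\varepsilon\to0$, to the threshold-dependent selection that defines the Filippov inclusion of the switched system, and then invoke the convergence result of Filippov (Lemma~3, Section~7, Chapter~2, p.~82 of \cite{filippov1988differential}). The role of the two intermediate linear pieces pinned to the value $\alpha\cdot k_1$ at $\psi^\varepsilon_n=K$ is precisely to make the approximating input reproduce the sliding-mode selection, which a naive interpolation between $k_1$ and $0$ would miss.

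First I would check that $U_\varepsilon$ is continuous in $\psi^\varepsilon_n$ for each fixed $\varepsilon>0$: at $\psi^\varepsilon_n=K-\varepsilon$ the first and second branches both give $k_1$, at $\psi^\varepsilon_n=K$ the second and third branches both give $\alpha\cdot k_1$, and at $\psi^\varepsilon_n=K+\varepsilon$ the third and fourth branches both give $0$. Letting $\varepsilon\to0$, for any fixed $\psi^\varepsilon_n\neq K$ the argument eventually leaves the transition window $[K-\varepsilon,K+\varepsilon]$, so $U_\varepsilon\to k_1$ when $\psi^\varepsilon_n<K$ and $U_\varepsilon\to0$ when $\psi^\varepsilon_n>K$, while $U_\varepsilon=\alpha\cdot k_1$ exactly at $\psi^\varepsilon_n=K$. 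This limiting three-valued input is identical to the one appearing in the differential inclusion $\mathbf F$ of \Cref{lemma_uniqueness}, with $\alpha\cdot k_1$ the selection associated to the sliding mode.

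Next I would establish the compactness needed to pass to the limit. By comparison with constant-input systems, as in the proof of \Cref{lemma8} together with the ordering of \Cref{lemma5}, each $\psi^\varepsilon$ is trapped between solutions of constant-input systems whose inputs bracket the range of $U_\varepsilon$; hence the family $\{\psi^\varepsilon\}$ is uniformly bounded on every compact time interval. Since the $f_i$ are continuous and the inputs are uniformly bounded, the right-hand sides $f(U_\varepsilon,\psi^\varepsilon,\mu,\nu_n)$ are uniformly bounded on such intervals, so the family is equicontinuous. Filippov's Lemma~3 then yields a subsequence converging uniformly on compacts to an absolutely continuous function solving $\dot X\in\mathbf F(t,X)$. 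Because right uniqueness holds in $[t_0,\infty)\times[0,\infty)^n$ by \Cref{lemma_uniqueness}, this limit is the unique switched solution $X$, and uniqueness of the limit forces the whole family, not merely a subsequence, to converge to $X$, in particular for a.e. $t\in[t_0,\infty)$.

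The main obstacle I anticipate is the behavior at the threshold $X_n=K$, where the convergence $U_\varepsilon\to\mathbf F$ must be verified in the exact sense demanded by Filippov's lemma, namely that for every $\delta>0$ the graph of $U_\varepsilon$ is eventually contained in the $\delta$-neighborhood of the convex selection set $\{\alpha'\cdot k_1:\alpha'\in[0,1]\}$ used in $\mathbf F$ at $X_n=K$. One must additionally confirm that the finitely many threshold crossings during the transient or oscillatory phases do not disrupt the limit: here I would use \Cref{lemma11} and \Cref{lemma12} to control the sign of $\dot X_n$ near each crossing, so that on each switching subinterval the approximating dynamics and the limiting inclusion select the same mode, and the limit agrees with the switched solution $X$ throughout.
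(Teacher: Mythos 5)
Your proposal takes essentially the same route as the paper: the paper's proof simply observes that $\lim_{\varepsilon\to0}U_\varepsilon[\psi^\varepsilon_n]$ equals $k_1$ for $\psi^\varepsilon_n<K$, $\alpha\cdot k_1$ at $\psi^\varepsilon_n=K$, and $0$ for $\psi^\varepsilon_n>K$, and then cites Lemma~3 of Section~7, Chapter~2 (p.~82) of \cite{filippov1988differential}, exactly as you do. Your additional verifications (continuity of $U_\varepsilon$, uniform bounds and equicontinuity via \Cref{lemma5} and \Cref{lemma8}, and the use of right uniqueness from \Cref{lemma_uniqueness} to upgrade subsequential to full convergence) are correct elaborations of the hypotheses that the paper leaves implicit in its citation of Filippov's lemma.
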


\begin{proof}
Notice that
\begin{align*}
\lim_{\varepsilon\to0}U_\varepsilon\big[\psi_n^\varepsilon\big]=\begin{cases}
k_1&\text{if } \psi_n^\varepsilon<K\\
\alpha\cdot k_1&\text{if }\psi_n^\varepsilon=K\\
0&\text{if }\psi_n^\varepsilon>K
\end{cases}.
\end{align*}
The result follows from Lemma 3 in Section 7, Chapter 2 (p. 82) of \cite{filippov1988differential}.
\end{proof}

\begin{example}
Consider the {\color{black}switching} system \eqref{rMM} with reversible Michaelis-Menten kinetics introduced in \Cref{section_rMM}. \Cref{pslex1}, \Cref{pslex2} and \Cref{pslex3} illustrate the result of \Cref{continuous}.

\begin{figure}[!]\label{pslex1}
\includegraphics[width=\columnwidth]{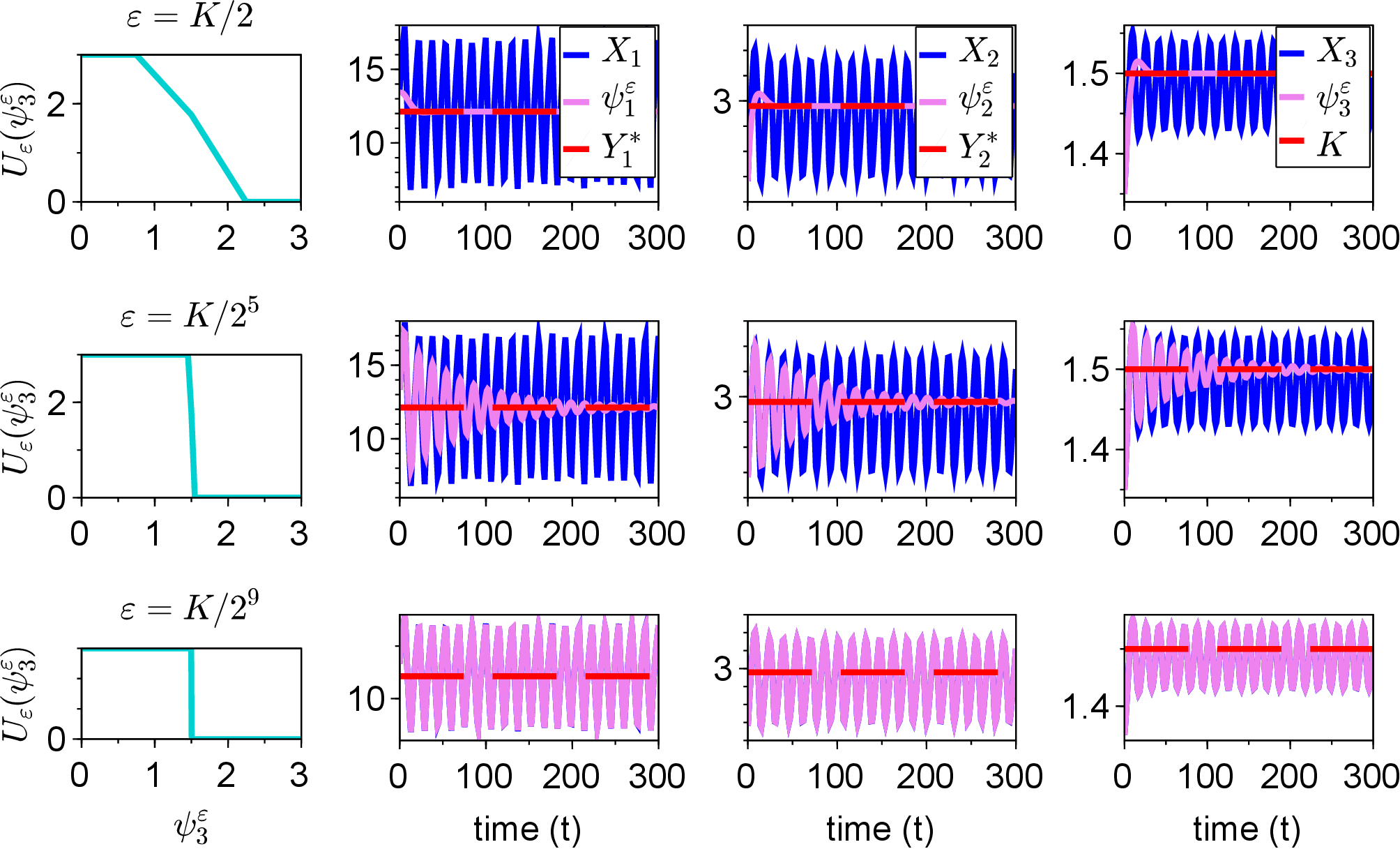}
\caption{The {\color{black}switching} system \eqref{rMM} oscillates around the sliding mode $(Y_1^*, Y_2^*,K)$. Functions $\psi^\varepsilon$ approximate $X$ as $\varepsilon\to0$. All parameters are equal to $1$ except for $k_1=3$, $\nu_n=0.2$, $\mu=0.1$ and $K=1.5$. The initial conditions are $X_1(0)=13.34$, $X_2(0)=2.68$ and $X_3(0)=1.35$. }
\end{figure} 

\begin{figure}[!]\label{pslex2}
\includegraphics[width=\columnwidth]{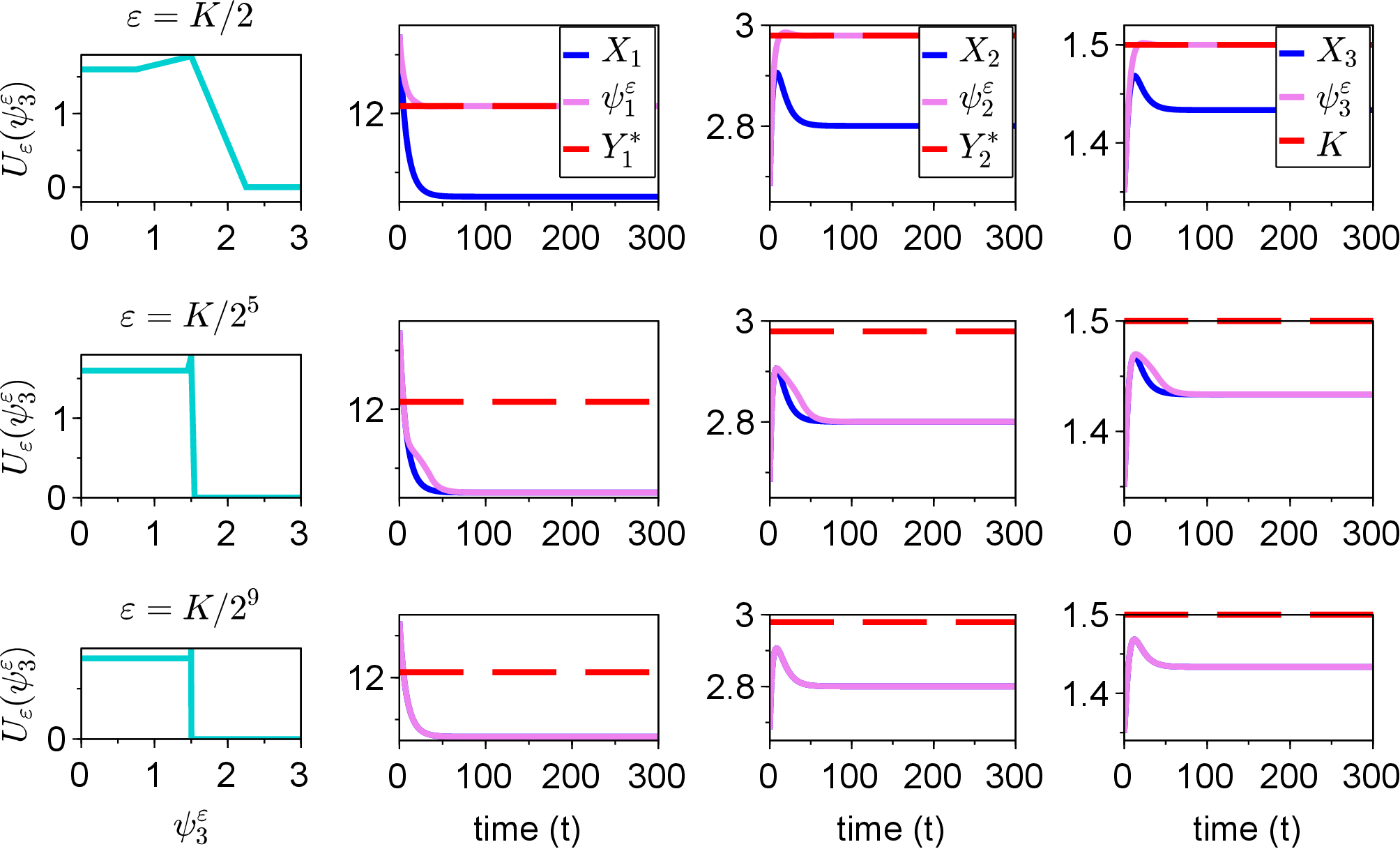}
\caption{The {\color{black}switching} system \eqref{rMM} converges to an equilibrium point {\color{black}smaller than} the sliding mode $(Y_1^*, Y_2^*,K)$. Functions $\psi^\varepsilon$ approximate $X$ as $\varepsilon\to0$. All parameters are equal to $1$ except for $k_1=1.6$, $\nu_n=0.2$, $\mu=0.1$ and $K=1.5$. The initial conditions are  $X_1(0)=13.34$, $X_2(0)=2.68$ and $X_3(0)=1.35$. }
\end{figure} 

\begin{figure}[!]\label{pslex3}
\includegraphics[width=\columnwidth]{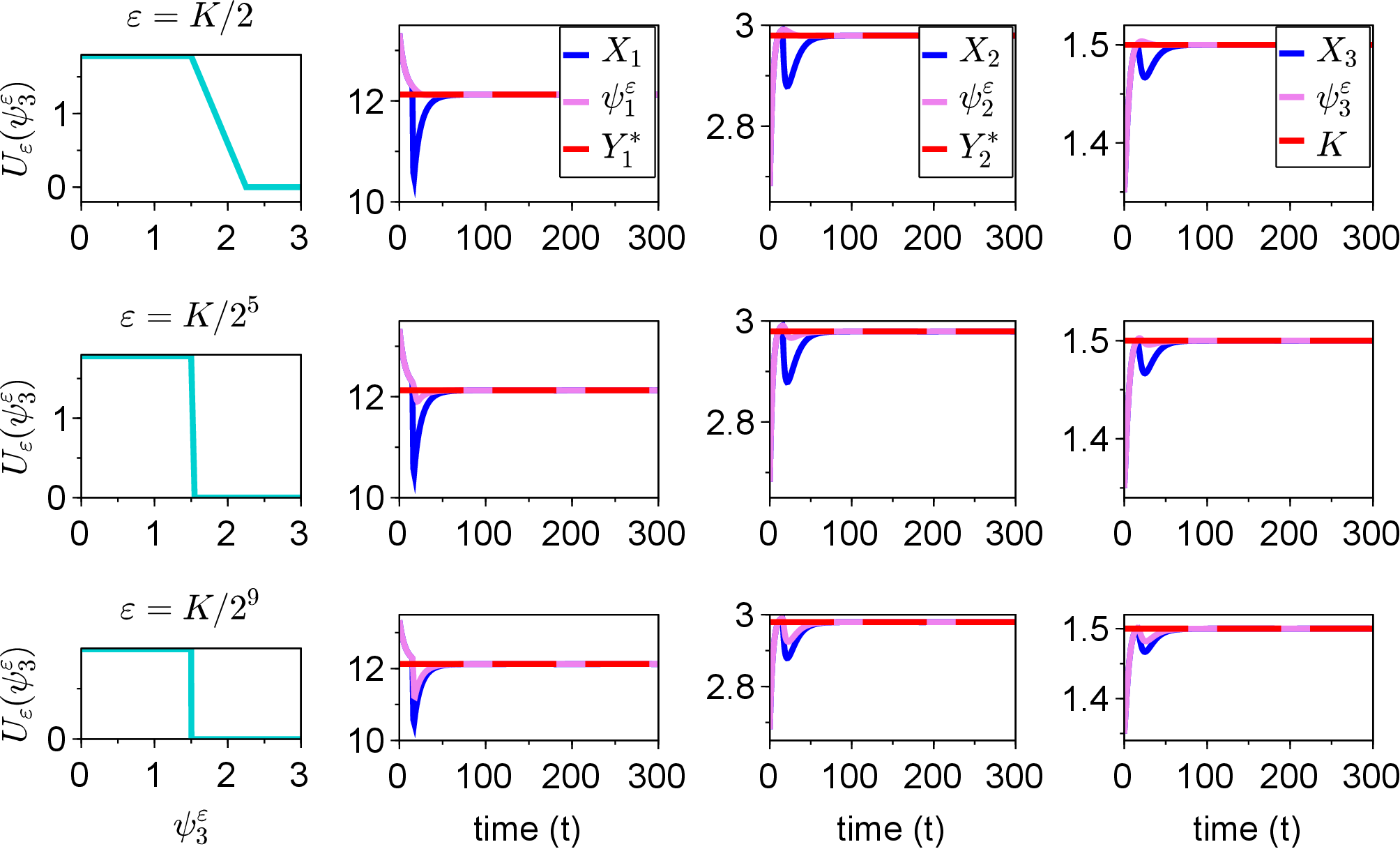}
\caption{The {\color{black}switching} system \eqref{rMM} converges to the sliding mode $(Y_1^*, Y_2^*,K)$. Functions $\psi^\varepsilon$ approximate $X$ as $\varepsilon\to0$. All parameters are equal to $1$ except for $k_1=\sum_{i=1}^{n-1}\mu\cdot Y_i^* + \mu\cdot K + \nu_nf_n(K)\sim 1.78$, $\nu_n=0.2$, $\mu=0.1$ and $K=1.5$. The initial conditions are $X_1(0)=13.34$, $X_2(0)=2.68$ and $X_3(0)=1.35$. }
\end{figure} 
\end{example}

\newpage

\section{Discussion and Conclusions}
In this work, a model to represent allosteric regulation in a metabolic pathway is presented. In this approach, the enzymes are considered to have slow dynamics and to be therefore constant. Metabolites have faster dynamics and a ODE system based on an end-product control structure \cite{goelzer2008reconstruction, goelzer2014towards} is studied. 

Considering that allosteric processes occur very fast, the mechanis{\color{black}m} of regulation is supposed to act as a switched feedback control that is modulated according to the concentration of the end-product of the metabolic pathway. Then, a differential inclusion is defined for the discontinuous {\color{black}switching} system and the existence and right uniqueness of an absolutely continuous solution is proved.

Moreover, the qualitative behavior of the absolutely continuous solution is analyzed and three possible trajectories are observed:
\begin{itemize}
\item There is a positive sliding mode and the {\color{black}switching} system reaches it. Then the {\color{black}switching} system stabilizes at the sliding mode.
\item There is a positive sliding mode and the constant input ON is larger than a threshold defined by the sliding mode. Then the {\color{black}switching} system oscillates around the sliding mode.
\item The system with the constant input ON has a positive equilibrium point and the {\color{black}switching} system converges uniformly and asymptotically to this, because the constant input ON is smaller than the threshold defined by the positive sliding mode (in this case the equilibrium point of the {\color{black}switching} system is equal to or lower than the sliding mode entry by entry), or because there is no positive sliding mode.  
\end{itemize}

{\color{black}Novel features presented in this work for the study of oscillatory behaviors are the reversibility of reactions in the metabolic pathway and the use of differential inclusions to characterize the system solutions.

The approach of Filippov \cite{filippov1988differential} was necessary to study the discontinuous systems \eqref{end_product_feedback}, \eqref{end_product_feedback_st} and \eqref{diff_decay}. The existence and uniqueness of their solutions are based on that. In \Cref{sec_continuous}, we have proved that the dynamics of the discontinuous system can be approximated by continuous equations with more complex inputs than the classical sigmoid or monotone piecewise linear functions. We can then conclude that the analysis with differential inclusions has allowed to also rigorously characterize the dynamical behavior of a class of continuous feedback systems with reversible reactions and smooth or piecewise linear inputs. These results are new to our knowledge, specially concerning the oscillatory behavior.}


Finally, the results obtained with this approach are potentially useful for reducing a genetic-metabolic network with slow and fast dynamics using the theory of singularly perturbed system.

\appendix
\section{Proof of \Cref{lemma1} when $\mu=0$ and $\nu_n>0$}\label{proof_gouze}
\begin{proof}
Define the (Lyapunov) norm-like function $$V(X):=\sum_{i=1}^n\vert X_i-X_i^*\vert.$$ $V$ is nonnegative and $V(X)=0$ if and only if $X_i=X_i^*$ for every $i=1,\dots,n$. Moreover,
\begin{align*}
\dot V=&\sum_{i=1}^n \dot X_i\cdot sgn(X_i-X_i^*).
\end{align*}

On the other hand, the existence of the equilibrium point guarantees 
\begin{align*}
f_i(X_i^*,X_{i+1}^*)&=\mathbf I & \forall i=1,\dots,n-1,\\
\nu_nf_n(X_n^*)&=\mathbf I.
\end{align*}

Then, after some algebraic computations we obtain
\begin{align*}
\dot V\leq&0
\end{align*}Therefore, $V(X)$ is decreasing. This implies that the trajectories of $X$ are bounded, since the distance to the nonnegative equilibrium point $X^*$ is nonincreasing.

On the other hand, the Jacobian of \eqref{input_on_I}
$$J(X)=\begin{pmatrix}
-\frac{\partial f_1}{\partial X_1} & -\frac{\partial f_1}{\partial X_2}& \dots & 0 \\
\frac{\partial f_1}{\partial X_1} & \frac{\partial f_1}{\partial X_2}-\frac{\partial f_2}{\partial X_2}  & \dots &0 \\
0 & \frac{\partial f_2}{\partial X_2}  & \dots & 0 \\
\vdots & \vdots & \ddots & \vdots \\
0 & 0 & \dots & -\frac{\partial f_{n-1}}{\partial X_n}\\
0 & 0 & \dots & \frac{\partial f_{n-1}}{\partial X_n}-\nu_n\frac{\partial f_n}{\partial X_n}
\end{pmatrix}$$is a compartmental matrix for every $X$ thanks to the monotonicity of the functions $f_i$ established Assumption \ref{assumption1}. By Theorem 5 in \cite{jacquez1993qualitative}, every orbit of \eqref{input_on_K} tends to the equilibrium set, i.e. the equilibrium $X^*$ is globally attractive.
 
Moreover, since $f_n$ is strictly increasing w.r.t. $X_n$, $\nu_n\frac{\partial f_n}{\partial X_n}\not=0$. Then, the Jacobian $J$ is out-flow connected. Hence, by Theorem 3 in \cite{jacquez1993qualitative}, $J$ is nonsingular. By the Gershgorin Disc Theorem, a column diagonally dominant nonsingular matrix is stable. Therefore, $X^*$ is locally asymptotically stable. But, since $X^*$ is also globally attractive, we conclude that $X^*$ is globally asymptotically stable. Finally, since system \eqref{input_on_I} is autonomous, $X^*$ is globally uniformly asymptotically stable.

\end{proof}


\section{Proofs of Lemmas}\label{supp_mat}
\begin{proof}[Proof of \Cref{lemma4}]
Using the monotonicity conditions set in Assumption \ref{assumption1} for the functions $f_i$, it can be proved that $\Omega_j$ is invariant showing that $X(t)$ points into $\Omega_j$ on the boundary of $\Omega_j$ for every $j=1,2,3,4$.
\end{proof}

\begin{proof}[Proof of \Cref{lemma2}]
We will first prove that $Y_n^*<X_n^*$ by contradiction. Suppose that $X_n^*\leq Y_n^*$. We will prove by induction that $X_i^*\leq Y_i^*$ for all $i=1,2,\dots,n-1$ as well.

First, since $f_n$ is strictly increasing and $0\leq\mu$, this implies
\begin{align*}
&f_{n-1}(X_{n-1}^*,X_n^*)=\nu_nf_n(X_n^*)+\mu\cdot X_n^*\\
\leq&\nu_nf_n(Y_n^*)+\mu\cdot Y_n^*=f_{n-1}(Y_{n-1}^*,Y_n^*).
\end{align*}But $f_{n-1}$ is decreasing w.r.t. to the second entry and $X_n^*\leq Y_n^*$, then
\begin{align*}
f_{n-1}(X_{n-1}^*,X_n^*)\leq f_{n-1}(Y_{n-1}^*,Y_n^*)\leq f_{n-1}(Y_{n-1}^*,X_n^*).
\end{align*}Hence, since $f_{n-1}$ is strictly increasing w.r.t. to the first entry,
$$X_{n-1}^*\leq Y_{n-1}^*.$$
The induction hypothesis is that for some $m<n-1$ it is satisfied
\begin{align*}
X_i^*&\leq Y_i^*&\forall i>m.
\end{align*}We will prove that $X_m^*\leq Y_m^*$. The existence of the equilibrium points guarantees that
\begin{align*}
f_m(X_m^*,X_{m+1}^*)&=f_{m+1}(X_{m+1}^*,X_{m+2}^*)+\mu\cdot X_{m+1}^*\\
&=\nu_nf_n(X_n^*)+\sum_{i=m+1}^n\mu\cdot X_i^*,
\end{align*}and
\begin{align*}
f_m(Y_m^*,Y_{m+1}^*)&=f_{m+1}(Y_{m+1}^*,Y_{m+2}^*)+\mu\cdot Y_{m+1}^*\\
&=\nu_nf_n(Y_n^*)+\sum_{i=m+1}^n\mu\cdot Y_i^*,
\end{align*}
But, by the induction hypothesis,
\begin{align*}
\nu_nf_n(X_n^*)+\sum_{i=m+1}^n\mu\cdot X_i^*\leq \nu_nf_n(Y_n^*)+\sum_{i=m+1}^n\mu\cdot Y_i^*.
\end{align*}Thus, 
\begin{align*}
f_m(X_m^*,X_{m+1}^*)&\leq f_m(Y_m^*,Y_{m+1}^*)\leq f_m(Y_m^*,X_{m+1}^*),
\end{align*}where the second inequality is due to hypothesis of induction $X_{m+1}^*\leq Y_{m+1}^*$ and to that $f_m$ is decreasing w.r.t. the second entry. Moreover, since $f_m$ is strictly increasing w.r.t. the first entry, we conclude that $$X_m^*\leq Y_m^*.$$ 

Therefore, we have proved by induction that assuming $X_n^*\leq Y_n^*$ implies
\begin{align*}
X_i^*&\leq Y_i^*&\forall i=1,\dots,n-1.
\end{align*}But, since $f_n$ is strictly increasing, $0\leq\nu_n$ and $0\leq\mu$, this leads to conclude
\begin{align*}
\mathbf I_1=\sum_{i=1}^n\mu\cdot X_i^*+\nu_nf_n(X_n^*)\leq\sum_{i=1}^n\mu\cdot Y_i^*+\nu_nf_n(Y_n^*)=\mathbf I_2,
\end{align*}which contradicts the hypothesis of that $\mathbf I_2<\mathbf I_1$. We conclude that $Y_n^*<X_n^*$.
With similar arguments as above, we prove that $Y_i^*<X_i^*$ for every $i=1,2,\dots,n-1$. 
\end{proof}

\begin{proof}[Proof of \Cref{lemma9}]
We have that 
\begin{align*}
f_{n-1}(X_{n-1}^*,X_n^*)=\nu_nf_n(X_n^*)+\mu\cdot X_n^*.
\end{align*}Then, since $f_{n-1}$ is continuous and strictly increasing w.r.t. the first entry,
\begin{align*}
\nu_nf_n(X_n^*)+\mu\cdot X_n^*<\lim_{X_{n-1}\to \infty}f_{n-1}(X_{n-1},X_n^*).
\end{align*}By the continuity of $f_n$ and $f_{n-1}$, it follows that
\begin{align*}
&\nu_nf_n(X_n^*)+\mu\cdot X_n^*=\lim_{X_n\to X_n^*}\nu_nf_n(X_n)+\mu\cdot X_n\\
<&\lim_{X_{n-1}\to \infty}f_{n-1}(X_{n-1},X_n^*)=\lim_{X_n\to X_n^*}\lim_{X_{n-1}\to \infty}f_{n-1}(X_{n-1},X_n).
\end{align*}

Then, there exists $X_n^*<X_n'$ such that 
\begin{align*}
&\nu_nf_n(X_n^*)+\mu\cdot X_n^*<\nu_nf_n(X_n')+\mu\cdot X_n'<\lim_{X_{n-1}\to \infty}f_{n-1}(X_{n-1},X_n').
\end{align*}

By induction it can be proved that for any sequence $\{X_n^{\alpha(j)}\}_{j\in\mathbb N}$ such that $X_n^{\alpha(j)}\in(X_n^*,X']$ for every $j=1,2,\dots$ and $$\lim_{j\to\infty}X_n^{\alpha(j)}=X_n^*,$$there exist{\color{black}s} $X_i^*<X_i^{\alpha(j)}$, $i=1,2,\dots,n-1$, such that 
\begin{align*}
0&=f_1(X_1^{\alpha(j)},X_2^{\alpha(j)})-f_2(X_2^{\alpha(j)},X_3^{\alpha(j)})-\mu\cdot X_2^{\alpha(j)}\\
0&=f_2(X_2^{\alpha(j)},X_3^{\alpha(j)})-f_3(X_3^{\alpha(j)},X_4^{\alpha(j)})-\mu\cdot X_3^{\alpha(j)}\\
\vdots\\
0&=f_{n-1}(X_{n-1}^{\alpha(j)},X_n^{\alpha(j)})-\nu_nf_n(X_n^{\alpha(j)})-\mu\cdot X_n^{\alpha(j)}.
\end{align*}Define
$$\mathbf I^{\alpha(j)}:=f_1(X_1^{\alpha(j)},X_2^{\alpha(j)})+\mu\cdot X_1^{\alpha(j)}.$$
This {\color{black}is satisfied}, because $f_n$ is strictly increasing,
\begin{align*}
\mathbf I_1&=f_n(X_n^*)+\sum_{i=1}^n\mu\cdot X_i^*\\
&<f_n(X_n^{\alpha(j)})+\sum_{i=1}^n\mu\cdot X_i^{\alpha(j)}=\mathbf I^{\alpha(j)},
\end{align*}and
$$\lim_{j\to\infty}\mathbf I^{\alpha(j)}=\mathbf I_1.$$

We conclude that, for any $0<\varepsilon$, there exists $\mathbf I_1<\mathbf I^{\alpha(j')}<\mathbf I_1+\varepsilon$ such that $X_i^*<X_i^{\alpha(j')}$ for all $i=1,\dots,n$ and
\begin{align*}
0&=\mathbf I^{\alpha(j')}-f_1( X_1^{\alpha(j')}, X^{\alpha(j')}_2)-\mu\cdot  X^{\alpha(j')}_1\\
0&=f_1( X^{\alpha(j')}_1, X^{\alpha(j')}_2)-f_2( X^{\alpha(j')}_2, X^{\alpha(j')}_3)-\mu\cdot  X^{\alpha(j')}_2\\
\vdots\\\notag
0&=f_{n-1}( X^{\alpha(j')}_{n-1}, X^{\alpha(j')}_n)-\nu_nf_n( X^{\alpha(j')}_n)-\mu\cdot  X^{\alpha(j')}_n.
\end{align*}

\end{proof}

\begin{proof}[Proof of \Cref{lemma5}]
Define
\begin{align*}
Z_i(t)&:=X_i(t)-Y_i(t)& t\geq t_0, \forall i=1,2,\dots,n.
\end{align*}
The first conclusion of the Lemma follows from the fact that the sets
$$\mathbb R^n_+:=\underset{n\text{-times}}{[0,\infty)\times[0,\infty)\times\dots\times[0,\infty)},$$
$$int(\mathbb R^n_+):=\underset{n\text{-times}}{(0,\infty)\times(0,\infty)\times\dots\times(0,\infty)},$$are invariant under the flow $Z(t)$ if $\mathbf I_2<\mathbf I_1$.

On the other hand, if $\mathbf I_1=\mathbf I_2$ and $Y_i(t_0)<X_i(t_0)$ for every $i=1,2, \dots,n$, then, by the continuity of $Z$, there is $\varepsilon>0$ such that for all $i=1,2,\dots,n$
\begin{align*}
0<Z_i(t)&:=X_i(t)-Y_i(t)& t_0<t<t_0+\varepsilon.
\end{align*}With similar arguments as above, it can be proved that $\mathbb R^n_+$ is positively invariant under the flow of $Z$. 
\end{proof}

\begin{proof}[Proof of \Cref{lemma11}]
The proof can be done by induction over $m$, i.e., first proving the Lemma when $m=n-1$, assuming the Lemma as the induction hypothesis for some $m+1$ and proving it for the case $m$.
\end{proof}

\begin{proof}[Proof of \Cref{lemma12}]
By hypothesis,
\begin{align*}
sgn(\dot{\widetilde X}_n(t_0))&=sgn(\dot Z_n(t_0)).
\end{align*}
Suppose that $sgn(\dot{\widetilde X}_n(t_0))<0$. Then, by the continuity of $\dot{\widetilde X}_n$ and $\dot Z_n$, there exist $\varepsilon_1$ and $\varepsilon_2$ such that
\begin{align*}
sgn(\dot{\widetilde X}_n(t))&<0&\forall t\in(t_0,t_0+\varepsilon_1),\\
sgn(\dot Z_n(t))&<0&\forall t\in(t_0,t_0+\varepsilon_2).
\end{align*}Hence, if $\varepsilon:=\min\{\varepsilon_1,\varepsilon_2\}$,
\begin{align*}
sgn(\dot{\widetilde X}_n(t))&=sgn(\dot Z_n(t))<0&\forall t\in(t_0,t_0+\varepsilon).
\end{align*}

Analogously, if $0<sgn(\dot{\widetilde X}_n(t_0))$, by the continuity of $\dot{\widetilde X}_n$ and $\dot Z_n$, there exist $\varepsilon_1$ and $\varepsilon_2$ such that
\begin{align*}
0<&sgn(\dot{\widetilde X}_n(t))&\forall t\in(t_0,t_0+\varepsilon_1),\\
0<&sgn(\dot Z_n(t))&\forall t\in(t_0,t_0+\varepsilon_2).
\end{align*}Hence, if $\varepsilon:=\min\{\varepsilon_1,\varepsilon_2\}$,
\begin{align*}
0<sgn(\dot{\widetilde X}_n(t))&=sgn(\dot Z_n(t))&\forall t\in(t_0,t_0+\varepsilon).
\end{align*}

Assume that $sgn(\dot{\widetilde X}_n(t_0))=0$. If there exist $\varepsilon_1$ and $\varepsilon_2$ such that
\begin{align*}
sgn(\dot{\widetilde X}_n(t))<0&&\forall t\in(t_0,t_0+\varepsilon_1),\\
0<sgn(\dot Z_n(t))&&\forall t\in(t_0,t_0+\varepsilon_2),
\end{align*}then, 
\begin{align*}
\dot{\widetilde X}_n(t)<\dot{\widetilde X}_n(t_0)<\dot Z_n(t)&&\forall t\in(t_0,t_0+\varepsilon),
\end{align*}where $\varepsilon:=\min\{\varepsilon_1,\varepsilon_2\}$. But the inequality above contradicts \Cref{lemma5}, since $0<k_1$.\\

We cannot suppose there exist $\varepsilon_1$ and $\varepsilon_2$ such that
\begin{align*}
sgn(\dot{\widetilde X}_n(t))&=0&\forall t\in[t_0,t_0+\varepsilon_1)&&\text{or}\\
sgn(\dot Z_n(t))&=0&\forall t\in[t_0,t_0+\varepsilon_2),
\end{align*}because this leads to conclude that the systems are at equilibrium in $[t_0,t_0+\varepsilon)$.

But this contradicts the hypothesis, as it was assume{\color{black}d} 
\begin{align*}
\dot{Z}_m(t_0)&=\dot{\widetilde X}_m(t_0)\not=0&\text{for some } m\in\{1,2,\dots,n\}.
\end{align*}

Now suppose $sgn(\dot{\widetilde X}_n(t_0))=0$ and there exist $\varepsilon_1$ and $\varepsilon_2$ such that
\begin{align*}
0<sgn(\dot{\widetilde X}_n(t))&&\forall t\in(t_0,t_0+\varepsilon_1),\\
sgn(\dot Z_n(t))<0&&\forall t\in(t_0,t_0+\varepsilon_2).
\end{align*}

(One option is to say that this contradicts \Cref{lemma_uniqueness} (existence of solution theorem of Filippov, Theorem 1, p. 77 in \cite{filippov1988differential}), another option is as follows).

According to the hypothesis, $\dot{\widetilde X}_m(t_0)=\dot Z_m(t_0)\not=0$ for some $1<m<n$. Without loss of generality, assume that $\dot{\widetilde X}_i(t_0)=\dot Z_i(t_0)=0$ for every $m<i$ Hence, let us assume that there is $0<\varepsilon<\min\{\varepsilon_1,\varepsilon_2\}$ such that
\begin{align*}
0&<sgn(\dot{\widetilde X}_m(t))=sgn(\dot Z_m(t))&\forall t\in[t_0,t_0+\varepsilon).
\end{align*}

This implies, according to \Cref{lemma11},
\begin{align*}
0&<sgn(\dot{\widetilde X}_i(t))=sgn(\dot Z_i(t))&\forall t\in(t_0,t_0+\varepsilon), \forall i>m,
\end{align*}that contradicts the supposition 
\begin{align*}
sgn(\dot Z_n(t))<0&&\forall t\in(t_0,t_0+\varepsilon_2).
\end{align*}

Analogously, by \Cref{lemma11}, assuming 
\begin{align*}
sgn(\dot{\widetilde X}_m(t))=sgn(\dot Z_m(t))&<0&\forall t\in[t_0,t_0+\varepsilon),
\end{align*}implies for all $i>m$
\begin{align*}
sgn(\dot{\widetilde X}_i(t))=sgn(\dot Z_i(t))<0&&\forall t\in(t_0,t_0+\varepsilon),
\end{align*}that contradicts the supposition 
\begin{align*}
0<sgn(\dot {\widetilde X}_n(t))&&\forall t\in(t_0,t_0+\varepsilon_1).
\end{align*}

We conclude that is false that $sgn(\dot{\widetilde X}_n(t_0))=0$ and there exist $\varepsilon_1$ and $\varepsilon_2$ such that
\begin{align*}
0<sgn(\dot{\widetilde X}_n(t))&&\forall t\in(t_0,t_0+\varepsilon_1),\\
sgn(\dot Z_n(t))<0&&\forall t\in(t_0,t_0+\varepsilon_2).
\end{align*}

\end{proof}

\begin{proof}[Proof of \Cref{lemma10}]
Consider any $X_n'\in(0,X_n^*)$. According to Assumption \ref{assumption1}, 
$$f_{n-1}(0,X_n')\leq0<\nu_nf_n(X_n')+\mu\cdot X_n'< f_{n-1}(X_{n-1}^*,X_n')$$
Then, since $f_{n-1}$ is strictly increasing w.r.t. the first entry, there exists an unique $X_{n-1}'\in(0,X_{n-1}^*)$ such that
\begin{align*}
f_{n-1}(X_{n-1}',X_n')=\nu_nf_n(X_n')+\mu\cdot X_n'.
\end{align*}The proof follows by induction. 
\end{proof}

\begin{proof}[Proof of \Cref{lemma8}]
The trivial case where the hybrid system $X$ never switches follows from \Cref{lemma5}.

Now suppose that the hybrid system switches. By the continuity of $X_n$, there exists a countable $\mathbf N:=\{0,1,2,\dots,k\}\subset\mathbb N$ and values $\{t_j\}_{j\in\bold N}\subset\mathbb R^+$, with $t_j<t_{j+1}$ for every $j\in\bold N$, such that the {\color{black}switching} system is under a single regime in any interval $(t_j,t_{j+1})$ for every $j\in\bold N$ (i.e. $X_n(t)\leq K$ for all $t\in(t_j,t_{j+1})$ or $K<X_n(t)$ for all $t\in(t_j,t_{j+1})$).

It can be proved by induction, using \Cref{lemma5}, that for every $j\in\bold N$,
\begin{align*}
X_i(t)&\leq \widetilde X_i(t)&\forall t\in(t_j,t_{j+1}), \forall i=1,2,\dots,n.
\end{align*}Furthermore, by continuity of $X_i$ and $\widetilde X_i$,
\begin{align*}
X_i(t_j)&\leq\widetilde X_i(t_j)& \forall i=1,2,\dots,n,&\forall j\in\bold N.
\end{align*} 
\end{proof}

{\color{black}\section{Comparison with a piecewise linear model}\label{app_pasternack}
Here we present an example for comparison with the approach of piecewise linear models first proposed by Glass and Pasternack in \cite{glass1978stable}. The following piecewise linear model is based on the model proposed by Poignard et al. in \cite{poignard2016periodic}:
\begin{align}\label{poignard}
\frac{dX_{1,Bool}}{dt}&:=k_1\cdot (1-s^+(X_{3,Bool},K)) -\mu\cdot X_{1,Bool}\\\notag
\frac{dX_{2,Bool}}{dt}&:=k_2\cdot s^+(X_{1,Bool},Y_1^*)-\mu\cdot X_{2,Bool}\\\notag
\frac{dX_{3,Bool}}{dt}&:=k_3\cdot s^+(X_{2,Bool},Y_2^*)-\mu\cdot X_{3,Bool},
\end{align}where the boolean functions are defined as
\begin{align*}
s^+(X_{i,Bool},Y_i^*)&:=\begin{cases}
1&\text{if }X_{i,Bool}>Y_i^*\\
0&\text{if }X_{i,Bool}<Y_i^*
\end{cases}&i=1,2,\\
s^+(X_{3,Bool},K)&:=\begin{cases}
1&\text{if }X_{3,Bool}>K\\
0&\text{if }X_{3,Bool}<K.
\end{cases}
\end{align*}
To compare the piecewise model \eqref{poignard} with a hybrid model as proposed in this article consider
\begin{align}\label{lopez}
\frac{dX_1}{dt}&:= u[X_3] - k_2\cdot X_1 - \mu\cdot X_1\\\notag
\frac{dX_2}{dt}&:= k_2\cdot X_1 - k_3\cdot X_2 - \mu\cdot X_2\\\notag
\frac{dX_3}{dt}&:= k_3\cdot X_2 - \mu\cdot X_3,
\end{align}where $u[X_3]$ is defined as in \eqref{input}. Notice that, in contrast to \eqref{poignard}, mass-action kinetics are considered in \eqref{lopez}. This property of the class of models studied in this work has been recurrently used in the demonstration of our results. 
In \Cref{fig_compare} are depicted numerical solutions for systems \eqref{poignard} and \eqref{lopez}. They both exhibit oscillations crossing the sliding mode $(Y_1^*, Y_2^*, K)$, but with different magnitude and amplitude of oscillations.

\begin{figure}[H]\label{fig_compare}\centering
\includegraphics[width=12cm]{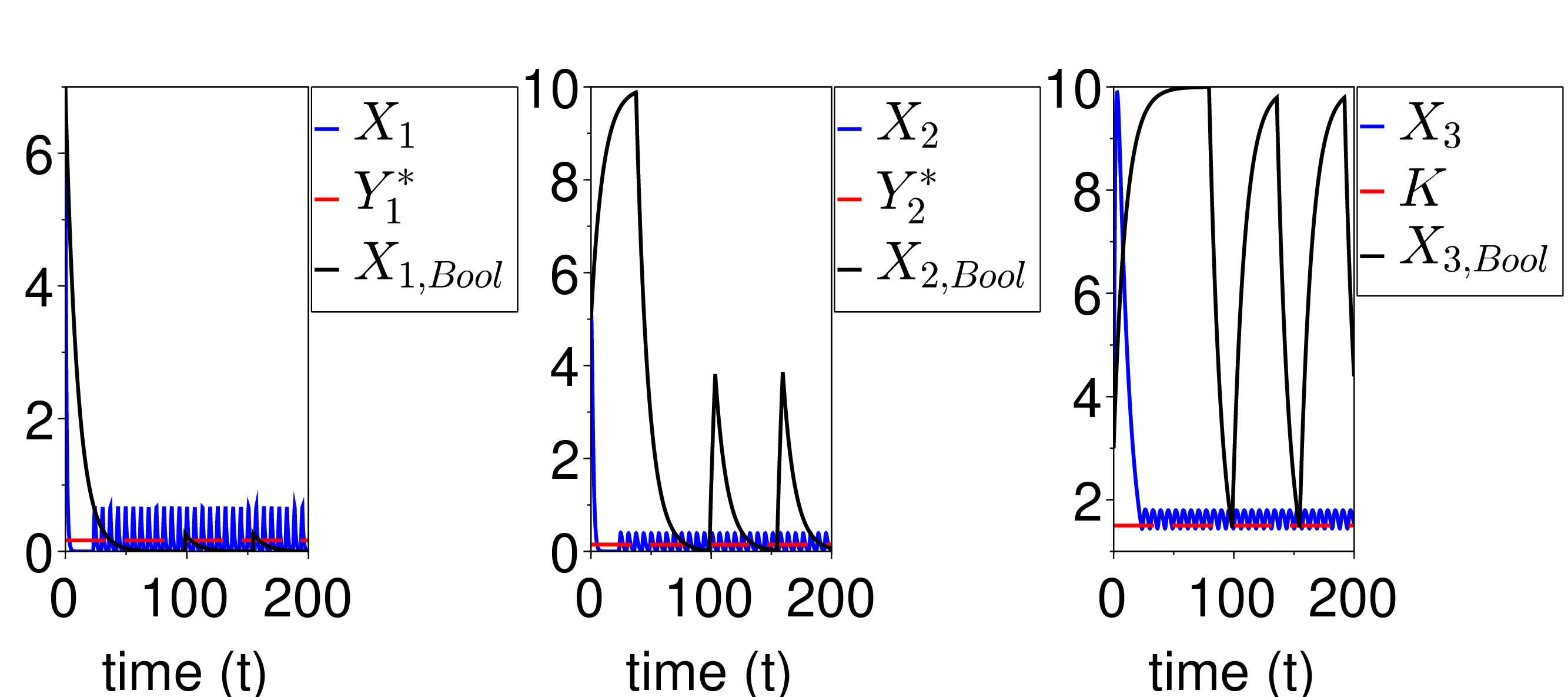}
\caption{\color{black}Numerical solution of \eqref{poignard} and \eqref{lopez}. All parameters are equal to 1 except for $K=1.5$ and $\mu=0.1$. The vector $(Y_1^*, Y_2^*, K)$ corresponds to the sliding mode equilibrium point of \eqref{lopez} (see \Cref{proposition1}). The solutions of \eqref{poignard} and \eqref{lopez} were computed in Scilab using the Runge-Kutta and \emph{ode} functions, respectively. Both models \eqref{poignard} and \eqref{lopez} exhibit oscillations crossing the sliding mode $(Y_1^*, Y_2^*, K)$, but with different magnitude and amplitude of oscillations.}
\end{figure}
}

\section*{Acknowledgments}
Claudia Lopez-Zazueta acknowledges the support of Labex Mathématique Hadamard (LMH).  

\bibliographystyle{siamplain}
\bibliography{mybibfile}
\end{document}